\definecolor{darkblue}{RGB}{0,0,128}
\definecolor{darkgreen}{RGB}{0,150,0}
\newtheorem{theorem}{Theorem}
\newtheorem*{theorem*}{Theorem}
\newtheorem{lemma}[theorem]{Lemma}
\newtheorem{corollary}[theorem]{Corollary}
\newtheorem{definition}[theorem]{Definition}
\begin{document}

\title{Connectivity constrains quantum codes}
\author[1]{Nouédyn Baspin}
\thanks{nouedyn.baspin@usherbrooke.ca}
\affiliation{Universit\'e de Sherbrooke, Sherbrooke, Qu\'ebec, Canada J1K 2R1}
\author[2]{Anirudh Krishna}
\thanks{anirudhk@stanford.edu}
\affiliation{Stanford University, Stanford, CA, USA, 94305}

\date{}

\maketitle

\begin{abstract}
	Quantum error correcting codes are a scheme through which a set of measurements is used to correct for decoherence in a quantum system. 
	Due to experimental limitations, it is natural to require that each of these measurements only involve a constant number of qubits.
    This requirement motivates the class of quantum low-density parity-check (LDPC) codes, which also limits the number of measurement outcomes a qubit can affect.
	Seminal results have shown that quantum LDPC codes implemented through local interactions in $D$-dimensional Euclidean space obey strong restrictions on their code dimension $k$, distance $d$, and their ability to implement fault-tolerant operations. 
	However, we lack an understanding of what limits quantum LDPC codes that do not have an explicit embedding in $\bbR^D$.
	The need for a more general understanding of these limitations is highlighted by recent breakthroughs in the construction of LDPC codes that eschew locality, and yet witness tradeoffs between code parameters.
	In this work we prove bounds applicable to any quantum LDPC code.
	
    Our main results are a) a bound on the distance, b) a bound on the code dimension and c) limitations on certain fault-tolerant gates that can be applied to quantum LDPC codes.
    All three of these bounds are cast as a function of the graph separator of the connectivity graph representation of the quantum code.
    We find that unless the connectivity graph contains an expander, the code is notably limited.
    This implies a necessary, but not sufficient, condition to construct good codes.
    This is the first bound that studies the limitations of quantum LDPC codes that does not rely on geometric locality.
    As an application, we present bounds on quantum LDPC codes associated with local graphs in $D$-dimensional hyperbolic space, and local graph on $g$-genus surfaces.
\end{abstract}

\begin{center}
    \textbf{Dedicated to the memory of David Poulin}
\end{center}

\thispagestyle{empty}
\setcounter{page}{0}
\vfill

\pagebreak

\section{Introduction}
A fault-tolerant quantum circuit will require error correction at regular intervals to avoid the build up of errors \cite{aharonov1997fault, aliferis2005quantum, kitaev1997quantum, knill1998resilient, shor1996fault}.
The error correcting code used is assessed using various figures-of-merit.
Of these, the two most fundamental are the code dimension $k$ and the distance $d$.
The code dimension $k$ is the number of qubits that can be encoded in the code.
The distance $d$ measures the number of single-qubit errors required to irreparably corrupt encoded information.
The choice of code also affects how encoded information can be processed in a quantum circuit.
We want to design a code in a way that protects encoded information from unavoidable interactions with the environment which might corrupt the code; yet at the same time, we want the code to be amenable to interactions that facilitate computation.
Understanding the optimal tradeoff between these three figures-of-merit is a fundamental question in quantum error correction \cite{gottesman1996class,ashikhmin1999upper,sarvepalli2010degenerate,knill1997theory,eldar2020need}.
In this paper, we study these tradeoffs in the context of quantum low-density parity-check (LDPC) codes.

A quantum LDPC code is characterized by how syndrome information is gathered.
Unlike the classical setting, we cannot directly read quantum codewords.
The state of a register of $n$ qubits could be in some delicate superposition which measurements can upset.
The only information we can use for diagnosis is the syndrome, itself just a binary string.
Each bit of the syndrome is obtained by measuring a set of qubits in a manner prescribed by the error correcting code.
These specific measurements are designed to preserve the encoded information and are called \emph{stabilizer measurements}.
Each bit of the syndrome is obtained by measuring the corresponding \emph{stabilizer generator}.
Together, the stabilizer generators generate a stabilizer \emph{group}, a set of measurements that does not destroy encoded quantum information.
In a quantum LDPC code, we need only measure a constant number of qubits for each bit of the syndrome.
Furthermore, each qubit only affects the value of at most a constant number of syndrome bits.
This property is expected to simplify the process of obtaining the syndrome which, in addition to the code dimension and distance, is also a criterion for picking a quantum error correcting code.
Indeed, quantum LDPC codes may have benefits for constructing scalable fault-tolerant quantum circuits \cite{kovalev2013fault, gottesman2014fault, fawzi2018constant}.
In sharp contrast to the classical setting, it is unknown whether good quantum LDPC codes exist i.e.\ whether quantum LDPC code families exist where $k$ and $d$ scale linearly with $n$.

For ease of implementation, we may wish to construct quantum LDPC codes that are spatially local in $2$ dimensions.
A \emph{local} quantum code refers to a code family embedded in $\bbR^D$ in which the qubits involved in a particular syndrome bit are contained in a ball of diameter $w$, where $w$ is some constant, independent of the size of the code.
Unfortunately, locality is a fundamental problem in the design of quantum error correcting codes.
Bravyi and Terhal \cite{bravyi2009no} proved that any local code in $\bbR^D$ obeys $d = O(n^{1-1/D})$.
This bounds the distance of a $D$-dimensional local code away from $n$.
Subsequently, Bravyi, Poulin and Terhal \cite{bravyi2010tradeoffs} proved that any local code in $\bbR^D$ obeys $kd^{2/(D-1)} = O(n)$.
In particular, $2$-dimensional codes are very restricted: their distance $d$ can scale at best as $\Theta(\sqrt{n})$, implying that the code dimension is constant.
The famous surface code (and the closely related color code and variants) saturates this bound up to constant factors \cite{kitaev2003fault, bravyi1998quantum, bombin2006topological, bombin2010topological}.

Constructive approaches that eschew locality to build quantum LDPC codes still face difficulties.
There exist codes that achieve a code dimension scaling linearly in the block size but with limited distance \cite{tillich2014quantum, bravyi2014homological, kovalev2012improved, zeng2019higher,freedman2002z2,guth2014quantum,londe2017golden,leverrier2020quantum}.
It proved to be very challenging to achieve a distance scaling better much better than $\Theta(\sqrt{n})$ \cite{freedman2002z2}.
In the latter half of 2020, a series of works heralded one breakthrough after another \cite{kaufman2014ramanujan, evra2020decodable,kaufman2020quantum,hastings2020fiber,breuckmann2020balanced}.
The current record is held by a construction due to Panteleev and Kalachev, who demonstrated the existence of codes with code dimension $k = \Theta(\log(n))$ and distance scaling as $\Theta(n/\log(n))$ \cite{panteleev2020quantum}.

In contrast to these constructive approaches, we present Bravyi-Poulin-Terhal-like bounds applicable to general LDPC codes that are not constrained to be local.
Such a top-down approach to bound the properties of quantum LDPC codes might serve to answer why finding constructive approaches has been difficult.

In addition to these concerns, we need ways to process encoded information fault tolerantly.
This means that if a subroutine within a circuit fails, it only corrupts the limited set of qubits it acts on.
We do not want errors in one location to spread to errors in another, thereby overwhelming the error correcting code.
Transversal gates are one way to implement a fault-tolerant gate \cite{nielsen2002quantum}.
In its simplest form, transversal gates refer to gates acting independently on each physical qubit in the code.
Note that this is trivial in the classical setting: to implement the logical $\sansserif{NOT}$ on a $3$-bit repetition code, we need just flip each bit of the code.
However, this is considerably more difficult in the quantum setting as the set of transformations even on just a single qubit corresponds to $SU(2)$, a dense group.
Bravyi and Koenig \cite{bravyi2013classification} proved that transversal gates on $D$-dimensional local quantum error correcting codes are limited.
Specifically, transversal gates on $2$-dimensional local codes can at best implement transformations of a finite group of transformations referred to as the Clifford group.
This finite group of transformations is insufficient to implement all gates required to run interesting algorithms and can even be simulated efficiently on a classical computer \cite{aaronson2004improved}.
Subsequently, Pastawski and Yoshida showed that there is a relation between the distance of $D$-dimensional local codes and the gates they support \cite{pastawski2015fault}.
To state their result in a non-technical way, they proved that implementing transformations outside the finite group would come at the cost of the distance of the code.
Recent work by Burton and Browne \cite{burton2020limitations} extends this result and has shown that a specific class of finite rate quantum LDPC codes (that are not constrained by locality) has a structure where transversal gates can still only implement Clifford transformations on encoded information.
This suggests that locality itself might not be the constraint that limits transversal gates.

\subsection{Summary of results}
As previously mentioned it is known that locality is a strong limitation to the protection of quantum information. 
The purpose of the present work is to show that \emph{connectivity}, in a well defined sense, imposes similar restrictions on quantum codes.
To illustrate the difference, consider a code on $n$ qubits that is embedded in a a tree lattice. We require that we are limited to local interactions: if two qubits are involved in the same measurement, then they necessarily are neighbors in the lattice.

We can then make two seemingly conflicting observations. First, a tree lattice is highly non-local as it cannot be embedded in any $D$-dimensional space, hence it is not restricted by the Bravyi, Terhal and Poulin bounds. On the other hand, a tree lattice is poorly connected: if we remove the root vertex, then large chunks of the code can no longer communicate. Intuitively this should restrict the properties of the code, and we show that this is indeed the case.

This example is meant to underline the difference between locality and connectivity. Any notion of locality has to refers to a particular space -- Euclidean, hyperbolic, etc. -- while connectivity is meant to only refer to the inner structure of the code. The main metric of connectivity we will use here is the \emph{separator}

The separator $\sep(G)$ of a graph $G$ is a subset of vertices of minimal size which, if removed, would split $G$ into two small subgraphs that are disconnected from each other \cite{teng1991phd}.
Colloquially, if every subgraph $H$ of $G$ has a small separator, then $G$ is poorly connected: in some sense the graph is shallow.
This notion has recently received some interest in the field of geometric group theory under the name of \emph{separation profile} \cite{benjamini2012separation,hume2017continuum,coz2020separation,hume2020poincare}, which we write $s_G$.
The function $s_G$ bounds the size of the separator not only for the graph $G$, but also for all of its subgraphs.

For our results it will be crucial we require that the connectivity be low \emph{everywhere}, which is more appropriately captured by $s_G$ than by the size of a single separator. 
This is especially important when $G$ is, for example, made of several disconnected but dense graphs.
In this case, we would have $\sep(G) = \emptyset$, but this hardly captures the geometry of the entire graph. 

\textbf{Example:} Consider the $\sqrt{n} \times \sqrt{n}$ grid graph.
This graph has a separator of size $\sqrt{n}$: we just have to remove a single column of vertices from the middle to cleave the graph in two.
In fact, any planar graph is poorly connected; the famous Lipton-Tarjan Theorem states that any planar graph with $n$ vertices has a separator of size $O(\sqrt{n})$ \cite{lipton1979separator}. 

Finally, we note that the separator allows us to bound the dimension, and the performance of transversal gates in a code.
We detail these results in sections \ref{subsec:distancebnd}, \ref{subsec:dimensionbnd} and \ref{subsec:transversal}.
These results build on the idea of partitioning the qubits into subsets that do not contain a logical operator.
As shown in \cite{bravyi2009no,bravyi2010tradeoffs,bravyi2013classification}, the size and number of such subsets can provide a lot of information about the tradeoff between $k$ and $d$, as well as the ability of transversal gates to induce logical transformations.
We use the separation profile to construct such partitions by recursively separating the code into smaller parts.

We have not yet discussed to what degree these bounds are practical. 
Can they easily be applied to a given class of codes?\footnote{The separator itself is hard to compute. Surprisingly, there exist polynomial-time algorithms to \emph{approximate} the separator of an arbitrary graph up to constant factors \cite{orecchia2012lorenzo}.}
Fortunately, numerous separator theorems are known \cite{kawarabayashi2010separator,kisfaludi2020hyperbolic, dujmovic2015genus, gladkova2020separation, coz2020separation, lipton1979separator}.
These theorems, for a given class of graphs, guarantee upper bounds on their separators.
With these tools, we can address an open question of \cite{breuckmann2021ldpc}, where it is asked whether bounds on the parameters of local codes in non-Euclidean lattices can be obtained. 
In section \ref{subsec:hyperbolic}, we answer this question in the affirmative by proving bounds on local codes in $D$-dimensional hyperbolic space $\bbH^D$. 
These bounds follow naturally from a recent result by Kisfaludi-Bak \cite{kisfaludi2020hyperbolic} who showed that graphs locally embedded in $D$-dimensional hyperbolic space $\bbH^{D}$ have bounded separators.

In comparison with the known bounds for local codes in $D$-dimensional Euclidean space, the bounds we obtain for $\bbH^D$ are more restrictive on the distance but offer the same tradeoff between $k$ and $d$.
Indeed we find $kd^{2/(D-1)} = O(n)$ for a local code in $\bbH^D$, the same as for $\bbR^D$ as shown by Bravyi, Poulin \& Terhal.
If we find codes to saturate the bounds in $D$-dimensional Euclidean space, we would conclude that $\bbH^D$ does not seem advantageous. 
We note, however, that these bounds do not apply to hyperbolic manifolds, which have been used to prove the existence of constant rate codes with polynomial distance \cite{breuckmann2016constructions,londe2017golden,guth2014quantum,hastings2013decoding}.

Similarly, we use a result of Dujmovi\'c, Eppstein and Wood \cite{dujmovic2015genus} to prove bounds on codes locally embedded on a surface of genus $g$, thus extending a result of Delfosse \cite{delfosse2013tradeoffs} to arbitrary LDPC codes.

Our main results are presented in Section \ref{sec:main}.
The first of these results, Theorem \ref{thm:distancebnd}, states that the distance is bounded by $s_G$.
Secondly, Theorem \ref{thm:dimbndcmax} shows that a small separation profile implies a stark tradeoff between $k$ and $d$.
Similarly, a small separation profile implies a limited ability to perform transversal gates as shown in Theorem \ref{thm:transpoly}.
We note that Theorems \ref{thm:dimbndcmax} and \ref{thm:transpoly} are not limited to LDPC codes.

The rest of the paper proves these results and explores their consequences.
In Section \ref{sec:background}, we establish background required to state our result.
We define quantum stabilizer codes in Section \ref{subsec:stab}, some important properties and their representations in terms of graphs.
In Section \ref{subsec:sep-treewidth}, we define the notions of separability and the closely related notion of treewidth.
These are metrics of connectivity in terms of which our main theorems are stated.
In Section \ref{sec:main} we formally state and prove our main results.
First Section \ref{subsec:distancebnd} focuses on Theorem \ref{thm:distancebnd} on the distance.
Then Section \ref{subsec:dimensionbnd} focuses on Theorem \ref{thm:dimbndcmax} on the code dimension.
Lastly, Section \ref{subsec:transversal} focuses on Theorem \ref{thm:transpoly} on transversal gates.

\section{Background and Notation}
\label{sec:background}
\subsection{Stabilizer codes}
\label{subsec:stab}
In our paper, we focus on stabilizer codes on $n$ qubits.
A qubit is associated with the complex Euclidean space $\bbC^{2}$ and $n$ qubits with $(\bbC^{2})^{\otimes n}$.
Let $\cP$ denote the $n$-qubit Pauli group and for any two operators $\ssP, \ssQ \in \cP$, let $[\ssP, \ssQ] = \ssP\ssQ - \ssQ\ssP$ denote their commutator.
An $\dsl n,k \dsr$ quantum code $\cC$ is a $2^k$-dimensional subspace of the $n$-qubit space $(\bbC^{2})^{\otimes n}$.
It is specified by the stabilizer group $\cS$, an Abelian subgroup of the $n$-qubit Pauli group that does not contain $-\ssI$.
The code space $\cC$ is the set of states left invariant under the action of the stabilizer group, i.e.\ $ \cC= \{ \ket{\psi} : \; \ssS \ket{\psi} = \ket{\psi} \forall \ssS \in \cS\}$.
$\cS$ is generated by $(n-k)$ independent generators.
We may consider codes with an over-complete set of stabilizers generators of size $m \geq n-k$.
Technically not all of these elements are independent but we shall call them generators for convenience when no confusion may arise.

For any code $\cC$ discussed here, we implicitly assume assume that its stabilizer group $\cS$ is generated by a set $\Omega$ of commuting Pauli operators. More formally, we have $\cS = \langle \Omega \rangle$. When we want to underline this dependence on $\Omega$, we refer to $\cC$ as $C(\Omega)$.

Let $\cL = \{\ssL : [\ssL, \ssS] = 0 , \forall \ssS \in \cS \}$ denote the logical operators: the group of Pauli operators that commute with $\cS$ and preserve $\cC$.
The action of the logical operators on the code space $\cC$ can be classified by the quotient group $\cL/\cS = \{[\ssL]: \ssL \in \cL \}$ where we let $[\ssL]$ denote the coset $\ssL \cS = \{ \ssL \ssS : \ssS \in \cS\}$.
If $\ssL,\ssL'$ belong to the same coset then their action on $\cC$ is equivalent : $\ssL \ssL'  \in \cS$.
We write $\ssL \sim \ssL'$ when they belong to the same class. 

The code dimension $k$ corresponds to the number of nontrivial, independent elements of $\cL/\cS$.
The distance $d = \min_{\ssL \in \cL \setminus \cS} |\supp(\ssL)|$ is a metric to estimate the closeness of codewords.

To implement a universal set of gates, we require a finite set of gates that can approximate any unitary on $n$ qubits to the desired level of precision.
Typically, this set is chosen to be the set of gates in the Clifford group together with one additional gate that is not in the Clifford group.
The Clifford group is a finite group and corresponds to the automorphism group of the Pauli group.

The Clifford hierarchy is a generalization of the Clifford group and plays an important role in the theory of quantum error correction \cite{gottesman1999demonstrating}.
The $\ell$-th level of the Clifford hierarchy, denoted $\cK^{(\ell)}$, is defined recursively.
The $1$st level of the Clifford hierarchy, denoted $\cK^{(1)}$, is the Pauli group.
For $\ell \geq 2$, the hierarchy is defined as
\begin{align}
	\cK^{(\ell)} = \{ W : W \ssP W\conj \in \cK^{(\ell-1)} \quad \forall \ssP \in \cP_n \}~.
\end{align}
It can be seen from this definition that $\cK^{(2)}$ is the Clifford group.

Let $W$ be a unitary gate and let $\overline{W}$ denote the encoded version of $W$.
In other words, if $\cE_{\cS}: \bbC^{\otimes k} \embeds \bbC^{\otimes n}$ is the encoding operation for the quantum error correcting code defined by the stabilizer group $\cS$, then $\overline{W}\cE_{\cS}(\ket{\phi}) = \cE_{\cS}(W\ket{\phi})$.
We say that $W$ can be implemented in a transversal manner if $\overline{W} = W_1 \otimes ... \otimes W_n$ for some single-qubit gates $\{W_{i}\}_i$.

We recall some definitions from \cite{bravyi2009no}.
Let $V = [n]$ index the qubits and $Q(V)$, the $n$-qubit space associated with $\left(\bbC^{2}\right)^{\otimes n}$.
For any subset $U \subseteq V$, let $Q(U) \subseteq Q(V)$ denote the $|U|$-qubit space with the corresponding indices.
For ease of notation, we shall use $U$ to also refer to $Q(U)$ where no confusion may arise.
Let $\comp{U} = [n] \setminus U$ be the complement of $U$. For any Pauli operator $\ssL$ we write its support $\supp(\ssL) \subset V$ the set of qubits on which $\ssL$ acts nontrivially.
Central to everything that follows is the notion of \emph{correctability} of sets of qubits. 

\begin{definition}[Correctable set]
	For $U \subset V$, $U$ is \emph{correctable} if there exists a recovery map $\cR: \comp{U} \to V$ such that for any code state $\rho_{\cC} \in \cC$, $\cR(\Tr_{U}(\rho_{\cC})) = \rho_{\cC}$.
\end{definition}

Acting on the region $U$ alone cannot alter the information contained in the entire code in a meaningful way.
This idea is formalized by Lemma \ref{lem:cleaning} known as the Cleaning lemma \cite{bravyi2009no}:

\begin{lemma}[Cleaning Lemma]
	\label{lem:cleaning}
	Suppose the code $\cC(\Omega)$ has at least one nontrivial logical operator.
	For any subset $U \subset V$,
	\begin{enumerate}
		\item there is a non-trivial $\ssL \in \cL$ that is supported entirely in $U$, or
		\item for all $[\ssL] \in \cL/\cS$, there is a representative in $\ssL' \in [\ssL]$ such that $\ssL'$ acts trivially on $U$.
		One has $\ssL' \ssL = \prod_i \ssS_i$, with $\{\ssS_i\} \subseteq \Omega$ a set of generators  and the support of each $\ssS_i$ overlaps with $U$.
	\end{enumerate}
\end{lemma}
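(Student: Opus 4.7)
The plan is to translate the statement into the symplectic vector space picture of the Pauli group modulo phases, where $\bbF_2^{2n}$ carries a symplectic form $\omega$, the stabilizer group $\cS$ becomes an isotropic subspace $S$ of dimension $n-k$, and the commutant $\cL$ corresponds to $S^\perp$. The Pauli operators supported on $U$ form a subspace $P_U \subseteq \bbF_2^{2n}$ of dimension $2|U|$, and crucially $P_U^\perp = P_{\comp{U}}$. The nontrivial logical operators supported on $U$ correspond to the quotient $(S^\perp \cap P_U)/(S \cap P_U)$, so Case 1 of the lemma is exactly the statement that this quotient is nonzero. Assuming Case 1 fails, the goal is to show that every class in $S^\perp/S$ admits a representative in $P_{\comp{U}}$.

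The key step is a short dimension count via the symplectic identity $(A \cap B)^\perp = A^\perp + B^\perp$. Applied with $A = S^\perp$ and $B = P_U$, this gives $\dim(S^\perp \cap P_U) = 2n - \dim(S + P_{\comp{U}})$, from which inclusion--exclusion together with $\dim S = n-k$ and $\dim P_{\comp{U}} = 2(n-|U|)$ yields
\begin{equation*}
\dim(S^\perp \cap P_U) - \dim(S \cap P_U) = 2|U| + k - n + \dim(S \cap P_{\comp{U}}) - \dim(S \cap P_U).
\end{equation*}
Adding the symmetric identity obtained by swapping $U$ and $\comp{U}$ cancels the asymmetric terms and leaves the clean balance
\begin{equation*}
\dim \frac{S^\perp \cap P_U}{S \cap P_U} + \dim \frac{S^\perp \cap P_{\comp{U}}}{S \cap P_{\comp{U}}} = 2k.
\end{equation*}
Thus, if no nontrivial logical lies in $U$, then all $2k$ logical degrees of freedom have representatives supported on $\comp{U}$, which is precisely Case 2.

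For the refinement that $\ssL'\ssL$ can be written as $\prod_i \ssS_i$ with each $\ssS_i$ overlapping $U$, I would start from any expression of the stabilizer $\ssL'\ssL$ in terms of the fixed generating set and iteratively prune any $\ssS_j$ whose support lies entirely in $\comp{U}$. Since the cleaned representative $\ssL'$ is itself supported on $\comp{U}$, absorbing such an $\ssS_j$ by the replacement $\ssL' \mapsto \ssL' \ssS_j$ keeps $\ssL'$ supported on $\comp{U}$ and in the same logical class, while strictly shortening the product. The process terminates when every surviving $\ssS_i$ meets $U$.

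The main obstacle is essentially bookkeeping: one must verify cleanly in the $\bbF_2$ symplectic setting that $P_U$ and $P_{\comp{U}}$ are mutual symplectic complements and that the identity $(A \cap B)^\perp = A^\perp + B^\perp$ holds, and one must be careful that possibly overcomplete generators do not interfere with the final absorption step. Since the absorption only modifies the representative of a class while preserving its logical action, overcompleteness poses no real obstruction.
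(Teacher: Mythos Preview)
The paper does not actually prove the Cleaning Lemma; it is recalled as a known background result (originating with Bravyi and Terhal). Your symplectic dimension-counting argument is correct and is essentially the standard proof: the identity
\[
\dim\frac{S^\perp\cap P_U}{S\cap P_U}+\dim\frac{S^\perp\cap P_{\comp U}}{S\cap P_{\comp U}}=2k
\]
is exactly the ``balance'' relation that underlies the lemma, and your derivation via $(A\cap B)^\perp=A^\perp+B^\perp$ together with $P_U^\perp=P_{\comp U}$ is clean. The only missing sentence is the observation that the natural map $(S^\perp\cap P_{\comp U})/(S\cap P_{\comp U})\to S^\perp/S$ is injective (its kernel is $S\cap P_{\comp U}$ by construction), so that when the $\comp U$-side has dimension $2k$ it is genuinely surjective onto the logical classes; you implicitly use this but do not state it. The pruning argument for the refinement about generators overlapping $U$ is also fine: absorbing a generator supported entirely in $\comp U$ into $\ssL'$ preserves both the logical class and the support condition, and strictly shortens the given product, so termination is immediate regardless of overcompleteness.
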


As a consequence, correctable regions turn out to have a rather interesting property.
If the support of a logical operator $\ssL$ intersects a correctable region $U$, then it can be \emph{cleaned} out of $U$. 
This is illustrated in fig. \ref{fig:clean}.

\begin{figure}[h]
	\centering
	\includegraphics[width=\columnwidth]{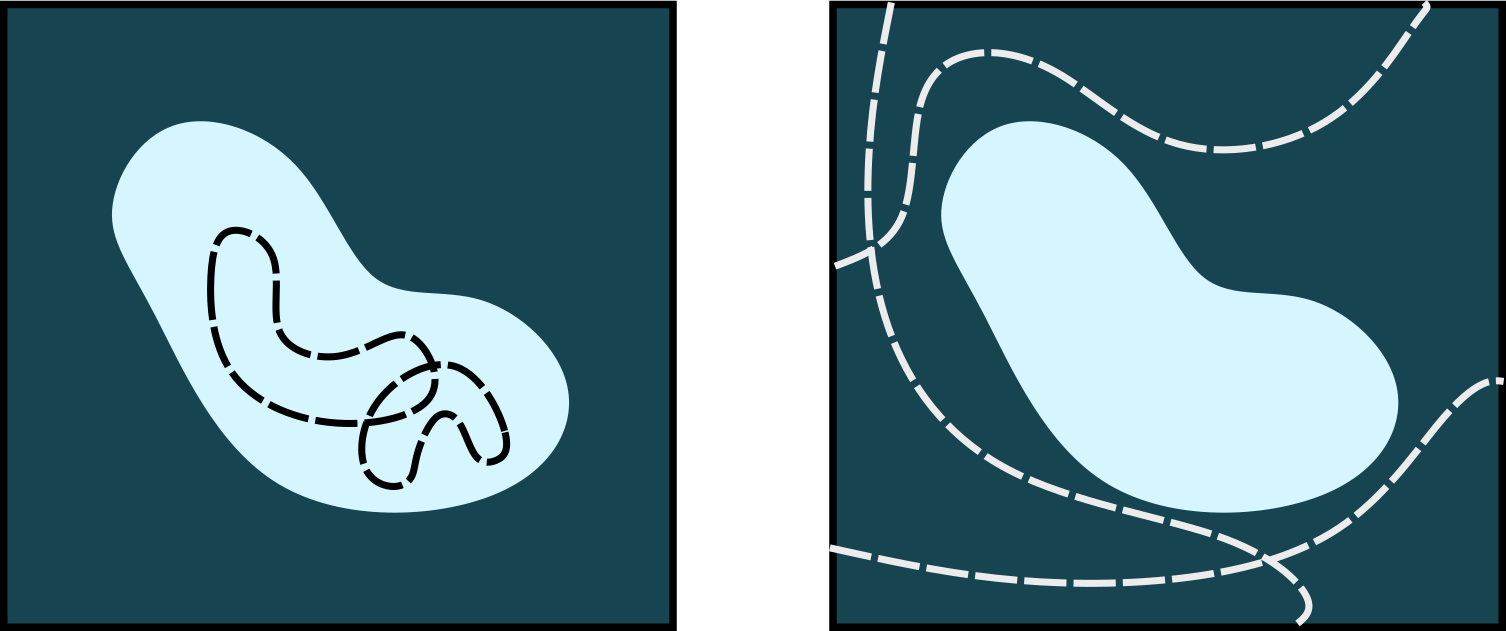}
	\caption{\textbf{Visualizing the Cleaning Lemma:} In this schematic, $V$ is the dark region of all qubits and $U$ is the light region within.
		The support of logical operators is depicted using a dashed line.
		Either the region $U$ contains logical operators as in (a) or \emph{all} logical operators can be made to run outside it as in (b).}
	\label{fig:clean}
\end{figure}

Intuitively we expect that subsets of qubits that are sufficiently far away from one another can be corrected independently.

\begin{figure}[h]
	\centering
	\begin{tikzpicture}
		\node at (0,0) {\includegraphics[scale=0.5]{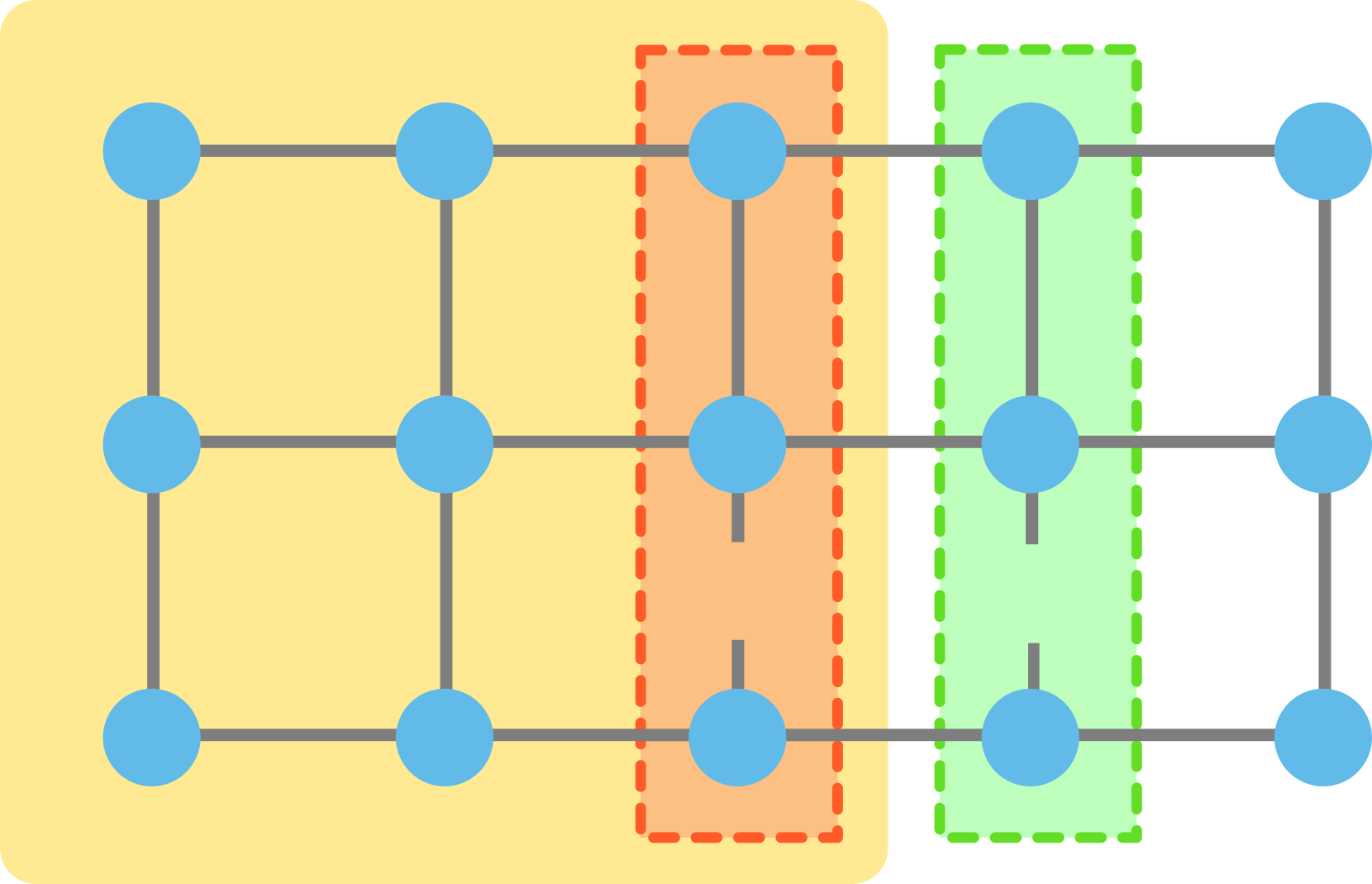}};
		\node at (-2,-0.75) {$U$};
		\node at (0.3,-0.75) {$\bdry_{-} U$};
		\node at (1.8,-0.75) {$\bdry_{+} U$};
		
	\end{tikzpicture}
	
	\caption{An illustration of inner and outer boundary of a subset of vertices. The subset $U$ is highlighted in yellow. The sets $\bdry_{-}U$ and $\bdry_{+} U$ are respectively highlighted in red and green. }
	\label{fig:inandoutbdry}
\end{figure}

\begin{definition}[Boundary]
	Let $\cC(\Omega)$ be a stabilizer code and $U \subset V$.
	We define the outer and inner boundaries respectively as follows:
	\begin{enumerate}
		\item \textbf{Outer boundary:} $\bdry_{+} U$ is the set of all qubits corresponding to $v \in \comp{U}$ such that there exists at least one stabilizer generator $\ssS$ and $u \in U$ satisfying $v,u \in \supp(\ssS)$.
		\item \textbf{Inner boundary:} $\bdry_{-} U$ is $\bdry_{+} \comp{U}$, the outer boundary of the complement of $U$.
	\end{enumerate}
\end{definition}

It follows from the Cleaning Lemma that if $U$ is correctable, then any logical $\ssL \in \cL$ can be cleaned to $\comp{U}$.
The converse holds true.

\begin{definition}[Decoupled subsets]
	\label{def:decoupled}
	Disjoint subsets $\{U_i\}, U_i \subset V$ are said to be \emph{decoupled} if no generator overlaps with more than one $U_i$, i.e.\ $\forall i, \ \forall j \neq i$,
	\[
	\bdry_+ U_i \cap U_j = \emptyset ~.
	\]
\end{definition}

\begin{lemma}[Union Lemma]
	\label{lem:bptunion}
	Let  $\{U_i\}_{i}$ be decoupled sets of qubits and write $T = \cup_i U_i$.
	If $\ssL \in \cL$ such that $\supp(\ssL) \subseteq T$, then $\ssL$ can be decomposed as a product of logicals, each supported entirely on one subset $U_i$: $\ssL = \otimes_i \ssL_{U_i} $, $\supp(\ssL_{U_i}) \subset U_i$, $\ssL_{U_i} \in \cL$.
	If each $U_i$ is correctable, it follows that $T$ is correctable.
\end{lemma}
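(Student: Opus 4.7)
The plan is to exploit the decoupling hypothesis to show that, qubit-by-qubit, the restriction of $\ssL$ to each $U_i$ is itself a logical operator, and then to invoke correctability of the $U_i$ to conclude that each such restriction is in fact a stabilizer. The two conclusions of the lemma then follow.

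First I would set up notation. Since $\supp(\ssL) \subseteq T = \bigsqcup_i U_i$ (the $U_i$ being disjoint by assumption), the Pauli $\ssL$ factors uniquely as $\ssL = \bigotimes_i \ssL_{U_i}$, where $\ssL_{U_i}$ is the restriction of $\ssL$ to the qubits in $U_i$, and the tensor is padded by identity on $\comp{T}$. This factorization is purely tautological; the content of the lemma is that each $\ssL_{U_i}$ lies in $\cL$.

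Next I would argue that every $\ssL_{U_i}$ commutes with the whole stabilizer group. Pick a generator $\ssS$. By the decoupling hypothesis (Definition \ref{def:decoupled}), $\supp(\ssS)$ is contained in $U_j \cup \bdry_+ U_j \subseteq U_j \cup \comp{T}$ for some single index $j$; in particular $\supp(\ssS)$ is disjoint from every $U_i$ with $i \neq j$. Hence for $i \neq j$ the operators $\ssL_{U_i}$ and $\ssS$ act on disjoint qubits and automatically commute. For $i = j$, note that $[\ssL,\ssS] = 0$ and that the only factor of $\ssL$ that can possibly fail to commute qubit-wise with $\ssS$ is $\ssL_{U_j}$; since two Paulis either commute or anticommute globally, the global commutation $[\ssL,\ssS]=0$ forces $[\ssL_{U_j},\ssS]=0$. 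This is the one place where the decoupling is used, and it is really the only moving part of the argument.

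Finally I would use correctability to upgrade "logical" to "stabilizer". If $U_i$ is correctable then the Cleaning Lemma rules out the first alternative of Lemma \ref{lem:cleaning}, so no non-trivial logical has support contained in $U_i$; equivalently, every $\ssL' \in \cL$ with $\supp(\ssL') \subseteq U_i$ lies in $\cS$. Applying this to each $\ssL_{U_i}$ from the previous step shows $\ssL_{U_i} \in \cS$, hence $\ssL = \prod_i \ssL_{U_i} \in \cS$. Thus no non-trivial logical is supported in $T$, which by the Cleaning Lemma (second alternative, applied to $T$) means $T$ is correctable. I do not anticipate any real obstacle here; the only point requiring any care is the commutation bookkeeping in step two, where one must check that the global anticommutation between $\ssL$ and $\ssS$ really does localize to a single $U_j$.
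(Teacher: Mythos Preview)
Your proposal is correct and follows essentially the same approach as the paper: restrict $\ssL$ to each $U_i$, use the decoupling hypothesis to show each restriction commutes with all stabilizer generators (hence lies in $\cL$), and then use correctability of each $U_i$ to conclude each restriction is trivial. If anything, your version is slightly more careful than the paper's, which concludes $\ssL_{U_i} = \ssI_{U_i}$ rather than the more accurate $\ssL_{U_i} \in \cS$.
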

\begin{proof}
	Let $U_i$ be one cluster of the decoupled set and $\ssL_{U_i}$ the restriction of $\ssL$ on $U_i$.
	$\ssL_{U_i}$ has to commute with any generators whose overlap with $\ssL$ is only contained within $U_i$.
	Due to the decoupling condition, this is the case for all generators having support on $U_i$.
	Since $\ssL_{U_i}$ commutes with all generators having support on $U_i$, we conclude $\ssL_{U_i} \in \cL$.
	However, every $U_i$ is correctable, therefore $\ssL_{U_i} = \ssI_{U_i}$, i.e.\ they act as identity on the qubits $U_i$.
\end{proof}

\begin{lemma}[Expansion Lemma]
	\label{lem:bptexpansion}
	Given correctable regions $U$, $T$ such that $ U \supset \bdry_+ T$, then $T \cup U$ is correctable.
\end{lemma}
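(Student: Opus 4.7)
The plan is to argue that $T \cup U$ contains no nontrivial logical operator, which (by the converse of the Cleaning Lemma that was implicitly used in the Union Lemma) gives correctability. I will use both correctability hypotheses in sequence: first clean a hypothetical logical out of $T$, then show that whatever remains must actually live in $U$, and invoke correctability of $U$ to kill it.

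More precisely, suppose $\ssL \in \cL$ with $\supp(\ssL) \subseteq T \cup U$. Since $T$ is correctable, part (2) of the Cleaning Lemma gives a representative $\ssL' \sim \ssL$ with $\supp(\ssL') \cap T = \emptyset$, and moreover $\ssL' = \ssL \prod_i \ssS_i$ where each stabilizer generator $\ssS_i$ has support overlapping $T$. The key observation is that any generator $\ssS_i$ whose support meets $T$ must be supported inside $T \cup \bdry_+ T$, and by hypothesis $\bdry_+ T \subseteq U$. Hence $\supp(\ssS_i) \subseteq T \cup U$ for each $i$, and therefore
\[
    \supp(\ssL') \;\subseteq\; \supp(\ssL) \cup \bigcup_i \supp(\ssS_i) \;\subseteq\; T \cup U.
\]
Combining this with $\supp(\ssL') \cap T = \emptyset$ forces $\supp(\ssL') \subseteq U$. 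Now correctability of $U$ means $U$ supports no nontrivial logical, so $\ssL' \in \cS$, and therefore $\ssL \in \cS$. Thus every logical operator supported in $T \cup U$ is trivial, i.e.\ $T \cup U$ is correctable.

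The proof is essentially a two-step cleaning argument and the only subtle point—where the hypothesis $\bdry_+ T \subseteq U$ gets used—is in controlling where the stabilizers used for cleaning can live. The main thing to get right is the direction of cleaning: attempting to clean $\ssL$ out of $U$ first would propagate its support into $\bdry_+ U$, which need not be contained in $T$, and the argument stalls. Cleaning out of $T$ first works precisely because $\bdry_+ T$ is, by assumption, absorbed into $U$.
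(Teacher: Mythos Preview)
Your proof is correct, but it takes a different route from the paper's. The paper cleans in the \emph{opposite} order: starting from an arbitrary logical $\ssL$ (not assumed to lie in $T\cup U$), it first cleans $\ssL$ out of $U$ to obtain $\ssL'$ supported in $T\cup W$ with $W=\comp{T\cup U}$; then, since $\bdry_+T\subseteq U$ forces $T$ and $W$ to be decoupled, it factors $\ssL'=\ssL_T\otimes\ssL_W$ and kills $\ssL_T$ using correctability of $T$, leaving a representative in $W$. Your argument instead restricts to logicals already supported in $T\cup U$, cleans out of $T$ first, and exploits the explicit clause in the Cleaning Lemma that the stabilizers used for cleaning overlap $T$ (hence lie in $T\cup\bdry_+T\subseteq T\cup U$) to trap the result inside $U$. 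Both are short two-step cleanings; yours leans on the fine print of the Cleaning Lemma about which generators are used, while the paper's leans on the Union Lemma style factorization across decoupled regions. One small correction to your closing commentary: cleaning out of $U$ first does not actually stall---it is precisely what the paper does---but it requires the global viewpoint of pushing an arbitrary logical into $W$ rather than your local viewpoint of trapping a logical inside $U$.
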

\begin{proof}
	Set $W \equiv \comp{U \cup T}$.
	Since $U$ is correctable, any logical operator $\ssL$ whose support intersects with $U$ can be cleaned to $T \cup W$, and we let $\ssL'$ denote the cleaned operator.
	This implies that $\ssL' = \ssL_T \otimes \ssL_W$.
	Note that no check acts on both $T$ and $W$ as $\bdry_+ T \subseteq U$ which means that $\ssL_T$, by itself, is a bonafide logical operator.
	Since $T$ is correctable, we have $\ssL_T = \ssI$.
	Therefore $T \cup U$ is correctable.
\end{proof}

In addition to the algebraic view presented above, quantum codes can also be represented graphically.
We shall use the following object, called a connectivity graph representation. This representation depends on the \emph{generators} of a code, not on the code itself.

\begin{definition}[Connectivity graph]
	\label{def:connectivity}
	Let $\cC(\Omega)$ be a stabilizer code.
	Then the connectivity graph $G(\Omega) = (V, E)$ associated with $\Omega$ is defined so that:
	\begin{enumerate}
		\item $V = [n]$, i.e. each vertex is associated with a qubit, and
		\item $(u,v) \in E$ if and only if there exists a generator $\ssS \in \Omega$ such that $u,v \in \supp(\ssS)$.
	\end{enumerate}
	Further, $\bdry_{+}$ and $\bdry_{-}$ extend naturally to $G(\Omega)$
\end{definition}

For the sake of readability, we will refrain from referring to $\Omega$ explicitly. When we refer to a connectivity graph $G$ of a code $\cC$, it is to be understood that there exist a generating set $\Omega$ such that $\cC = \cC(\Omega)$ and $G = G(\Omega)$. 

\textbf{Remarks:}
We first remark that this representation is not a function of the code, as different generating sets can yield the same code but different graphs. 
Further the mapping from $\Omega$ to $G$ is not injective: different codes might yield the same connectivity graph, if the right generating sets are chosen.
This stands in contrast with the Tanner graph, another common graphical representation of LDPC codes.
However, despite this lack of uniqueness, the connectivity graph suffices to obtain the desired bounds on the distance and code dimension.
Note that this representation dispenses with Pauli labels between the stabilizers and qubits and no longer carries information concerning the commutation relations.
Taking this information into account could be important in restricting the types of graphs that emerge; we do not do so here.
This representation was also used for different purposes, see for example \cite{kovalev2013fault} or \cite{gottesman2014fault}.

The following observations will be useful.
Consider two disjoint subsets $U_1, U_2 \subset V$.
If there is no edge between $U_1$ and $U_2$ then they are decoupled.
Equivalently, the distance on the connectivity graph between these two sets is at least two.
In other words, $\bdry_{+} U_1$ is the neighborhood of $U_1$ in the connectivity graph.

If the quantum code family is LDPC, then the connectivity graph has bounded degree.
Suppose $\scrC = \{\cC_n\}$ is a code family with qubit degree upper bounded by $\delta_V$ and stabilizer degree bounded by $\delta_C$.
Then each vertex in the connectivity graph is connected to at most $\delta_V (\delta_C - 1)$ other qubits.
We expect the degree to be less than this because the stabilizer generators can overlap, and likely will, to obey commutation relations.

\textbf{Example:} As an example, consider a portion of the surface code as shown in fig.~\ref{fig:connectivity-toric} below.
The surface code is a code defined on the $2$-dimensional square grid.
The qubits are identified with the edges of the lattice, the $X$ stabilizers with the vertices and the $Z$ stabilizers with the faces.
An $X$ ($Z$) stabilizer acts on a qubit if the vertex (face) corresponding to the stabilizer is adjacent to the edge corresponding to the qubit.   
The corresponding connectivity graph has vertices on all edges of the grid in addition to diagonal connections.

\begin{figure*}
	\centering
	\begin{subfigure}[t]{0.3\textwidth}
		\centering
		\includegraphics[width=0.8\columnwidth]{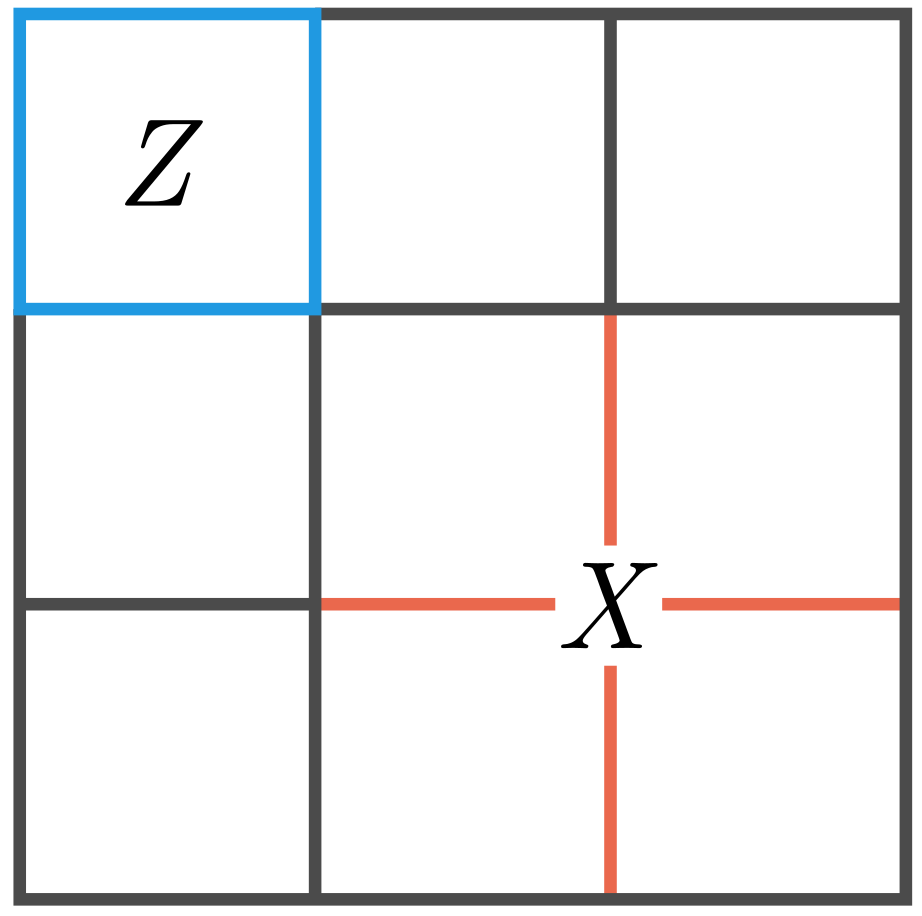}
	\end{subfigure}%
	~ 
	\begin{subfigure}[t]{0.3\textwidth}
		\centering
		\includegraphics[width=0.8\columnwidth]{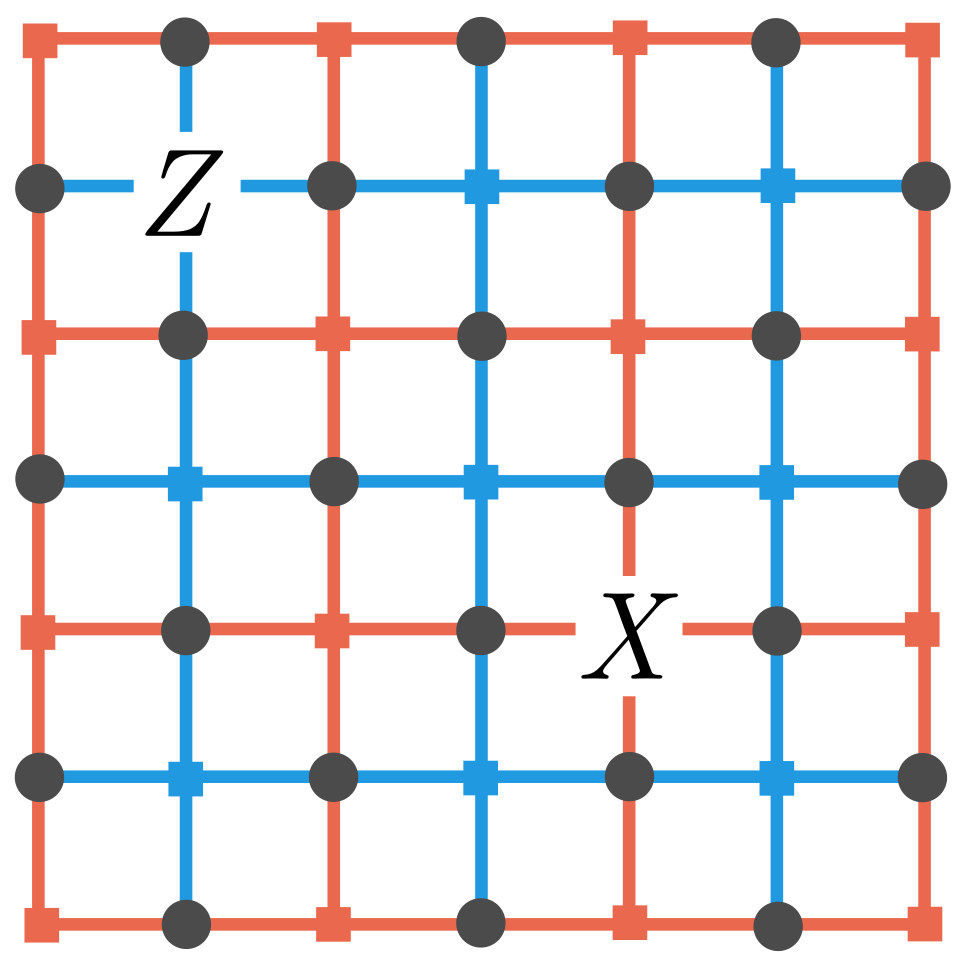}
	\end{subfigure}
	~ 
	\begin{subfigure}[t]{0.3\textwidth}
		\centering
		\includegraphics[width=0.8\columnwidth]{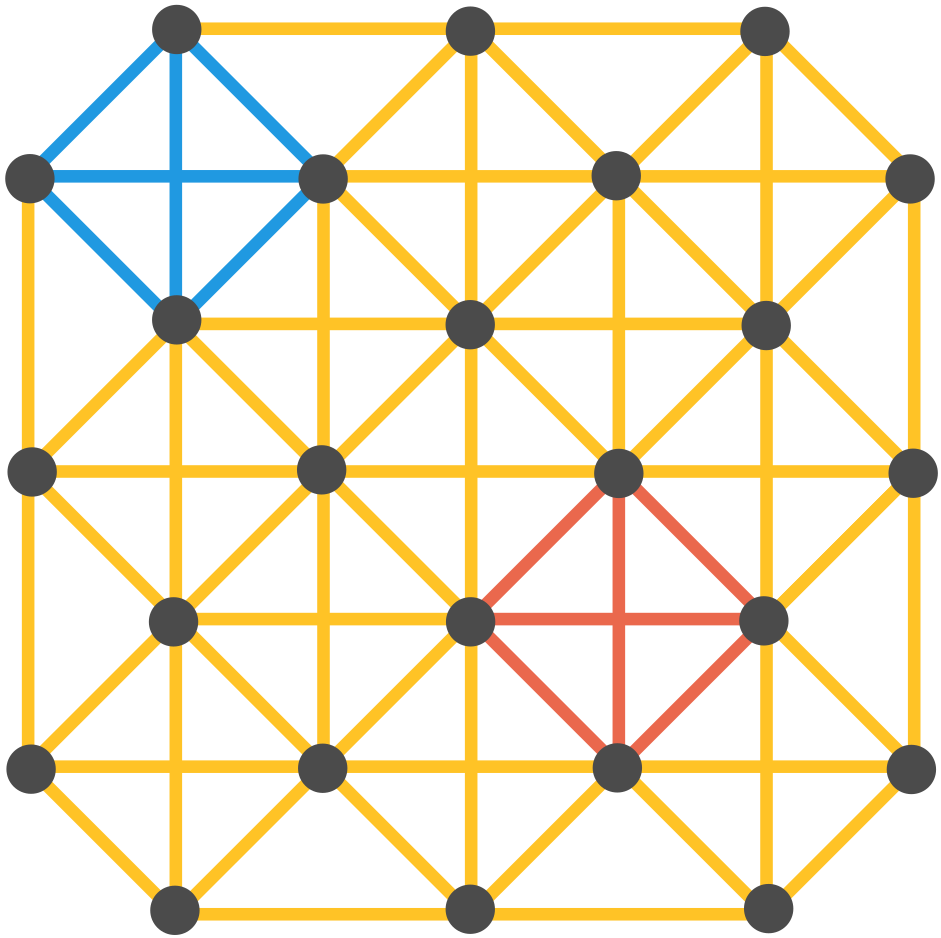}
	\end{subfigure}
	\caption{(a) Cellular representation of the surface code.
		$Z$-generators are associated with the faces, qubits are on the edges, and $X$-generator are on the vertices.
		(b) Tanner graph of the surface code.
		$Z$($X$)-checks are the blue (red) squares, and qubits are the grey dots.
		If two qubits are in the support of a generator, they are connected by an edge.
		(c) Connectivity graph of the surface code.
		The grey dots are still the qubits, but the generators are no longer represented.
		Instead, qubits share an edge when they are touched by the same separator.
		The blue and red edges are induced by the Z and X generators highlighted in (a) and (b).}
	\label{fig:connectivity-toric}
\end{figure*}

\subsection{Separator and treewidth}
\label{subsec:sep-treewidth}
In this section, we start by formally introducing the notion of a separator and other related metrics.
We then introduce a closely related metric, the treewidth.
This additional metric is introduced for a technical reason, as it will be the basis for the proof of Theorem \ref{thm:distancebnd}.
Fortunately, the separation profile and the treewidth are closely related, as shown by Lemma \ref{lem:twandsep}: this will allow us to restate our bound on the distance from Lemma \ref{lem:twdist} in terms of the separators. 

\begin{definition}
	Let $G = (V, E)$ be a graph, $\alpha \in [2/3,1)$.
	Then the $\alpha$-separator of $G$, written $\sansserif{sep}^\alpha(G)$ is the smallest set $S \subset V$ such that
	\begin{enumerate}
		\item $A$, $S$, $B$ are a disjoint partition, i.e.\ $V = A  \sqcup  S  \sqcup  B$.
		\item Both of $|A|, |B| \leq \alpha |V|$.
		\item There are no edges between $A$ and $B$.
	\end{enumerate}
	
\end{definition}

The separator might not be uniquely defined, as multiple sets could have the same size and still split the graph in two disjoint subgraphs.
However, this multiplicity does not affect our results, as it will suffice to prove the existence of \emph{one} such small set.
It will be useful to note that for any $\alpha$, we have $|\sansserif{sep}^\alpha(G)| \leq (1-\alpha)|V|$, as any set of such size naturally induces two sets, $A = \emptyset$, and $B$ such that $|B| = |V| - (1-\alpha)|V| = \alpha |V|$.

Consider the graph made of the disjoint union $G = G_1 \cup G_2$ where $G_1$ and $G_2$ are densely connected.
The separator of $G$ alone would only provide superficial information about its connectivity.
To make a more consistent statement about the connectivity of a graph, we introduce a notion of separability that also relies on subgraphs.

\begin{definition}
	\label{def:alphasep}
	For any graph $G$ on $n$ vertices, we define its $\alpha$-separation profile $s_G^\alpha : [1,...,n] \rightarrow \mathbb{N}$,
	\[
	s_G^\alpha(r) = \max\{ |\sansserif{sep}^\alpha (H)| : H \subseteq G, |H| \leq r\}~.
	\]
	
	To a set of graphs $\cG = \{G_n\}_n$, we associate a set of $\alpha$-separation profiles $\{s_n^\alpha\}_n$ where $s_n^\alpha$ is the separation profile of $G_n$.
\end{definition}

\textbf{Remark:} Since $s^\alpha_G(n) =  \Theta(s^{2/3}_G(n))$ --- see Lemma \ref{lem:salpha} --- our results do not rely on a specific value of $\alpha$, hence we will often write the separation profile $s_G$ or $s_n$ to mean $s_G^\alpha$ or $s_n^\alpha$.

As an example of separation profile, consider a grid graph as shown in fig.~\ref{fig:connectivity-toric}.
This graph is poorly connected; we can partition the vertices into two sets by removing a thin strip from the middle.
In other words, an $L \times L$ grid with $n = \Theta(L^2)$ vertices has a separator of size $L = \Theta(\sqrt{n})$.
As shown in the famous theorem by Lipton and Tarjan \cite{lipton1979separator}, this is true for any planar graph, more precisely for any $G$ planar, we have $s_G(r) =  O(\sqrt{r})$.
Such a bound can also be found for other classes of graphs, for example, the size of the separator for `local' graphs embedded in $D$-dimensional Euclidean space is known to be $O(n^{(1-1/D)})$ \cite{miller1997separators}.
In contrast, expander graphs famously have large separators, i.e.\ $s_G(r) = \Theta(r)$. 
For general graphs, there exist polynomial-time algorithms to \emph{approximate} their separator up to constant factors \cite{orecchia2012lorenzo}.

For many classes of graphs \cite{kawarabayashi2010separator,kisfaludi2020hyperbolic, dujmovic2015genus, gladkova2020separation, coz2020separation, lipton1979separator, miller1997separators} the separation profile of a graph is upper bounded by a polynomial up to constant factors, which is particularly amenable to recursive separation -- see Lemma \ref{lem:solrecursion} -- which will be essential to the formulation of our results. However the separation profile does not always assume a polynomial form, so to capture these cases, it will be helpful to consider the quantity $c(r)$ such that $s_G(r) = r^{c(r)}$. 
This motivates the following definition.

\begin{definition}
	\label{def:cminmax}
	Let $\scrC = \{\cC_n\}_n$ be a family of $\dsl n,k(n),d(n)\dsr$ quantum LDPC codes with non-trivial connectivity graphs $\cG = \{G_n\}_n$ with associated separation functions $\{s_n\}_n$.
	Consider the quantity $c_n(r) \equiv \log_r(s_n(r))$.
	For each $G_n \in \cG$, define $c_{\max}(n) =  \max_{r \in [d(n),n]} c_n(r)$.
\end{definition}

The quantity $c_{\max}$ measures how tightly the graph is connected by considering subgraphs of whose size lies in the interval $[d,n]$.
Consider fig.~\ref{fig:cmax} which shows a connectivity graph $G$ and subgraphs $\{H_i\} \subseteq G$.
Each of these subgraphs $H_i$ is an expander graph, i.e. $s_{H_i}(r) = \Theta(r)$.
\begin{figure}[h]
	\centering
	\includegraphics[width=\columnwidth]{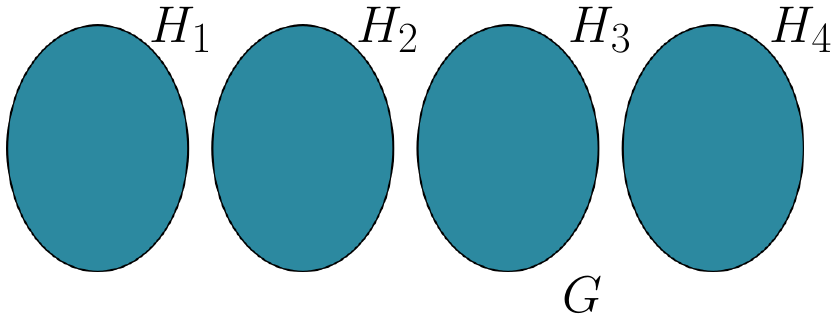}
	\caption{Visualizing $c_{\max}$ for a connectivity graph $G$.
		The connectivity graph $G$ is made up of several disconnected expander subgraphs of size $n^{\alpha}$; there are $n^{1-\alpha}$ such subgraphs, and therefore $G$ has $n$ vertices in total.
		This implies that $c_{\max} \approx 1$.}
	\label{fig:cmax}
\end{figure}
$G$ itself is comprised of $n^{1-\alpha}$ such disconnected expander graphs as shown.
Each subgraph $H_i$ has size $n^{\alpha}$, and so $G$ has $n$ vertices in total.
Suppose it was known that $d = O(n^{\alpha})$.
If we consider small enough subgraphs i.e.\ of size lesser than $n^{\alpha}$, then there exist subgraphs with large separators.
However, if we let $r = n^{\alpha}$, the largest separator corresponds to any subgraph $H_i$ and therefore $c_{\max} = 1$.

For the sake of readability, we will simply write $k \equiv k(n)$ and $d \equiv d(n)$.
By the definition above, there exists a subgraph $H \subseteq G_n$ such that $d \leq |H| \leq n$ and $|\sep(H)| = |H|^{c_{\max}}$.
We also note that for all $d \leq r \leq n$, we have that $s_n(r) \leq r^{c_{\max}(n)}$.
Further one can note that since $|\sep^\alpha(G)| \leq (1-\alpha)|V|$ for any graph $G$ then $s_n(r) \leq (1-\alpha)r $. Therefore we always have $c_{\max}(n) \leq\log_r(1-\alpha) + 1 \leq \log_n(1-\alpha) + 1 < 1$.

We now define the tree decomposition and then the treewidth. 
The tree decomposition of a graph $G$ is a tree whose nodes are clusters of vertices of $G$.
The width of the tree is the minimum size of its nodes and is yet another way to measure the connectivity of a graph.

\begin{definition}
	\label{def:treedecomp}
	A tree decomposition of a graph $G=(V,E)$ is a pair $(\{ Q(i), i \in \cI \}, \cT =(\cI,\cE))$ where $\{i : i \in I\}$ is a family of subsets $Q(i) \subseteq V$ and $\cT =(\cI,\cE)$ is a tree.
	The sets $\cI$ and $\cE$ refer to the nodes and edges of the tree $\cT$ respectively.
	Furthermore, the pair $Q, \cT$ must obey the following properties:
	\begin{enumerate}
		\item \label{it:vertex-containment} $\bigcup_{i \in \cI} Q(i) =V$,
		\item \label{it:edge-containment} for every edge $\{v,w\} \in E$ there exists $i\in \cI$ with $\{v,w\} \subseteq Q(i)$,
		\item \label{it:connectivity} for every $i,j,k \in \cI$ the following holds: if $j$ lies on the path from $i$ to $k$ in $\cT$, then $Q(i) \cap Q(k) \subseteq Q(j)$.
	\end{enumerate} 
	The \emph{width} of the tree decomposition $(\{Q(i) : i \in \cI \}, \cT = (\cI,\cE))$ is defined as $\max_{i \in \cI} |Q(i)| - 1$.
	The treewidth $\tw(G)$ of $G$ is the minimum width of a tree decomposition of $G$.
\end{definition}

To avoid confusion between the graphs $G$ and $\cT$, we shall henceforth refer to the \emph{vertices} $v \in V$ of $G$ and the \emph{nodes} $i \in \cI$ of $\cT$.
The tree decomposition of a graph is not unique; for instance, a trivial decomposition of a graph $G$ is to make one giant node $N$ containing all the vertices of $G$.
The treewidth, however, is the minimum width across all decompositions and is therefore well defined.
The notation $Q(i)$ is meant to be suggestive as it will soon refer to the qubits in that node.

\textbf{Example:} We consider some examples to illustrate this idea.
The first example is the tree decomposition of a tree as shown in fig.~\ref{fig:tree-treedecomp}.
In it is a binary tree of depth $2$ and the corresponding tree decomposition.
The vertices of the graph are gray and the nodes of the tree are green.
It is simple to check Property \ref{it:vertex-containment}.
Notice that every node of the tree decomposition contains a single edge from the tree trivially satisfying Property \ref{it:edge-containment}.
Finally, Property \ref{it:connectivity} is simple to verify: only two adjacent nodes ever share qubits.
Since the size of each node is $2$, the treewidth is $1$.
\begin{figure}[h]
	\centering
	\includegraphics[width=\columnwidth]{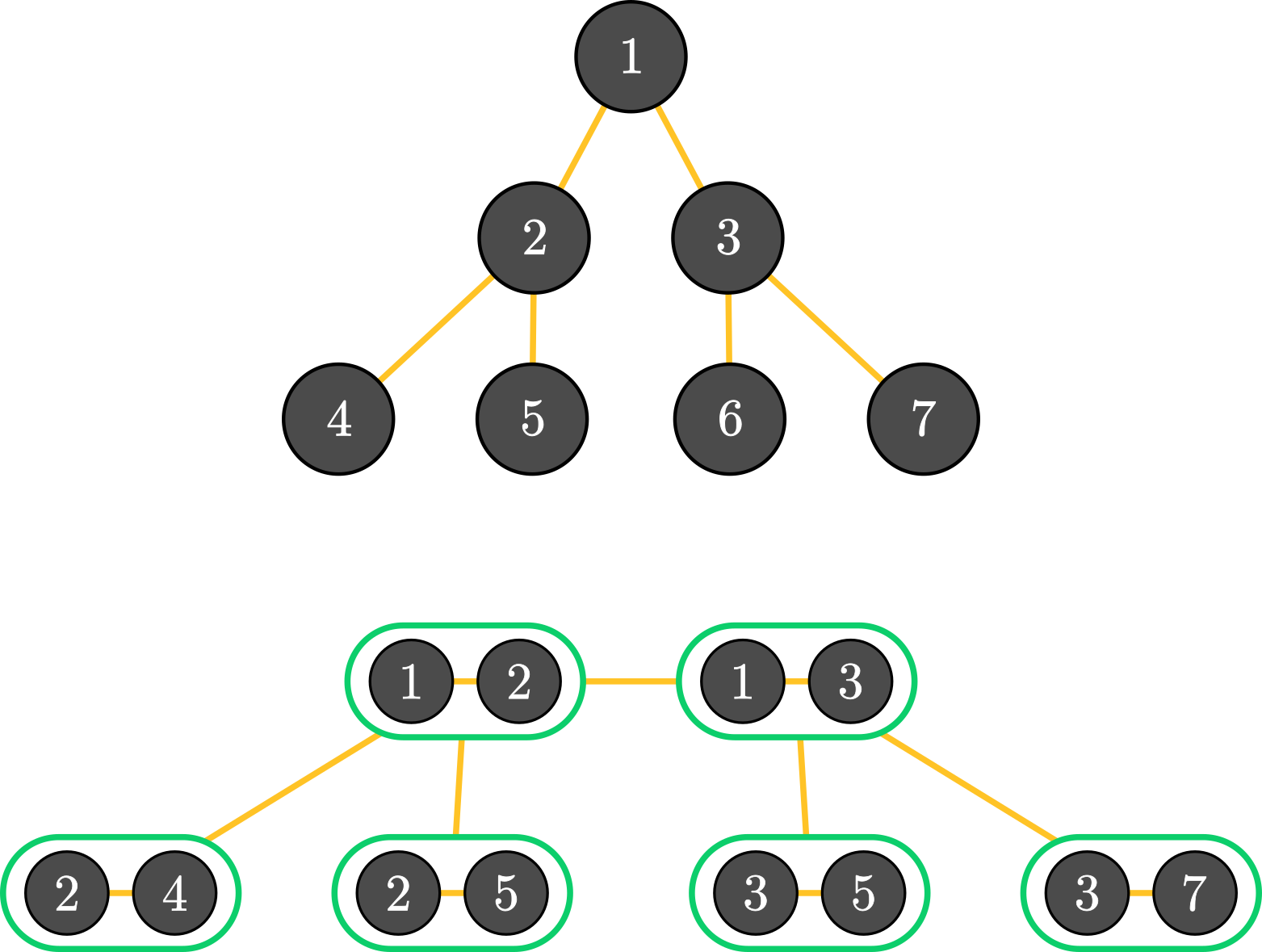}
	\caption{A tree above and its tree decomposition below.
		Each vertex of the tree is denoted using a gray circle and each node of the tree is denoted using a green box.
		The treewidth of a tree is $1$.}
	\label{fig:tree-treedecomp}
\end{figure}

As a second example, consider the surface code again as shown in fig.~\ref{fig:surface-treedecomp} on the left along with its tree decomposition on the right.
The graph on the left indicates via green boxes how vertices are partitioned to form the nodes of the tree.
Recall the structure of the connectivity graph of the surface code as shown in fig.~\ref{fig:connectivity-toric}.
We choose the nodes of the tree by selecting vertices of the connectivity graph diagonally as shown.
Again, it is straightforward to verify that this tree decomposition satisfies the definition.
First, the diagonals contain every vertex and thus satisfy Property \ref{it:vertex-containment}.
It is also straightforward to verify that every edge is contained in at least one node, satisfying Property \ref{it:edge-containment}.
Finally, since two successive nodes overlap on one diagonal array of vertices, the decomposition satisfies Property \ref{it:connectivity}.
The treewidth of this graph is obtained from the largest node of the tree which corresponds to the node $cd$.
The treewidth is therefore $11$.

In practice, there is a lot of interest in efficient algorithms to compute the treewidth of arbitrary graphs \cite{dell2018pace}.  It can be noted that since the treewidth and the size of the separator of a graph are within a constant of each other -- see Lemma \ref{lem:twandsep} -- the algorithm of \cite{orecchia2012lorenzo} can be used to obtain polynomial time approximation of the treewidth.

\begin{figure}[h]
	\centering
	\includegraphics[scale=0.5]{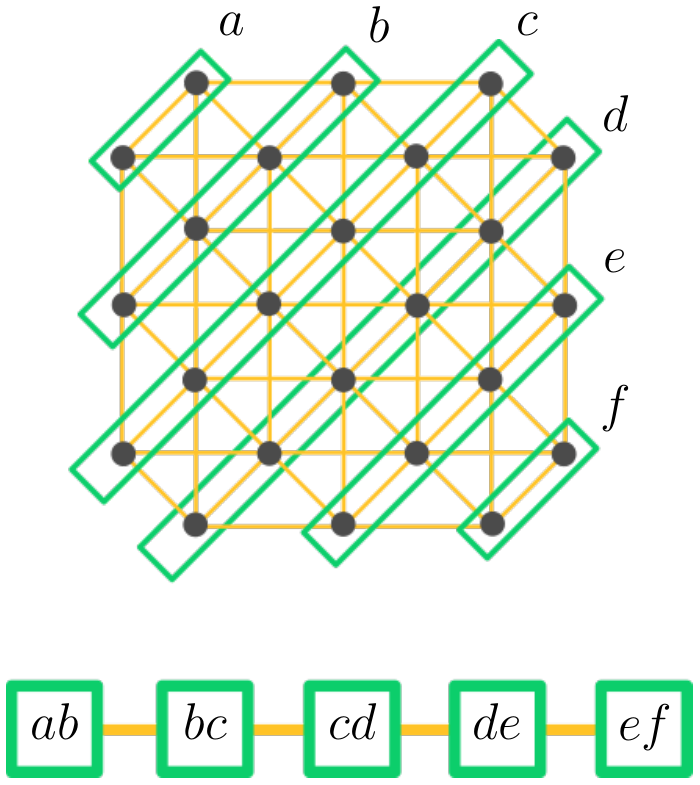}
	\caption{Above is the connectivity graph representation of the $3 \times 3$ surface code.
		Its vertices are partitioned into subsets $a,b,c,d,e,f$.
		Below is a tree decomposition of the same graph.
		The node $ab$ contains the union of the vertices in $a$ and $b$.}
	\label{fig:surface-treedecomp}
\end{figure}

As promised earlier, the separation profile and the treewidth are closely related.

\begin{lemma}
	\label{lem:twandsep}
	For any graph $G = (V,E)$ on $n$ vertices with separation profile $s_G$, then $s_G(n) =  \Theta(\tw(G))$.
	
\end{lemma}
\begin{proof}
	First, it was noted in \cite{bottcher2010bandwidth} that for a graph $G$ on $n$ vertices, $s^{2/3}_G(n) \leq \tw(G) + 1$, or $s^{2/3}_G(n) = O(\tw(G))$. In \cite{dvorak2019treewidth}, it is shown that $\tw(G) =  O(s^{2/3}_G(n))$.
	Since $s^{\alpha}_G(n) = \Theta(s^{2/3}_G(n))$, see Lemma \ref{lem:salpha}, then we have $s^{\alpha}_G(n) = \Theta(s^{2/3}_G(n)) = \Theta(\tw(G))$.
\end{proof}

\subsection{Linear treewidth, separation, and expansion}

Another commonly used notion of connectivity is that of expansion, which has already found applications in analyzing the structure of quantum codes \cite{fawzi2018efficient}.
Further expander graphs are an essential tool in classical error correction \cite{sipser1994expander,zemor2001expander}.
It is then natural to ask how the treewidth, the separability, and our results in general, relate to the expansion of a graph.

The vertex expansion of a graph is generally defined through its Cheeger constant.

\begin{definition}
	For any graph $G$ on $n$ vertices, we define its Cheeger constant $h(G)$ as
	\[
	h(G) = \min_{A \subset G, |A| \leq n/2} \frac{|\bdry A|}{|A|}~.
	\]
\end{definition}

\begin{lemma}
	\label{lem:expansion}
	For any graph $G$ on $n$ vertices we can find $H \subseteq G$ with $|H| \geq c' \cdot n$ and $h(H) \geq c/2$ for some $c,c' > 0$ if either of these two conditions is fulfilled 
	\begin{enumerate}
		\item $\tw(G) \geq c \cdot n$, or
		\item $s_G \geq c \cdot n + 1$.
	\end{enumerate}
\end{lemma}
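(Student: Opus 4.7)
The plan is to prove both parts through Lemma \ref{lem:twandsep}. Under either hypothesis I would extract a constant $c_1 = c_1(c) > 0$ such that every $\alpha$-balanced separator of $G$ has size at least $c_1 n$: condition (2) gives this directly by taking $r = n$ in Definition \ref{def:alphasep}, while condition (1) gives it via the contrapositive of the first implication of Lemma \ref{lem:twandsep} (any small balanced separator, combined with a recursive decomposition, would force small treewidth). With this in hand, one only needs to find a single large subgraph of $G$ whose small sides all have expansion bounded below.

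Next I would construct $H$ by an iterative peeling procedure. Fix a parameter $\epsilon = c_1/2$, set $G_0 = G$, and at stage $i \geq 0$ search for a set $A_i \subseteq V(G_i)$ with $|A_i| \leq |V(G_i)|/2$ and $|\partial_{G_i}(A_i)| < \epsilon |A_i|$. If such $A_i$ exists, remove it and pass to $G_{i+1} = G_i \setminus A_i$; otherwise halt and set $H = G_i$. When the procedure halts, every set of size at most $|V(H)|/2$ in $H$ has vertex expansion at least $\epsilon$, so $h(H) \geq \epsilon$; this gives the claimed $h(H) \geq c/2$ after renaming constants (replacing the original $c$ by $c_1$, which is a fixed fraction of $c$).

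To lower-bound $|H|$, set $U_i = A_1 \sqcup \cdots \sqcup A_i$ and $B_i = \partial_{G_i}(A_i)$. The key structural observation is that
\begin{equation*}
\partial_G(U_i) \subseteq \bigcup_{j \leq i} B_j,
\end{equation*}
because any $G$-edge from $v \in U_i$ (with $v \in A_j$ for some $j \leq i$) to $w \notin U_i$ has $w \in V(G_j)$ (since $w \notin U_{j-1}$), hence the edge is present in $G_j$ and $w$ must lie in $B_j$. Summing the boundary sizes yields $|\partial_G(U_i)| < \epsilon |U_i|$. If the procedure halts with the total removed mass $|U| < n/2$, then $|H| > n/2$ and we are done. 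Otherwise, at the first stage $i^*$ with $|U_{i^*}| \geq n/2$, the bound $|A_{i^*}| \leq (n - |U_{i^*-1}|)/2$ yields $|U_{i^*}| \leq 3n/4$, so $\partial_G(U_{i^*})$ is a $(3/4)$-balanced separator of $G$ of size less than $\epsilon \cdot 3n/4 = 3 c_1 n / 8 < c_1 n$, contradicting our separator lower bound.

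The main obstacle is establishing the structural observation $\partial_G(U_i) \subseteq \bigcup_{j \leq i} B_j$, which ties each locally small expansion failure during peeling back to a single, global, small balanced separator of $G$. Once this is in place, the rest is book-keeping and tuning of constants, where the main freedom is in choosing the peeling threshold $\epsilon$ small enough that $\epsilon \cdot (3n/4)$ falls strictly below the separator lower bound $c_1 n$ granted by Lemma \ref{lem:twandsep}.
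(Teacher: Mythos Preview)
Your peeling argument is a sensible and essentially standard way to extract an expander subgraph from a graph whose balanced separators are all large, and the structural observation $\partial_G(U_i)\subseteq\bigcup_{j\le i}B_j$ together with the size bookkeeping is correct. However, the reduction step for Case~1 has a genuine gap. You claim that $\tw(G)\ge cn$ forces every $\alpha$-balanced separator of $G$ itself to have size at least $c_1 n$, arguing via the contrapositive of Lemma~\ref{lem:twandsep}. But that lemma says ``$G$ $s$-separable $\Rightarrow \tw(G)=O(s(|G|))$'', and $s$-separability is a hypothesis on \emph{all} subgraphs of $G$, not just on $G$. A single small balanced separator of $G$ does not bound $\tw(G)$: take $G$ to be a clique $K_{cn}$ together with $(1-c)n$ isolated vertices (or two disjoint copies of $K_{n/2}$). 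Then $\tw(G)\ge cn-1$ is linear, yet the empty set is a balanced separator of $G$. In such examples your peeling can indeed remove more than $n/2$ vertices and produce a tiny (even empty) $(3/4)$-balanced separator, so the ``contradiction'' you invoke is vacuous. The recursive decomposition alluded to in your parenthetical would need small separators at every level, which the peeling does not supply on the $A_i$ side.

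The paper's proof sidesteps this entirely: it cites Grohe--Marx (Proposition~2 of \cite{grohe2009treewidth}) for Case~1 and then reduces Case~2 to Case~1 via the second implication of Lemma~\ref{lem:twandsep}. Your approach can be salvaged for Case~2 with a small correction: condition~(2) at $r=n$ gives a subgraph $G'\subseteq G$ with $\mathsf{sep}^\alpha(G')\ge cn$, hence $|G'|\ge cn$, and running your peeling on $G'$ (rather than on $G$) yields $H\subseteq G'\subseteq G$ with $|H|\ge c''n$. For Case~1, however, the contrapositive of Lemma~\ref{lem:twandsep} only promises \emph{some} subgraph $G'$ with $\mathsf{sep}^\alpha(G')>c_1|G'|$, with no lower bound on $|G'|$; to get $|G'|=\Theta(n)$ you already need the substance of the Grohe--Marx result you are trying to reprove.
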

\begin{proof}
	Case $1$ is Proposition 2 in \cite{grohe2009treewidth}.
	Case $2$, is readily obtained from Case 1 by the fact that $s^{\alpha}_G \leq s^{2/3}_G \leq \tw(G) + 1$ \cite{bottcher2010bandwidth} .
\end{proof}

\section{Main results}
\label{sec:main}

\subsection{Bound on the distance}
\label{subsec:distancebnd}
We now state and prove the first main result: the distance of a code family is bounded by the treewidth of the connectivity graph.

\begin{lemma}
	\label{lem:twdist}
	Let $\cC$ be a code and $G$ an associated connectivity graph of bounded degree $\delta$.
	If $G$ has treewidth $\tw(G)$ then the distance obeys $d \leq \delta \cdot (\tw(G) + 1)$.
\end{lemma}
\begin{proof}
	Consider a tree decomposition $\cT$ of $G$ such that the width of the tree $\cT$ is the treewidth of $G$.
	For the sake of contradiction, assume $d > \delta(\tw(G) + 1)$.
	
	Suppose the tree $\cT$ is non-trivial and has depth $\depth \geq 1$ (and the root at depth $0$).
	Let $p \in \cI$ be some node at depth $\depth - 1$.
	Let the leaves $\{j_{1},...,j_{t}\} \subset \cI$ be the children of $p$. 
	
	Consider the set $\cA = \union_i Q(j_i) \setminus \cup_i \bdry_+ Q(j_i)$.
	The purpose of $\cA$ is to be a correctable anchor whose boundary will be provably small.
	This will allow us to grow $\cA$ to a larger, but still correctable, region.
	
	First, it follows from lemma $\ref{lem:bptunion}$ that $\cA$ is itself correctable.
	This is because once the boundaries are removed, $\cA$ is a union of decoupled sets as per definition \ref{def:decoupled}. \footnote{Including the case of $Q(j_i)$ and $Q(j_k)$ sharing a qubits $q$. In that case, $ Q(j_i)\setminus \bdry_+ Q(j_k) \cup Q(j_k)\setminus \bdry_+ Q(j_i)$ can be decomposed as the union of three decoupled sets: something in $Q(j_i)$ not connected to anything in $Q(j_k)$, $q$, and something in $Q(j_k)$ not connected to anything in $Q(j_i)$. The Cleaning lemma still applies.}

	\begin{figure}[h]
		\centering
		\begin{tikzpicture}
			\node at (0,0) {\includegraphics[scale=0.2]{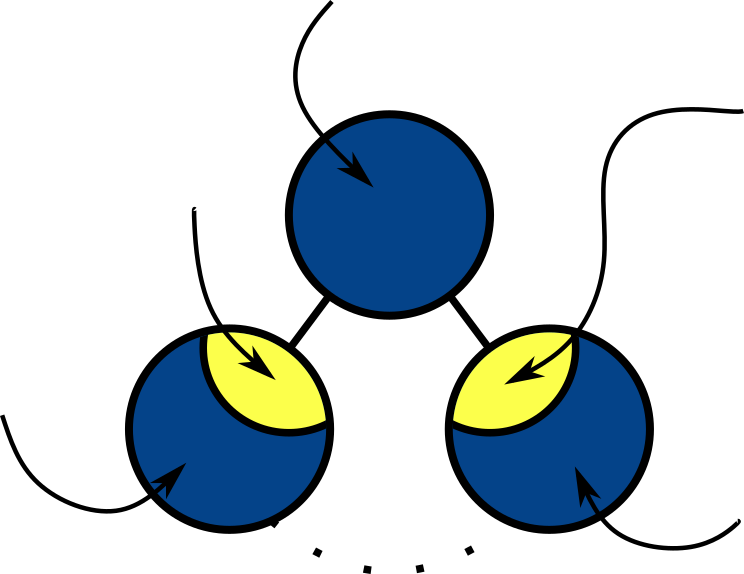}};
			\node at (-2,-0.4) {$Q(j_1)$};
			\node at (2.3,-1) {$Q(j_t)$};
			\node at (2.4,0.9) {$\bdry_{+} \cA$};
			\node at (-1.2,0.7) {$\bdry_{+} \cA$};
			\node at (0.1,1.7) {$Q(p)$};
		\end{tikzpicture}
		\begin{tikzpicture}
			\node at (7,0) {\includegraphics[scale=0.2]{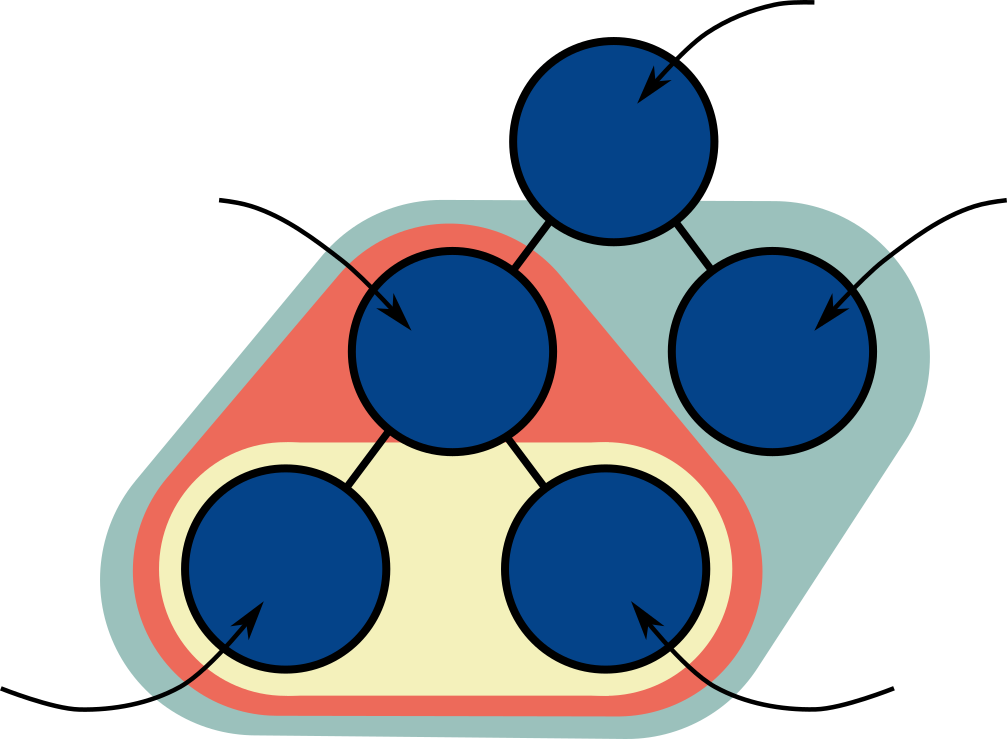}};
			\node at (4,-1.5) {$Q(j_1)$};
			\node at (9.3,-1.5) {$Q(j_2)$};
			\node at (4.9,0.9) {$Q(p_1)$};
			\node at (10.3,0.9) {$Q(p_2)$};
			\node at (9.3,2) {$Q(p_3)$};
		\end{tikzpicture}
		\caption{An illustration of the process used to iteratively grow a correctable region in the tree decomposition.
			The leaves $Q(j_1)\setminus \bdry_+ \cA$ and $Q(j_2)\setminus \bdry_+ \cA$ in the light yellow region can be verified to be correctable from the Union Lemma.
			Since their boundary is upperbounded by the size of the parent node $Q(p_1)$, we can find the nodes in the red triangle to be correctable using the Expansion Lemma.
			Call the union of these nodes $p_{red}$; $p_{red}$ and $p_2$ are children of the parent node $p_3$ and we may proceed recursively.}
		\label{fig:dstthmgrowing}
	\end{figure}
	
	Next, we turn to $\bdry_{+} \cA$.
	Consider any pair of qubits $u, u'$ such that $u \in \cA$, $u' \in \bdry_+ \cA$.
	By construction, there must be some leaf $j_i$ such that $u \in Q(j_i)$ and since $u' \in \bdry_{+} \cA$, either:
	\begin{enumerate}
		\item $u' \in \bdry_+ Q(j_i)$, or
		\item $u' \in Q(j_i)$ but $u' \in \bdry_+ Q(j_j)$ for some $j$.
	\end{enumerate}
	We conclude that $\bdry_+ \cA \subset \cup_i \bdry_+ Q(j_i)$.
	
	Let $\cP_{\ext} = Q(p) \cup \bdry_+ Q(p)$ be the extended parent.
	The purpose of $\cP_{\ext} $ will be to bound the size of $\bdry_+ Q(j_i)$ and extend our anchor. 
	By the definition of the treewidth, $|Q(p)| \leq \tw(n) + 1$.
	Furthermore, since the degree of the qubits in the connectivity graph is at most $\delta$, it follows that $|\bdry_+ Q(p)| \leq \delta (\tw(n)+1)$.
	Since $d > \delta (\tw(n) + 1)$ by assumption, both $Q(p)$ and $\bdry_{+} Q(p)$ are correctable.
	It follows from Lemma \ref{lem:bptexpansion} that $\cP_{\ext}$ is correctable.
	
	For every leaf $j_i$ and every $u \in \bdry_{-} Q(j_i)$, $u$ shares an edge with the exterior of $Q(j_i)$.
	Therefore there exists another node $j$ such that $u \in Q(j)$ by Property \ref{it:edge-containment} of Definition \ref{def:treedecomp}.
	By Property \ref{it:connectivity} of Definition \ref{def:treedecomp}, it follows that $u \in Q(p)$.
	We conclude that $\bdry_- Q(j_i) \subset Q(p)$.
	For any $v \in \bdry_{+} Q(j_i)$, either $v$ is in $Q(p)$, or it is outside of $Q(p)$, though still in its boundary: $\bdry_{+} Q(j_i) \subset Q(p) \cup \bdry_{+} Q(p) = \cP_{\ext}$.
	More generally $\bdry_+ \cA \subset \cup_i \bdry_{+} Q(j_i) \subset \cP_{\ext}$.
	
	We now have the necessary ingredients to extend the anchor.
	Since $\cA$ and $\cP_{\ext}$ are correctable with $\bdry_+ \cA \subset \cP_{\ext}$, then $\cup_i Q(j_i) \cup Q(p) \subset \cA \cup \cP_{\ext}$ is correctable \footnote{To verify that the inclusion holds, it is sufficient to verify that $\cup_i Q(j_i)\subset \cA \cup \cP_{\ext} $. Since  $\cA = \union_i Q(j_i) \setminus \cup_i \bdry_+ Q(j_i)$ it suffices to show that $\cup_i \bdry_+ Q(j_i) \subset \cP_{\ext} $, which is the conclusion of the previous paragraph}.
	This too follows from Lemma \ref{lem:bptexpansion}.
	This shows that the qubits in $p$ together with its children together are correctable.
	We can combine these nodes to form one larger leaf.
	Notice that after combining the $p$ and its children into one node, the resulting tree is still a valid tree decomposition of the connectivity graph $G$.
	Save for the new amalgamated node, the size of the rest of the nodes of the tree is still upper bounded by $\tw(G) + 1$.
	
	The proof now proceeds by repeating this process until the entire tree is contracted to one node.
	First, we can contract the children of all nodes at depth $\cD-1$ to reduce the depth of the entire tree to $\cD-1$.
	It follows that this tree is also a valid tree decomposition, with all the leaves corresponding to correctable sets as proved above.
	This process can be iterated until the entire tree becomes one giant node which itself must be correctable.
	If the tree decomposition has several disjoint components, each of these components is a tree with bounded treewidth.
	Each can be proved to be correctable, and then since disjoint, their union is also correctable. 
	This implies the whole code is correctable, a contradiction if the code is to encode at least one logical qubit.
\end{proof}

From Lemma \ref{lem:twandsep}, it can be noted that easily separable graphs have bounded treewidth.
Applied to Lemma \ref{lem:twdist}, the proof of Theorem \ref{thm:distancebnd} follows naturally.

\begin{theorem}
	\label{thm:distancebnd}
	Let $\scrC = \{\cC_n\}$ be a family of $\dsl n,k,d \dsr$ quantum LDPC codes with associated connectivity graphs $\cG = \{G_n\}$ and associated separation profiles $\{s_n\}_n$. Then,
	\begin{align*}
		d = O(s_n(n)) ~.
	\end{align*}
	In particular if $s_n(r) = O(r^c)$ where $0 \leq c \leq 1$, then
	\begin{align*}
		d = O(n^c)~.
	\end{align*}
\end{theorem}
\begin{proof}
	Note that $d = O(\tw(G_n)) = O(s_n(n))$. The first bound is from Lemma \ref{lem:twdist}, the second is from Lemma \ref{lem:twandsep}.
\end{proof}

There are many classes of graphs for which the value of $c$ is known \cite{kawarabayashi2010separator,kisfaludi2020hyperbolic, dujmovic2015genus, gladkova2020separation, coz2020separation, lipton1979separator}, and it can be estimated for arbitrary families in polynomial time \cite{orecchia2012lorenzo}.

A useful property of separation profiles and treewidth is that both metrics are somewhat robust to the addition of edges in a graph.
This then leads to the following corollary.

\begin{corollary}
	\label{cor:addedges}
	Let $\cC$ be a quantum LDPC code on $n$ qubits with associated connectivity graph $G$ and separation profile $s_G$. 
	Then consider any code $\cC'$ such that its connectivity graph $G'$ corresponds to $G$ augmented with a set of edges: $G' = (V, E \cup E_{\text{aug}})$, then we have
	
	\[
	d = O(s_G(n) + |E_{\text{aug}}|)
	\]
\end{corollary}
\begin{proof}
	Any separator $S$ of the graph $G$ can be augmented to be a separator of the graph $G'$ by removing the vertices involved in the edges $E_{\text{aug}}$. 
	Since there are at most $2 |E_{\text{aug}}|$ of these vertices, then $s_{G'}(r) \leq s_{G}(r) + 2 |E_{\text{aug}}|$ and the result follows.
\end{proof}

A straightfoward consequence is that if a code has a poor connectivity, it takes a significant number of edges to overcome the associated poor distance. For example a planar LDPC has distance $d = O(n^{1/2})$, and to improve it to any $d' = \Omega(n^{1/2 + \epsilon})$ with $\epsilon > 0$ one needs to add at least $\Omega(n^{1/2+\epsilon})$ edges.

This theorem then leads to the following conclusion--unless the graph is very connected, the distance cannot grow linearly.
\begin{corollary}
	\label{cor:sepdist}
	If $\{\cC_n\}$ is a family of $\dsl n,k,d \dsr$ codes such that the associated separation profiles $\{s_n\}_n$ satisfy $s_n(n) = o(n)$. Then $d = o(n)$.
\end{corollary}

Conversely, any family $\scrC = \{\cC_n\}$ with linear distance \cite{panteleev2020quantum} implies the existence of an expander family of graphs $\{G'_n\}$, $G'_n \subset G_n$ by Lemma \ref{lem:expansion}.

\subsection{Bound on the code dimension}
\label{subsec:dimensionbnd}

In this section, we mirror the strategy of \cite{bravyi2010tradeoffs} to bound the dimension of a code with poor connectivity.
Before diving into the formal proofs, we outline the proof.

For a given quantum code $\cC$ on a set $V$ of qubits, it can be shown that if a subset of qubits $A \subset V$ is correctable, then $\cC$ satisfies $k \leq |V \setminus A|$.
Then, one can always pick $A$ such that $|A| = d -1 < d$, and obtain $k \leq n - d + 1$.
This is just a weaker version of the Singleton bound \cite{gottesman1997stabilizer}.
This bound does not rely on the connectivity of the code and tells us very little about the asymptotic behavior of quantum codes. 

In order to improve this bound, we will make use of the connectivity graph $G = (V,E)$. Consider a set $S \subset V$, such that $S$ is a separator of $G$ and therefore induces a tripartition $A_1 \sqcup V \sqcup A_2$. Another perspective is that removing $S$ from the graph $G$ induces two disjoint graphs $G_{A_1} = (A_1, E_{A_1})$ and $G_{A_2}= (A_2, E_{A_2})$.
Then, by Lemma \ref{lem:bptunion}, if $A_1$ and $A_2$ are correctable, since there are no edges between $A_1$ and $A_2$, then $A_1 \cup A_2$ is correctable.
This implies that $k \leq |V \setminus A_1 \cup A_2| = |S|$, which gives us a connectivity-dependent bound: if a graph has small separators, then the bound on $k$ can be expected to be restrictive.

This strategy can be extended to the case where $A_1$ and $A_2$ are not correctable. It suffices to ``take'' new separators in $G_{A_1}$ and $G_{A_2}$, until every induced subgraph is correctable -- which can be guaranteed when every subgraph is of size $d-1$. We then have $k \leq \sum_{\text{sum over the separators}} |S|$

We now proceed to make this statement formal.
We begin by restating the following result as a lemma and provide an alternate proof without the use of von Neumann entropies.
The tradeoff is that our proof only works in the case of stabilizer codes, whereas the original statement applies to all quantum codes.

\begin{lemma}[Bravyi-Poulin-Terhal \cite{bravyi2010tradeoffs}, Eq.~14]
	\label{lem:bptabc}
	Consider an $\dsl n,k,d\dsr$ stabilizer code $\cC$ defined on a set of qubits $Q$, $|Q| = n$, such that $Q = A \sqcup B \sqcup C$.
	If $A,B$ are correctable, then
	\begin{equation}
		k  \leq |C|~.
	\end{equation}
\end{lemma}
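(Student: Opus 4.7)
The plan is to recast the problem in the symplectic language of the stabilizer formalism. I identify the Pauli group $\cP$ (up to overall phase) with a symplectic vector space $V$ of dimension $2n$ over $\mathbb{F}_2$, whose symplectic form $\omega$ records the commutation of Paulis. Then $\cS$ is isotropic of dimension $n-k$, and the logical group is $\cL = \cS^\perp$. For any qubit subset $U$, let $V_U \subseteq V$ denote the symplectic subspace of Paulis supported on $U$, and set $s_U := \dim(\cS \cap V_U)$. By the Cleaning Lemma, correctability of $U$ is equivalent to $\cL \cap V_U = \cS \cap V_U$: every logical fully supported on $U$ lies in $\cS$.

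The heart of the argument is the identity
\[
    s_{BC} \;=\; s_A + n - 2|A| - k~,
\]
which I claim holds whenever $A$ is correctable. The key algebraic ingredient is the symplectic duality $(W|_U)^{\perp_U} = W^\perp \cap V_U$, valid for any subspace $W \subseteq V$ and any symplectic subspace $V_U$, where $\perp_U$ denotes the symplectic complement taken inside $V_U$. Specializing to $W = \cS$ and $U = A$ and using correctability, one obtains $(\cS|_A)^{\perp_A} = \cL \cap V_A = \cS \cap V_A$; taking dimensions yields $\dim(\cS|_A) = 2|A| - s_A$. Rank-nullity for the restriction map $\cS \to V_A$ (whose kernel is $\cS \cap V_{BC}$) then gives $n - k = s_{BC} + (2|A| - s_A)$, which rearranges to the claimed identity.

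Applying the same argument to $B$ yields $s_{AC} = s_B + n - 2|B| - k$. Summing and using $n = |A| + |B| + |C|$ produces
\[
    s_{BC} + s_{AC} - s_A - s_B \;=\; 2(|C| - k)~.
\]
Finally, the direct-sum inclusion $(\cS \cap V_B) \oplus (\cS \cap V_C) \subseteq \cS \cap V_{BC}$ gives $s_{BC} \geq s_B + s_C$, and analogously $s_{AC} \geq s_A + s_C$. Substituting, $2(|C| - k) \geq 2 s_C \geq 0$, and hence $k \leq |C| - s_C \leq |C|$.

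The main obstacle is not conceptual but organizational: one has to set up the symplectic picture cleanly, state the duality $(W|_U)^{\perp_U} = W^\perp \cap V_U$ in exactly the form needed, and verify that correctability translates to $\cL \cap V_U = \cS \cap V_U$ via the Cleaning Lemma. With those pieces in place, the identity $s_{BC} = s_A + n - 2|A| - k$ together with the trivial bound $s_{BC} \geq s_B + s_C$ closes the argument in a few lines, and the use of stabilizer-specific symplectic structure explains why the method does not extend beyond stabilizer codes.
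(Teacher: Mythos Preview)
Your proof is correct and, at its core, uses the same symplectic/rank machinery as the paper; the main difference is packaging. The paper invokes the Delfosse--Z\'emor characterization of correctable erasures, $2|E| \leq \rank(H) + \rank(H_E) - \rank(H_{\comp{E}})$, as a black box, then applies it to $A$ and $B$ and adds the two inequalities (using $\rank(H_{BC}) \geq \rank(H_B)$) to obtain $|A|+|B| \leq n-k$. You instead rederive the key identity from scratch: the symplectic duality $(\cS|_A)^{\perp_A} = \cL \cap V_A$ combined with correctability gives $\dim(\cS|_A) = 2|A| - s_A$, and rank--nullity turns this into the \emph{equality} $s_{BC} = s_A + (n-k) - 2|A|$, which is precisely the equality case of the Delfosse--Z\'emor inequality. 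Your route is thus fully self-contained and, as a small bonus, yields the slightly sharper conclusion $k \leq |C| - s_C$ (with $s_C = \dim(\cS \cap V_C)$) rather than just $k \leq |C|$. Conceptually the two arguments are the same; yours simply unpacks the cited lemma.
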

\begin{proof}
	Let $H \in \bbF_2^{m \times 2n}$ be the symplectic representation of the stabilizer generators of $\cC$.
	We let $\rank{H} = n-k$ denote the rank of $H$ and for some set of indices $E \subseteq [n]$, we let $H_{E}$ denote the matrix obtained by selecting those columns indexed by $E$.
	
	Delfosse and Z\'emor \cite{delfosse2013upper} show (see Lemma 3.3) that an erasure $E \subseteq [n]$ is correctable if and only if:
	\begin{align}
		\label{eq:dzcorrectable}
		2|E| \leq \rank(H) + \rank(H_{E}) - \rank(H_{\comp{E}})~.
	\end{align}
	Consider the tripartition $Q = A \sqcup B \sqcup C$ where $A$ and $B$ are correctable.
	We infer from eq.~\ref{eq:dzcorrectable} that
	\begin{align}
		2|A| &\leq \rank(H) + \rank(H_A) - \rank(H_{BC}) \nonumber \\
		&\leq \rank(H) + \rank(H_A) - \rank(H_{B})~.
		\label{eq:acorrectable}
	\end{align}
	The last inequality follows because removing columns from a matrix can at best reduce its rank.
	Similarly, we obtain
	\begin{align}
		\label{eq:bcorrectable}
		2|B| & \leq \rank(H) + \rank(H_B) - \rank(H_{A})~.
	\end{align}
	Adding eq.~\ref{eq:acorrectable} and eq.~\ref{eq:bcorrectable}, we get $(|A| + |B|) \leq \rank(H)$.
	We can now substitute $|A| + |B| = n - |C|$ (because $A$, $B$, $C$ form a tripartition of the set of qubits) and $\rank(H) = n-k$ into this equation to obtain $k \leq |C|$.
\end{proof}

Then, the bounds on the code dimension and the level of the transversal gates depend on how costly it is to partition a graph. The following definition and lemma formalize this affirmation.

\begin{definition}
	\label{def:cS}
	For a graph $G$ with separation profile $s_G$, the function $\cS_d$ is defined by the recurrence relation $\cS_d(r) = s_G(r) + \cS_d(\alpha r) + \cS_d((1-\alpha)r)$, together with the condition that $\cS_d(t) = 0$ for all $t < d$.
\end{definition}

\begin{lemma}
	\label{lem:recpartition}
	Let $G$ be a graph with separation profile $s_G$ on $n$ vertices. For every $d \leq n$, there exists a partition $V = A \sqcup \comp{A}$, with $A$ a union of disjoint subsets of size strictly less than $d$, and $|\comp{A}| \leq \cS_d(n)$.
\end{lemma}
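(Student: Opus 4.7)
My plan is to build the partition by a top-down recursive separation of $G$, stopping whenever the current piece is already small enough to be thrown into $A$.

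\textbf{The recursive procedure.} Given the graph $G$, I would define a procedure $\Partition(G')$ on a subgraph $G' \subseteq G$ as follows. If $|G'| < d$, return $A \gets \{V(G')\}$ and $\comp{A} \gets \emptyset$ and halt. Otherwise, invoke the $s$-separability of $G$: since every subgraph of $G$ of size at most $|G'|$ admits an $\alpha$-separator of size at most $s(|G'|)$, and since any $(s,\alpha)$-separable graph is also $(c_\alpha s, 1/2)$-separable, there exists a set $S' \subseteq V(G')$ with $|S'| \leq c_\alpha \cdot s(|G'|)$ whose removal splits $G'$ into two subgraphs $G'_1, G'_2$ with $|G'_i| \leq |G'|/2$ and no edges between them. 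Recurse on $G'_1$ and $G'_2$, combine the $A$-collections, and add $S'$ to the $\comp{A}$-part. Finally, let $A$ be the union of the family of subsets produced, and $\comp{A}$ the union of all removed separators.

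\textbf{Verifying that $A$ is a union of decoupled subsets.} The pieces of $A$ are exactly the vertex sets returned at the base cases; each one has size strictly less than $d$ by construction. I would show that any two such pieces $U_1, U_2$ are decoupled in the sense of Definition~\ref{def:decoupled}, i.e.\ that no edge of the connectivity graph joins them. Walk up the recursion tree to the lowest common ancestor of the two base-case calls; at that call, the separator $S'$ split $G'$ into two pieces $G'_1, G'_2$ with no edges between them, and $U_1, U_2$ live in different pieces. Hence the only way $U_1$ and $U_2$ could be connected would be through a vertex of $S'$, but $S' \subseteq \comp{A}$ by construction, so no such edge exists inside $A$.

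\textbf{Size bound on $\comp{A}$.} Let $T(r)$ be the worst-case size of $\comp{A}$ produced by $\Partition$ on a subgraph of size $r$. The recursion gives
\begin{equation*}
    T(r) \leq c_\alpha \cdot s(r) + T(r_1) + T(r_2), \quad r_1, r_2 \leq r/2,
\end{equation*}
with $T(r) = 0$ whenever $r < d$. Using monotonicity and the fact that doubling the bound at $r/2$ only weakens the inequality, we get $T(r) \leq c_\alpha \cdot s(r) + 2 T(r/2)$, which is precisely the recurrence defining $\cS_d$. A straightforward induction on $r$ (base case $r < d$, inductive step using $T(r_i) \leq \cS_d(r/2)$) then yields $T(r) \leq \cS_d(r)$, and in particular $|\comp{A}| \leq \cS_d(|V|)$.

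\textbf{Main obstacle.} The routine calculation is the recurrence; the slightly delicate point is justifying the decoupling across different branches of the recursion, since one must be sure that separators removed at higher levels really do block any potential edge between base-case pieces lying in different subtrees. Handling this cleanly requires identifying the lowest common ancestor in the recursion tree and using the fact that at every split the two sides are edge-disjoint after removal of the separator — a condition that is preserved throughout the recursion because all later separators are added into $\comp{A}$ and never reintroduce connectivity.
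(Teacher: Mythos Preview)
Your proposal is correct and follows essentially the same argument as the paper: recursively separate $G$ using balanced separators until every remaining piece has size $<d$, collect the separators into $\comp{A}$, and bound $|\comp{A}|$ by the recurrence defining $\cS_d$. Your treatment is in fact slightly more careful than the paper's, which leaves the decoupling of the base-case pieces implicit, whereas you spell out the lowest-common-ancestor argument.
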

\begin{proof}
	Let $W \subset V$, then we define $\text{cost(W)}$ as the size of the smallest set $J_W \subset W$ such that $K \equiv W \setminus J_W$ is a union of disjoint subsets of size strictly less than $d$. $J_W$ might not be uniquely defined, but as we are only interest in its size, there is no loss of generality.
	
	Now consider a separator $S_W$ of the subgraph induced by $W$. This separator provides us with a partition $W_L \cup S_W \cup W_R$.
	
	It is then easy to verify that $\text{cost}(W) \leq s_G(|W|) + \text{cost}(W_L) + \text{cost}(W_R)$. Indeed, since $W_R$ and $W_L$ are disjoint, then $K_{W_R} \subset W_R$ and $K_{W_L} \subset W_L$ are disjoint. Therefore $K_{W_R} \cup K_{W_L}$ is a union of disjoint subsets, all of which are of size strictly less than $d$. This then gives $\text{cost}(W) \leq |W \setminus K_{W_R} \cup K_{W_L}| = |S_W| + |J_{W_L}| + |J_{W_R}| = |S_W| + \text{cost}(W_L) + \text{cost}(W_R)$. Further, by the definition of the separation profile, we have $\text{cost}(W) \leq s_G(|W|) + \text{cost}(W_L) + \text{cost}(W_R)$.
	
	This upper bound on $\text{cost}(W)$ is not very tractable and cannot be unraveled in a practical way. To solve this issue, we define $\cT(r) \equiv \max_{W \subseteq V, |W| \leq r} \text{cost}(W)$, and we have $\cT(r) \leq s_G(r) + \cT(|W_L|) + \cT(|W_R|)$. It is then possible to verify -- see Lemma \ref{lem:solrecursion} -- that one has $\cT(r) \leq S_d(r)$, where $S_d(r)$ is defined as in Definition \ref{def:cS}.
	
	We can then take $|\comp{A}| \leq \text{cost}(V) \leq \cT(n) \leq \cS_d(n)$.
\end{proof}

We can summarize the general idea of our results as follows.
Note that to find large $A,B$, given an easily separable graph, we can recursively separate it to obtain small correctable regions, which will be $A$.
Then $G \setminus A$ can be recursively separated anew, yielding $B$.
See Figure \ref{fig:recursive-sep}.

\begin{figure}[h]
	\centering
	\includegraphics[width=\columnwidth]{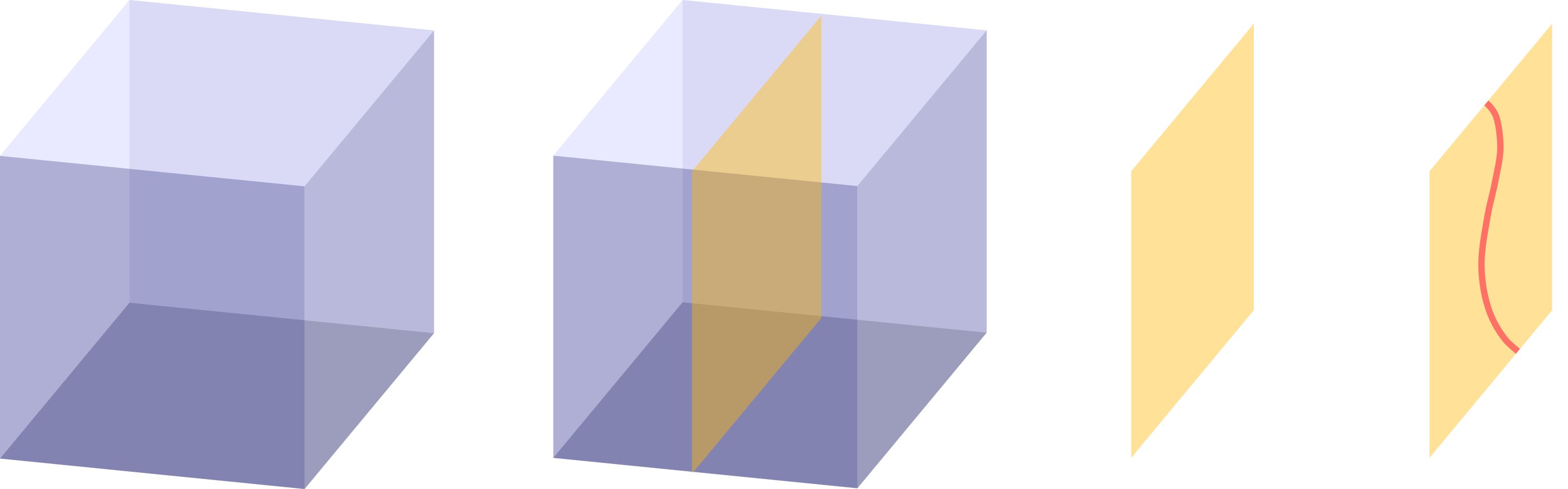}
	\caption{Graphical representation of our argument.
		The blue cube can be separated into two smaller correctable elements, which will be $A$.
		Then the separator, the yellow plane, can be separated again yielding $C$, the red line, and $B$ what remains of the yellow plane.}
	\label{fig:recursive-sep}
\end{figure}

\begin{lemma}
	\label{lem:sepbpt}
	Let $\cC$ be a code on $n$ qubits and $G = G(\cC)$ be an associated connectivity graph.
	If $\cS_d$ is defined as in \ref{def:cS}, then we have the bound $k \leq \cS_d \circ \cS_d(n)$.
\end{lemma}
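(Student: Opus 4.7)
My approach is to produce a tripartition $V = A \sqcup B \sqcup C$ of the qubits such that both $A$ and $B$ are correctable and $|C|$ is small, so that Lemma \ref{lem:bptabc} immediately gives the bound. The plan is to obtain $A$ by one application of Lemma \ref{lem:recpartition} to $G$, and then obtain $B$ by a second application of Lemma \ref{lem:recpartition} to the induced subgraph on the leftover set $\comp{A}$.

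\textbf{Step 1: construct $A$.} Apply Lemma \ref{lem:recpartition} to $G$ with the parameter $d$. This yields a partition $V = A \sqcup \comp{A}$ such that $A$ is a disjoint union of \emph{decoupled} subsets $\{U_i\}$, with $|U_i| < d$ for every $i$, and $|\comp{A}| \leq \cS_d(n)$. Because each $U_i$ has fewer than $d$ qubits, no nontrivial logical operator can be supported on it, so each $U_i$ is individually correctable. The Union Lemma (Lemma \ref{lem:bptunion}) then implies that $A$ itself is correctable.

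\textbf{Step 2: construct $B$.} Consider the induced subgraph $G[\comp{A}]$. Since $s$-separability is defined for all subgraphs (Definition \ref{def:alphasep}), $G[\comp{A}]$ is also $s$-separable. Apply Lemma \ref{lem:recpartition} to $G[\comp{A}]$ (again with parameter $d$) to obtain $\comp{A} = B \sqcup C$, where $B$ is a union of subsets each of size strictly less than $d$ that are decoupled \emph{in $G[\comp{A}]$}, and $|C| \leq \cS_d(|\comp{A}|) \leq \cS_d(\cS_d(n))$. The key observation is that decoupling in $G[\comp{A}]$ implies decoupling in $G$: the relevant condition is the absence of edges between the clusters in the connectivity graph, and any edge between two subsets of $\comp{A}$ in $G$ is automatically an edge in $G[\comp{A}]$. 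Hence the clusters making up $B$ are decoupled in $G$, and exactly as in Step 1 each is correctable (being smaller than $d$), so the Union Lemma makes $B$ correctable.

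\textbf{Step 3: conclude.} We now have $V = A \sqcup B \sqcup C$ with $A$ and $B$ both correctable and $|C| \leq \cS_d \circ \cS_d(n)$. Lemma \ref{lem:bptabc} applied to this tripartition gives $k \leq |C| \leq \cS_d \circ \cS_d(n)$, as claimed. The only subtle point in the argument is the transfer of the decoupling property from $G[\comp{A}]$ back to $G$; everything else is a direct chaining of the preceding lemmas, so I do not anticipate any real obstacle beyond spelling out that observation clearly.
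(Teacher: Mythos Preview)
Your proposal is correct and follows essentially the same route as the paper: apply Lemma~\ref{lem:recpartition} once to get a correctable $A$, apply it again to the induced subgraph on $\comp{A}$ to get a correctable $B$, and finish with Lemma~\ref{lem:bptabc}. Your explicit remark that decoupling in $G[\comp{A}]$ transfers to decoupling in $G$ (since induced subgraphs retain all edges between their vertices) is a point the paper leaves implicit, so your write-up is if anything slightly more careful.
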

\begin{proof}
	From Lemma \ref{lem:recpartition}, we can recursively separate the connectivity graph to find $\comp{A}$ such that $|\comp{A}| = \cS_d(|V|) = \cS_d(n) $.
	Since $\comp{A} \subset G$ then $s_G$ also bounds the size of the separators in the subgraph induced by $ \comp{A}$.
	Applying Lemma \ref{lem:recpartition} again to $ \comp{A}$, we can find sets $B$ and $\comp{B}$ such that $|\comp{B}| = \cS_d(|\comp{A}|) = \cS_d \circ \cS_d(n)$. 
	
	Next, note that $A$ is a union of disjoint subsets $\{V_\bullet\}$.
	By construction of the connectivity graph, this implies they share no generator, and are decoupled sets as per definition \ref{def:decoupled}.
	Since every subset $V_\bullet$ has size less than $d$, we can use the Union Lemma \ref{lem:bptunion} to show that $A$ is correctable.
	Similarly, $B$ is correctable.
	
	From Lemma \ref{lem:bptabc}, with $\comp{B}$ as the region $C$, we obtain $k \leq \cS_d \circ \cS_d(n)$.
\end{proof}

Ideally, one could then simply plug in $s^\alpha_G$ in Definition \ref{def:cS}, and obtain a closed form for the formula with Lemma \ref{lem:solrecursion} in Appendix \ref{app:closed-form-recurrence}.
However, the separation profile may not always be a polynomial.
In these instances, we can generalize this result with the following theorem.

\begin{theorem}
	\label{thm:dimbndcmax}
	Let $\scrC = \{\cC_n\}$ be a family of $\dsl n,k,d \dsr$ quantum codes with connectivity graphs $\cG = \{G_n\}$, and their associated separation functions $\{s_n\}_n$.
	Let $c_{\max}(n)$ and $r_0(n)$ be defined as in Definition \ref{def:cminmax}, then 
	\begin{align*}
		k = O(d^{2(c_{\max}(n)-1)}n \log(n)^2)~.
	\end{align*}
	
	Further, if we have $c_{\max}(n) \leq c_0$ for a constant $c_0 \in (0,1)$, then
	\begin{align*}
		k = O(d^{2(c_{\max}(n)-1)}n)~.
	\end{align*}
\end{theorem}

\begin{proof}
	By definition, we have $s_n(r) \leq r^{c_{\max}(n)}$. Applying Lemma \ref{lem:sepbpt} with Lemma \ref{lem:solrecursion} gives the desired result.
\end{proof}

This theorem tells us that a code with high $k,d$ has to have a very dense subgraph. 
Indeed, as we have $k \in \tilde{O}(n/d^{1-c_{\max}(n)})$, then for $k \sim n$, and $d \rightarrow \infty$, we need $c_{\max}(n)$ close to $1$.
Equivalently, for a code to achieve constant rate and growing distance, it needs to contain a dense subgraph.
Furthermore, this subgraph has to have density at least $d$.

In many cases, the family of connectivity graphs can be verified to satisfy $s_n(r) \in O(r^c)$, with $c < 1$, this is notably the case for many classes of local graphs \cite{teng1991phd}, as they exhibit a rather limited structure \cite{teng1998combinatorial}.
In that case the following corollary will be useful.

\begin{corollary}
	\label{cor:dimbndpoly}
	Let $\scrC = \{\cC_n\}$ be a family of $\dsl n,k,d \dsr$ quantum codes with connectivity graphs $\cG = \{G_n\}$, and associated separation profiles $\{s_n\}_n$.
	If $s_n(r) \in O(r^c)$ for some $c < 1$, then
	\[
	k =  O(d^{2(c-1)}n)~.
	\]
\end{corollary}
\begin{proof}
	The proof follows from applying Lemma \ref{lem:sepbpt} with Lemma \ref{lem:solrecursion}.
\end{proof}

As previously mentioned, there are many classes of graphs for which the value of $c$ is known \cite{kawarabayashi2010separator,kisfaludi2020hyperbolic, dujmovic2015genus, gladkova2020separation, coz2020separation, lipton1979separator}, and it can be estimated for arbitrary families in polynomial time \cite{orecchia2012lorenzo}..

Although this result allows us to formulate a bound in purely graph-theoretic terms, we cannot rederive the Bravyi-Poulin-Terhal bound.
Indeed, they make use of the Expansion Lemma to obtain regions $V_\bullet$ of size $d^2$, instead of $d$.
This optimization cannot be carried out here as we do not have a guarantee on the boundary of the regions $V_\bullet$ we create.
In other words, we make no assumptions on the structure of the subgraph that we obtain from the separator.
As a consequence, our bound applies to all codes and not just LDPC codes.
As we will see when dealing with $D$-dimensional hyperbolic codes, some spaces induce small separators \emph{and} large boundaries so we do not expect the Expansion Lemma to always yield a tighter bound.
These are highly nonlocal codes and may be able to bypass the restrictions on purely local codes.
These spaces might be expected to yield a poor distance \emph{and} better tradeoffs.
Without being able to pin down the structure of the subgraph induced by the separator, we cannot derive tighter bounds on the rate-distance tradeoff.

\subsection{Bounds on transversal gates}
\label{subsec:transversal}

In this section, we prove that transversal gates on quantum LDPC codes can only implement a limited set of transformations on the encoded information.
We begin with by recalling a result from \cite{bravyi2013classification} that we will build on.
\begin{lemma}
	\label{lem:bravyikoenig}
	Let $\cC$ be a code such that its set of qubits can be partitioned as $Q = \cup_{i = 1}^{i = R+1} \Lambda_i$, where each $\Lambda_i$ is correctable.
	The transversal gates on this code are limited to the $R$-th level $\cK^{(R)}$ of the Clifford hierarchy.
\end{lemma}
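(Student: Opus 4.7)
The plan is to proceed by induction on $R$, using the recursive characterization that $\overline{V}$ lies in $\cK^{(R)}$ iff $\overline{V}\,\overline{P}\,\overline{V}^\dagger \in \cK^{(R-1)}$ for every logical Pauli $\overline{P}$. For the base case $R = 1$, a partition of $Q$ into two correctable regions $\Lambda_1, \Lambda_2$ gives, via Lemma~\ref{lem:bptabc} with $A = \Lambda_1$, $B = \Lambda_2$ and $C = \emptyset$, the bound $k \leq 0$; the code space is then one-dimensional and every transversal unitary acts as a phase, trivially in $\cK^{(1)}$.

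For the inductive step, assume the result for $R$ correctable regions covering the qubits, and suppose $Q = \Lambda_1 \sqcup \cdots \sqcup \Lambda_{R+1}$ with each $\Lambda_i$ correctable, where $V = V_1 \otimes \cdots \otimes V_n$ is a transversal gate preserving $\cC$. Fix a logical Pauli $\overline{P}$ and a representative $P \in \cL$. Applying Lemma~\ref{lem:cleaning} (the Cleaning Lemma) with $U = \Lambda_1$, we may replace $P$ by an equivalent representative of $\overline{P}$ supported on $\Lambda_1^c = \Lambda_2 \cup \cdots \cup \Lambda_{R+1}$; denote it again by $P$. Decomposing $V = V_{\Lambda_1} \otimes V_{\Lambda_1^c}$ and using that $P$ acts as identity on $\Lambda_1$,
\[
    V P V^\dagger = V_{\Lambda_1^c}\, P\, V_{\Lambda_1^c}^\dagger,
\]
an operator supported on $\Lambda_1^c$, which is now covered by the $R$ correctable regions $\Lambda_2, \ldots, \Lambda_{R+1}$.

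The remaining task is to argue that this residual conjugation, viewed as a logical action on $\cC$, lies in $\cK^{(R-1)}$, after which the inductive hypothesis closes the argument. The main technical obstacle is that $V_{\Lambda_1^c}$ on its own need not normalize the stabilizer group of $\cC$, so it is not immediately a transversal logical gate of a sub-code to which the inductive hypothesis can be applied. The resolution, following Bravyi and Koenig, is to pass to a disentangled picture: correctability of $\Lambda_1$ supplies a unitary acting on $\Lambda_1 \cup \bdry_+ \Lambda_1$ that decouples $\Lambda_1$ from the rest of the code, producing a reduced stabilizer code $\cC'$ on $\Lambda_1^c$ that inherits the partition into the $R$ correctable regions $\Lambda_2, \ldots, \Lambda_{R+1}$ and carries an induced transversal gate implementing the same logical action as $V$. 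Invoking the inductive hypothesis on $\cC'$ gives the desired conclusion.
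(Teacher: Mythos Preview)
The paper does not give a proof of this lemma; it is quoted from Bravyi--K\"onig \cite{bravyi2013classification} and used as a black box, so there is no paper argument to compare against. I will just assess your attempt.

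Your cleaning step is correct: with $P$ cleaned off $\Lambda_1$, the operator $VPV^\dagger$ is a transversal unitary preserving $\cC$ and equal to the identity on $\Lambda_1$. The gap is in how you close the induction. Your inductive hypothesis is about \emph{codes} whose qubits are partitioned into $R$ correctable regions, so to invoke it you pass to a reduced code $\cC'$ on $\Lambda_1^c$ and assert that $\Lambda_2,\ldots,\Lambda_{R+1}$ remain correctable for $\cC'$. This is precisely what fails. Correctability of $\Lambda_i$ for the code on $\Lambda_1^c$ means one can recover the logical state from $\Lambda_1^c\setminus\Lambda_i$, which is the same as asking that $\Lambda_1\cup\Lambda_i$ be correctable for $\cC$; nothing in the hypotheses guarantees that. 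Concretely, for the $\dsl 5,1,3\dsr$ code with $\Lambda_1=\{1,2\}$, $\Lambda_2=\{3,4\}$, $\Lambda_3=\{5\}$, each $\Lambda_i$ is correctable, but $\Lambda_1\cup\Lambda_2$ is not (its one-qubit complement cannot hold a weight-$3$ logical, so by Lemma~\ref{lem:cleaning} the four-qubit set must support one), and hence $\Lambda_2$ is \emph{not} correctable for the reduced code on $\{3,4,5\}$. The side claim that a decoupling unitary can be chosen with support in $\Lambda_1\cup\bdry_+\Lambda_1$ is also unjustified here, but that is secondary.

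The clean repair --- and how the Bravyi--K\"onig argument is actually organized --- is to keep the code $\cC$ fixed and induct instead on the number $m$ of regions on which a transversal logical unitary is supported: if $W$ preserves $\cC$, is transversal, and is the identity outside $m$ of the $\Lambda_i$, then its logical action lies in $\cK^{(m-1)}$. The base case $m=1$ is that a transversal logical unitary supported in a single correctable region acts as a scalar on $\cC$. For the step, clean the Pauli $P$ off one region $\Lambda_{i_1}$ in the support of $W$ and form the group commutator $K=WPW^{\dagger}P^{\dagger}$; then $K$ is transversal, preserves $\cC$, and is the identity both on $\Lambda_{i_1}$ and on every region outside the support of $W$, so it is supported on only $m-1$ regions. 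By induction $\overline{K}\in\cK^{(m-2)}$, whence $\overline{W}\,\overline{P}\,\overline{W}^{\dagger}=\overline{K}\,\overline{P}\in\cK^{(m-2)}\subseteq\cK^{(m-1)}$ and therefore $\overline{W}\in\cK^{(m-1)}$. Taking $m=R+1$ gives the lemma. No passage to a sub-code is needed, and the pairwise-union correctability obstruction never arises.
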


We are then in position to prove the following result.
\begin{lemma}
	\label{lem:septrans}
	Let $\cC$ be a code on $n$ qubits and $G = G(\cC)$ be an associated connectivity graph.
	Let $\cS_d$ be the function defined in Definition \ref{def:cS}, and let us denote $\cS_d$ composed $R$ times with itself as
	\[
	\circ_R \cS_d(n) = \underbrace{\cS_d \circ \cdots \circ \cS_d\,}_\text{$R$ times}(n)~.
	\]
	Then the tranversal gates on this code are limited to the $R$-th level of the Clifford hierarchy $\cK^{(R)}$ for the smallest $R$ satisfying
	\begin{align*}
		\circ_R \cS_d(n) < d~.
	\end{align*}
\end{lemma}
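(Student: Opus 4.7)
The plan is to apply Lemma \ref{lem:bravyikoenig}, which reduces the task to exhibiting a partition of the qubits $Q$ into $R+1$ correctable regions $\Lambda_1,\dots,\Lambda_{R+1}$. We will build such a partition by iterating Lemma \ref{lem:recpartition} on nested subgraphs, peeling off one correctable layer at each step.

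Concretely, I would proceed as follows. Apply Lemma \ref{lem:recpartition} to $G$ with parameter $d$ to obtain a partition $V = \Lambda_1 \sqcup \comp{\Lambda_1}$, where $\Lambda_1$ is a union of decoupled subsets each of size strictly less than $d$ and $|\comp{\Lambda_1}| \leq \cS_d(n)$. Each of these subsets has weight below the distance, hence supports no nontrivial logical operator, so each is correctable; the Union Lemma \ref{lem:bptunion} then gives that $\Lambda_1$ itself is correctable. Next, consider the induced subgraph $G[\comp{\Lambda_1}]$. Since any subgraph of an $s$-separable graph is $s$-separable, apply Lemma \ref{lem:recpartition} again, this time to $G[\comp{\Lambda_1}]$, to produce $\Lambda_2 \subseteq \comp{\Lambda_1}$ with the same structure and $|\comp{\Lambda_1} \setminus \Lambda_2| \leq \cS_d(|\comp{\Lambda_1}|) \leq \circ_2 \cS_d(n)$. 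Continue in this way, obtaining $\Lambda_1,\dots,\Lambda_R$ with $\bigl|V \setminus (\Lambda_1 \cup \dots \cup \Lambda_R)\bigr| \leq \circ_R \cS_d(n) < d$ by choice of $R$. Set $\Lambda_{R+1} := V \setminus (\Lambda_1 \cup \dots \cup \Lambda_R)$; since $|\Lambda_{R+1}| < d$, no nontrivial logical operator can be supported inside it, so it is correctable.

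The one place that needs care is verifying that each $\Lambda_i$ really is correctable in the original code $\cC$, not merely in some auxiliary code associated with the subgraph. The point is that Lemma \ref{lem:recpartition} applied inside $G[\comp{\Lambda_1} \cup \dots \cup \comp{\Lambda_{i-1}}]$ produces a decomposition of $\Lambda_i$ into subsets $\{V_\bullet\}$ with no edges between them in that induced subgraph. Two qubits share an edge in the connectivity graph of $\cC$ iff they appear in the support of a common stabilizer generator, and this edge lies in the induced subgraph whenever both endpoints do. Consequently no generator overlaps two distinct $V_\bullet$'s, so the $V_\bullet$'s satisfy Definition \ref{def:decoupled} in the original code $\cC$, and the Union Lemma applies verbatim to give correctability of $\Lambda_i$ in $\cC$.

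With $Q = \Lambda_1 \sqcup \dots \sqcup \Lambda_{R+1}$ a partition into $R+1$ correctable sets, Lemma \ref{lem:bravyikoenig} immediately yields that transversal gates on $\cC$ lie in the $R$-th level of the Clifford hierarchy $\cK^{(R)}$. I do not expect a serious obstacle here; the only subtlety, as noted, is the bookkeeping ensuring that decoupling in each successively smaller induced subgraph transfers to decoupling in the original code, which follows directly from the definition of the connectivity graph.
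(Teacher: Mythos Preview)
Your proposal is correct and follows essentially the same approach as the paper: iterate Lemma \ref{lem:recpartition} on the shrinking complements to peel off correctable layers $\Lambda_1,\dots,\Lambda_R$, leaving a final piece $\Lambda_{R+1}$ of size below $d$, and then invoke Lemma \ref{lem:bravyikoenig}. Your write-up is in fact more careful than the paper's, which does not spell out why decoupling in the induced subgraph $G[\comp{\Lambda_{i-1}}]$ yields decoupling in the original code; your observation that induced subgraphs retain all edges between their vertices is exactly the right justification.
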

\begin{proof}
	We proceed iteratively.
	The recursive separation of Lemma \ref{lem:recpartition} acting on $G$ yields partitions $A_1$ and $\comp{A_1}$, where $A_1$ is correctable, and $|\comp{A_1}| \leq S_d(n)$.
	For $i > 1$, we separate $ \comp{A_{i-1}}$ into $A_i$ and $\comp{A_i}$ (note that $\comp{A_i}$ is the complement of $A_i$ within $A_{i-1}$ and not the entire graph $G$).
	We repeat this process $R$ times until $\comp{A_R} $ is correctable.
	In doing so we get $R+1$ correctable regions which is achieved when $\circ_R S_d (n) < d$.
\end{proof}

\begin{theorem}
	\label{thm:transpoly}
	Let $\scrC = \{\cC_n\}$ be an $\dsl n,k,d\dsr$ code family and $\cG = \{G_n\}$ be an associated connectivity graph, and $\{s_n\}_n$ the associated separation profiles.
	Suppose that $s_n(r) = O(r^c)$ for $c \in (0,1)$, and $d = \Theta(n^\alpha)$ for $\alpha \in (0,1)$.
	Then transversal gates on $\cC$ belong to $\cK^{(R)}$ where $R = \lceil \frac{1 - \alpha}{\alpha (1-c)}\rceil$.
\end{theorem}
\begin{proof}
	Since $s_n(r) = O(r^c)$, we known from Appendix \ref{app:closed-form-recurrence} that $S_d(n) \leq \sigma d^{c-1}n$ for some constant $\sigma > 0$, for sufficiently large $n$. 	
	The condition from Lemma \ref{lem:septrans} can be satisfied if 
	\[
	(\sigma d^{c-1})^R n < d~.
	\]
	Equivalently, taking the logarithm base $n$, we get
	\[
	R[\log_n(\sigma) + (c-1)\log_n(d)] < \log_n (d/n)~.
	\]
	Rearranging terms, we get
	\[
	R > \frac{1-\log_n(d)}{(1-c)\log_n(d) - \log_n(\sigma)}~.
	\]
	As we assume that $d \geq \sigma' n^\alpha$ for sufficiently large $n$, it is sufficient to satisfy
	\[
	R > \frac{1-\log_n(\sigma' n^\alpha)}{(1-c)\log_n(\sigma' n^\alpha)}~.
	\]
	Or, in the limit $n \to \infty$ 
	\[
	R > \frac{1 - \alpha}{\alpha (1-c)}~.
	\]
	Therefore $R = \lceil \frac{1 - \alpha}{\alpha (1-c)} \rceil$ is sufficient to obtain $R+1$ disjoint cleanable regions as stipulated in Lemma \ref{lem:septrans}.
\end{proof}

\textbf{Remarks:} In the case of $D$-dimensional quantum codes, it can be compared to the results of Pastawski and Yoshida \cite{pastawski2015fault}.
From \cite{miller1991unified,miller1997separators,teng1991phd}, a local graph in $\bbR^D$ satisfies $s_G(r) \in O(r^c)$ where $c = 1- 1/D$.
This then implies that
\begin{align*}
	R \leq \left\lceil \frac{1-\alpha}{\alpha}D \right\rceil~.
\end{align*}

While the bound from Pastawski and Yoshida, can be re-expressed to read
\begin{align*}
	R \leq \left\lceil (1-\alpha)D  + 1 \right\rceil~.
\end{align*}

There are instances where our bound may yield slightly better results; we thank Sam Cree for pointing this out.
For example, consider $D=3$ and $\alpha = 0.6 < 1-1/D = 2/3$.
Our bound implies that transversal gates must lie in the Clifford group, whereas the Pastawski-Yoshida bound implies that transversal gates are only contained in the third level of the Clifford hierarchy.
Of course, it is unclear whether such a code can be constructed.

We cannot reproduce the Pastawski-Yoshida bound for the same reason that we cannot reproduce the Bravyi-Poulin-Terhal bound:
in \cite{bravyi2013classification}, the separation of the $D$-dimensional lattice is better than the separation into multiple sectors we have based on $\cS_d$.

Indeed the interest of the results presented here lies within more exotic spaces and constructions where the lattice-based approach of the numerous no-go theorems in $\bbR^{D}$ breaks down.

We also mention a general limitation on obtaining practical codes.
Let $\scrC$ be a family of quantum LDPC codes with $s_n$ separation profiles.
Suppose $s_n(r) \propto r^c$ and that we can achieve $d = \Theta(n^c)$.
This implies that we can at best implement gates in $\cK^{(R)}$, where $R = \lceil 1/c \rceil$.
In particular, if $c > 1/2$, we are limited to Clifford gates.
This implies that there is a tradeoff between the distance and our ability to perform transversal gates even without the restriction of locality.

It is also interesting to note that it is shown in Burton and Brown \cite{burton2020limitations} that all hypergraph product codes are limited to the Clifford hierarchy, regardless of their separation profile. It raises the question of whether these codes offer the best trade-off between connectivity and versatility of transversal gates.
\section[]{Quantum codes in $\bbH^D$}
\label{subsec:hyperbolic}

Several constructions of quantum codes are naturally expressed through hyperbolic geometry \cite{delfosse2013tradeoffs,breuckmann2020singleshot,freedman2002z2}.
We use our results to study limitations of codes embeddable in $\bbH^D$ and on hyperbolic surfaces.
We demonstrate that $D$-dimensional hyperbolic codes have distance upper bounded by $O(n^{(D-2)/(D-1)})$, reminiscent of Euclidean codes in $(D-1)$-dimensions.
Interestingly, the tradeoff between the code dimension and distance is the same as that for local codes in $\bbR^D$.
Our results follow from some recent work by Kisfaludi-Bak \cite{kisfaludi2020hyperbolic} who proved that certain classes of hyperbolic graphs have bounded separators.

We begin by comparing our work with previous results to provide some intuition on what follows.
Recall that the Bravyi-Terhal and Bravyi-Poulin-Terhal results are statements on the geometry of $\bbR^D$.
A ball of area $A$ in the Euclidean plane can be split into two equally-sized half balls by a line segment of length $\sqrt{A}$.
As a consequence, one expects a graph nicely embedded in such a ball to have a separator of size $O(\sqrt{A})$.
Similarly, a ball of area $A$ in $2$-dimensional hyperbolic space has a diameter of size $O(\log(A))$ in the limit of large balls.
We therefore expect the hyperbolic plane to perform poorly in terms of distance.

Similarly, this geometric consideration can be used to justify why the hyperbolic space might be particularly well suited to our technique.
As previously noted after Corollary \ref{cor:dimbndpoly}, Theorem \ref{thm:dimbndcmax} does not allow us to rederive the Bravyi-Poulin-Terhal bound; we cannot guarantee that the regions we create through the recursive separation have small boundaries.
However, we do not expect this to be relevant in hyperbolic space since, due to the isoperimetric inequality, the boundary of a region is proportional to its volume in the limit of large volumes. 

To formalize this correspondence between geometry and graphs, we need a precise definition of what it means for a graph to be nicely embedded in such a space.
We expect the density of vertices not to diverge, and two vertices linked by an edge should not be too far apart.
This leads to definition \ref{def:localgraph}.

Let $(M,d)$ denote a metric space $M$ equipped with a metric $d: M \times M \to \bbR_{\geq 0}$.
Denote by $\cB(x,w) = \{y \in M : d(x,y) \leq w\}$ the ball of radius $w$ around the point $x \in M$. 

\begin{definition}
	\label{def:localgraph}
	A graph $G = (V,E)$ is said to be $(\rho,w)$-local on a metric space $(M,d)$ if there exists a map $\eta : V \rightarrow M$ such that
	\begin{enumerate}
		\item $(u,v) \in E \implies d(\eta(u),\eta(v)) \leq w$.
		\item $\forall x \in M$, let $\cB^{\sharp}(x,w)  = \{v \in V : \eta(v) \in \cB(x,w)\}$ be the (possibly empty) preimage of a ball.
		Then $\forall x \in M, |\cB^{\sharp}(x, 2w)| \leq \rho$.
	\end{enumerate}
\end{definition}

A recent result by Kisfaludi-Bak \cite{kisfaludi2020hyperbolic} demonstrates that $(\rho,w)$-local graphs embedded in $\bbH^D$ have small separators.
We begin by repeating Theorem 2 from \cite{kisfaludi2020hyperbolic} as it applies to this class of graphs -- see Section \ref{subsec:clarification-kisfaludi} for details.
\begin{lemma}
	\label{thm:kisbak}
	Let $D \geq 2$, let $G$ be $(\rho,w)$-local in $\bbH^D$ .
	Then $s_G^{(D-1)/D}(r) = O(f(r))$, where
	\begin{enumerate}[label=(\roman*)]
		\item if $D=2$, then $ f(r) = O(\log(r))$, and
		\item if $D\geq 3$, then $ f(r) = O(r^{(D-2)/(D-1)})$.
	\end{enumerate}
\end{lemma}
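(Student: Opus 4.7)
The plan is to obtain this lemma as a direct consequence of Theorem~2 of \cite{kisfaludi2020hyperbolic}. The first step is to verify that the notion of a $(\rho,w)$-local graph in Definition \ref{def:localgraph} coincides, up to a constant rescaling of $w$, with the class of embedded uniform ball graphs in $\bbH^D$ for which Kisfaludi-Bak prove their separator theorem: Condition (1) of Definition \ref{def:localgraph} provides the edge-length constraint and Condition (2) provides the bounded packing density, so the correspondence is direct.

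Even though the proof then essentially reduces to a citation, it is worth recording the geometric picture driving the bound. The defining feature of $\bbH^D$ is exponential volume growth: a ball of radius $R$ has volume $\Theta(e^{(D-1)R})$, and its boundary sphere has $(D-1)$-dimensional measure of the same order once $R$ is large. Combined with Condition (2) of Definition \ref{def:localgraph}, any subgraph $G' \subseteq G$ of size $r$ is mapped into a region of hyperbolic volume $\Omega(r)$, and can therefore be enclosed in a hyperbolic ball of radius $R_{\max} = O(\log r)$.

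The separator itself would then be produced by locating a concentric sphere of some radius $R_0 < R_{\max}$ whose interior and exterior each contain at most a $\frac{D}{D+1}$-fraction of $\eta(V(G'))$. By Condition (1) of Definition \ref{def:localgraph}, removing every vertex of $G'$ whose image lies within distance $w$ of this sphere breaks all edges crossing the sphere. Using Condition (2) a second time, the number of removed vertices is at most a constant times the $(D-1)$-dimensional hyperbolic measure of the chosen sphere. For $D \geq 3$ this is $O(e^{(D-2)R_0}) = O(r^{(D-2)/(D-1)})$, while for $D = 2$ it is simply the circumference $O(R_0) = O(\log r)$, matching the two cases of the lemma.

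The main technical obstacle, which I would import from \cite{kisfaludi2020hyperbolic} rather than reprove, is showing that such a balanced sphere with a thin $w$-shell can always be chosen simultaneously. This requires a randomized argument that averages over candidate centers $x$ and radii $R_0$, exploiting the rapid exponential growth of $\bbH^D$ to guarantee that most candidate spheres have few preimage vertices in their $w$-neighborhood. Once that statement is in hand, the translation to the $(\rho,w)$-local language used here is immediate.
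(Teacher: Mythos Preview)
Your overall strategy---reduce to Kisfaludi-Bak's separator theorem and then carry the bound over---is exactly what the paper does, and your geometric sketch of why the sphere separator works is a helpful heuristic. However, the first step of your reduction contains a real gap.

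You assert that $(\rho,w)$-locality ``coincides, up to a constant rescaling of $w$,'' with Kisfaludi-Bak's class of (noisy) uniform ball graphs. It does not. In Definition~\ref{def:localgraph}, Condition~(1) is the one-sided implication $(u,v)\in E \Rightarrow d(\eta(u),\eta(v))\le w$: edges are short, but nearby vertices need not be joined. Kisfaludi-Bak's $\nubg(\sigma,\nu)$ class requires the opposite implication, $d(\eta(u),\eta(v))<2\sigma \Rightarrow (u,v)\in E$; every sufficiently close pair \emph{must} be an edge. A $(\rho,w)$-local graph can therefore fail to be a uniform ball graph (take three mutually close vertices connected only as a path), so you cannot invoke Theorem~2 of \cite{kisfaludi2020hyperbolic} on $G$ directly. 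Your Condition~(2) remark is also off: the packing bound is a feature of $(\rho,w)$-locality, not of the NUBG definition.

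The paper patches this in Appendix~\ref{subsec:clarification-kisfaludi} by a one-line supergraph argument: given the $(\rho,w)$-local graph $G$, add an edge between every pair of vertices at distance at most $w$ to obtain $G_{w/2}$, which is genuinely $\nubg_{\bbH^D}(w/2,2)$. The density bound $\rho$ ensures this adds only boundedly many edges per vertex. Kisfaludi-Bak's theorem now applies to $G_{w/2}$, and since $G$ is a spanning subgraph of $G_{w/2}$, any vertex separator of $G_{w/2}$ is automatically a separator of $G$. This closes the gap with essentially no extra work; once you insert it, your proposal matches the paper's.
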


From our previous results, we can then prove the following theorem.
\begin{theorem}
	If $\scrC = \{\cC_n\}$ is a family of $\dsl n,k,d \dsr$ LDPC codes such that the corresponding connectivity graphs $\cG = \{G_n\}$ are $(\rho,w)$-local in $\bbH^D$.
	Then we have the bounds
	\begin{enumerate}[label=(\roman*)]
		\item if $D=2$, then $d = O(\log(n))$ and $k \frac{d^2}{\log(d)^2} = O(n)$, and
		\item if $D\geq 3$, then $d = O(n^{(D-2)/(D-1)}) $ and $k d^{\frac{2}{D-1}} = O(n)$.
	\end{enumerate}
\end{theorem}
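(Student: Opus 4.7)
The plan is to instantiate the general connectivity-based bounds of Theorem \ref{thm:distancebnd} and Theorem \ref{thm:dimbnd} using the hyperbolic separator bounds from Lemma \ref{thm:kisbak}. Since $(\rho,w)$-locality bounds the degree of the connectivity graph (each qubit lies in a ball of radius $w$ containing at most $\rho$ other vertices), every such code family is automatically LDPC-compatible with the hypotheses of our main theorems, and $\cG$ inherits the separator bounds of Lemma \ref{thm:kisbak} word for word.

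For the distance bounds, I would simply substitute. In the case $D \geq 3$ we have $s(r) = O(r^{(D-2)/(D-1)})$, so Theorem \ref{thm:distancebnd} gives $d = O(n^{(D-2)/(D-1)})$ at once. In the case $D = 2$ we have $s(r) = O(\log r)$; since $\log r = O(r^c)$ for every $c > 0$, Theorem \ref{thm:distancebnd} combined with Lemma \ref{lem:twdist} and Lemma \ref{lem:twandsep} yields $d = O(\log n)$ directly.

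For the dimension bounds, the case $D \geq 3$ is immediate from Corollary \ref{cor:dimbndpoly} with $c = (D-2)/(D-1)$, giving $kd^{2(1-c)} = kd^{2/(D-1)} = O(n)$. The case $D = 2$ is the interesting one, and it is where I expect the main (minor) obstacle to sit, because $s(r) = O(\log r)$ does not fit the polynomial template of Corollary \ref{cor:dimbndpoly}. The plan is to go back to Lemma \ref{lem:sepbpt} and solve the recurrence from Definition \ref{def:cS} directly for $s(r) = O(\log r)$. Unrolling $\cS_d(r) = c_\alpha \log r + 2\cS_d(r/2)$ with the halting condition $\cS_d(t)=0$ for $t<d$ gives, at level $i$, a contribution $2^i \cdot c_\alpha \log(r/2^i)$, and the sum over $i = 0,1,\dots,\log_2(r/d)$ is dominated by its last term, yielding $\cS_d(r) = O\!\bigl((r/d)\log d\bigr)$. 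Composing twice as in Lemma \ref{lem:sepbpt},
\begin{equation*}
k \;\leq\; \cS_d \circ \cS_d(n) \;=\; O\!\Bigl(\tfrac{1}{d}\log(d) \cdot \tfrac{n}{d}\log(d)\Bigr) \;=\; O\!\Bigl(\tfrac{n \log(d)^2}{d^2}\Bigr),
\end{equation*}
which rearranges to $kd^2/\log(d)^2 = O(n)$, as desired.

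The only real step requiring care is the recurrence calculation in the $D=2$ case; everything else is a direct substitution into Theorem \ref{thm:distancebnd} and Corollary \ref{cor:dimbndpoly}. I would present the proof by first stating that $\cG$ is $s$-separable with the $s$ given by Lemma \ref{thm:kisbak}, then handle the two distance bounds in one line each, then treat the dimension bound for $D \geq 3$ by citing Corollary \ref{cor:dimbndpoly}, and finally dedicate a short paragraph to solving the logarithmic recurrence for $D=2$.
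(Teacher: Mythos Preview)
Your proposal is correct and follows essentially the same approach as the paper: apply Lemma~\ref{thm:kisbak} to get the separator, then invoke Theorem~\ref{thm:distancebnd} (via Lemmas~\ref{lem:twdist} and~\ref{lem:twandsep}) for the distance, Corollary~\ref{cor:dimbndpoly} for the dimension when $D\geq 3$, and solve the recurrence of Definition~\ref{def:cS} with $s(r)=O(\log r)$ to obtain $\cS_d(r)=O\bigl((r/d)\log d\bigr)$ and hence $k=O\bigl(n\log(d)^2/d^2\bigr)$ when $D=2$. The paper's proof is identical in structure but simply asserts the value of $\cS_d$ in the logarithmic case without writing out the unrolling you provide.
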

\begin{proof}
	The distance bounds are a trivial application of Theorem \ref{thm:distancebnd}.
	For the 2D case, one finds $\cS_d(r) = O(r\frac{\log(d)}{d})$.
	Hence $k = O(n \frac{\log(d)^2}{d^2})$ from Theorem \ref{lem:sepbpt}.
	In the $D \geq 3$ case, we have $k = O(d^{-\frac{2}{D-1}}n)$ from Corollary \ref{cor:dimbndpoly}.
\end{proof}

We can see from this result that the distance of the $D$-dimensional hyperbolic codes for $D \geq 3$ obeys the same upper bound as $(D-1)$-dimensional Euclidean local codes.

Note that these results do not apply to hyperbolic manifolds of the form $\bbH^D/\Gamma$ as quotienting by $\Gamma$ can change the size of the separator completely.
A straightforward consequence is that a graph on a 2-torus does not necessarily have a $O(\log(n))$ separator.
Fortunately, in the case of hyperbolic surfaces, we can still bound the separator as a function of the genus.
We turn next to these codes.

\subsection[]{Surfaces of genus $g$}
The class of 2D topological codes has generated a wealth of literature and is among the most likely candidates for physical implementation in the near future.
One could attribute this popularity to their relative ease of implementation and tractable properties.
Unfortunately, due to a result by Delfosse, these codes are known to be strongly limited and are constrained by $kd^2 = O(\log(k)^2n)$ \cite{delfosse2013tradeoffs}.
Here we generalize this bound for local codes on an arbitrary surface of genus $g$, denoted $\Sigma_g$, and we prove that for fixed $g$, $d = O(\sqrt{n})$, which is saturated by the surface code.

Topological graph theory provides a very natural bridge between graphs embeddable on a surface and their separability \cite{gilbert1984separator,djidjev1995planarization, kelner2006spectral,aleksandrov1996linear}.
We employ a result due to Dujmovi\'c, Eppstein and Wood \cite{dujmovic2015genus} which states that graphs embedded in $\Sigma_g$ with planarity $p$ have bounded separators.
A graph is said to be $p$-planar if it can be drawn with at most $p$ crossings on each edge.
\cite{dujmovic2015genus} proved that any graph that can be embedded in $\Sigma_g$ that is $p$-planar has a separator of size $O(\sqrt{(g+1)(p+1)n})$.

Observe that all $\rho$-local graphs that can be embedded on $\Sigma_g$ must be $t$-planar for some constant $t$.
This is because:
\begin{enumerate}
	\item Every edge $(a,b)$ can be contained within a ball $B$ of radius $w$.
	\item For any edge $(c,d)$ crossing $(a,b)$, $c$ and $d$ must be at a distance less than $w$ to some point $p$ in the ball, which is at a distance at most $w$ from $a$ and $b$. 
	\item Since the number of points at a distance less than $2w$ is bounded by $\rho$, there can be at most $\rho$ crossings
\end{enumerate}
Together, these observations imply that any $\rho$-local graph on $\Sigma_g$ is $t$-planar for some constant $t$.
This implies the following result.

\begin{theorem}
	\label{thm:genusbound}
	Let $\scrC = \{\cC_n\}$ be a family of $\dsl n,k,d \dsr$ LDPC codes such for every connectivity graph $G \in \cG = \{G_n\}$, G is $(\rho,w)$-local on a surface of genus $g$.
	We have the bounds
	\begin{enumerate}[label=(\roman*)]
		\item $d = O(\sqrt{g n})$, and
		\item $kd = O(g n)$.
	\end{enumerate}
\end{theorem}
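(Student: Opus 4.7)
The plan is to convert the $(\rho,w)$-local hypothesis into a separator bound for $\cG$ and then feed that into the distance and dimension machinery of Section \ref{sec:main}. Since the code family is LDPC, both $\rho$ and $w$ can be treated as constants in $n$. The three bullet-point observation preceding the theorem shows that every $G_n$ is $t$-planar on $\Sigma_g$ for some constant $t = t(\rho)$, and this property is hereditary: any subgraph $G' \subseteq G_n$ is still embedded in $\Sigma_g$ and still $t$-planar. Applying the Dujmovi\'c--Eppstein--Wood separator theorem to each such $G'$ yields a balanced separator of size $O\bigl(\sqrt{(g+1)(t+1)\,|G'|}\bigr)$. Hence $\cG$ is $s$-separable with
\[
    s(r) = C\sqrt{g\,r}
\]
for some constant $C$ depending only on $\rho$.

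Bound (i) now follows immediately from Theorem \ref{thm:distancebnd} applied with $c = 1/2$: we get $d = O(s(n)) = O(\sqrt{gn})$.

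For bound (ii), quoting Corollary \ref{cor:dimbndpoly} verbatim would give only $kd = O(n)$, since that statement absorbs the genus into the asymptotic constant. I would instead retrace the proof of Lemma \ref{lem:sepbpt}, keeping the factor $\sqrt{g}$ explicit. Unrolling the recurrence $\cS_d(r) = c_\alpha C\sqrt{gr} + 2\cS_d(r/2)$ until it is truncated at the threshold $r < d$ gives a geometric sum dominated by its last term, so
\[
    \cS_d(r) = O\bigl(C\sqrt{g}\cdot\sqrt{r}\cdot\sqrt{r/d}\bigr) = O\bigl(\sqrt{g/d}\cdot r\bigr).
\]
Composing with itself as in Lemma \ref{lem:sepbpt} then gives $k \leq \cS_d \circ \cS_d(n) = O(gn/d)$, which rearranges to $kd = O(gn)$.

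The argument is essentially a bookkeeping exercise rather than a technical challenge. The only real subtlety is threading the genus $g$ as a free parameter through the $\cS_d$ recursion rather than letting it be hidden in an $O(\cdot)$; this is exactly why I would redo the $\cS_d$ computation in place instead of citing Corollary \ref{cor:dimbndpoly} as a black box. Verifying the hereditary $t$-planarity of subgraphs (step one above) is likewise routine but is the crucial hypothesis that makes the Dujmovi\'c--Eppstein--Wood bound applicable to \emph{every} subgraph, which is what the $s$-separability framework demands.
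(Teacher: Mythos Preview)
Your proposal is correct and follows essentially the same route as the paper: establish $t$-planarity on $\Sigma_g$ from $(\rho,w)$-locality, invoke the Dujmovi\'c--Eppstein--Wood separator theorem to obtain $s(r)=O(\sqrt{gr})$ uniformly over subgraphs, then apply Theorem~\ref{thm:distancebnd} for (i) and Lemma~\ref{lem:sepbpt} for (ii). The paper states the theorem as an immediate consequence of the preceding discussion without spelling out the recursion; your explicit tracking of $\sqrt{g}$ through $\cS_d$ (rather than black-boxing Corollary~\ref{cor:dimbndpoly}) is exactly the right move to keep $g$ visible in the final bound, and your computation $\cS_d(n)=O(\sqrt{g/d}\,n)$, hence $\cS_d\circ\cS_d(n)=O(gn/d)$, is correct.
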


We can also use this result to think about implementations of good quantum codes.
We might wish to implement codes such that there are as few edge crossings as possible.
If we consider implementing codes on a flat surface, then we would need a significant number of edges overlapping.
Indeed the number of edges crossing would have to scale as $n$.
This, in turn, would mean that the code is no longer $(\rho,w)$-local.
\section{Conclusions}
In this paper, we have shown that there is an intimate relation between quantum codes and the graphs on which they are defined.
Given a code, we can obtain the connectivity graph from which we can infer properties of the associated quantum codes.
We have three main results.
First, we found that the distance of the quantum LDPC code is bounded by the size of the separator of the associated connectivity graph.
Second, we found that the code dimension of a code is bounded as a function of the size of the separator via a recurrence relation.
Third, we found that transversal gates can only implement a limited set of transformations depending on the connectivity of the graph.
Together, the first two results state that we have good quantum LDPC codes only when the connectivity graph contains an expander.

We explored the properties of codes embedded in $D$-dimensional hyperbolic space.
In particular we found that a local code in $D$-dimensional hyperbolic space obeys $ d = O(n^{(D-2)/(D-1)})$ and that for closed $2$-manifolds with genus $g$, local codes obey $d = O(\sqrt{gn})$.

These results raise many interesting questions.

\begin{enumerate}	
	\item Non-sparse graphs can often be well approximated by sparser graphs \cite{batson2013spectral, chekuri2014degree, althfer1993sparse, abraham2008nearly}. This naturally leads to the following question: can the bound on the distance from the treewidth be made independent from the maximum degree of the connectivity graph?
	
	\item Low-connectivity codes have poor performance, but do all codes with poor connectivity have low-dimensional local embeddings? \cite{abraham2018metric, abraham2005metric, sidiropoulos2017metric, abraham2009low,matsubayashi2015separator} Note that a strict locality requirement would be hard to satisfy: consider a connectivity graph that has the form of a $\delta$-regular tree, then this graph cannot be embedded locally, as a ball in the tree can grow much quicker than a ball in $\bbR^D$ for any constant $D$.
	
	\item High connectivity is necessary for good quantum codes.
	Can it be proven to be a sufficient condition given some minimal extra assumptions? 
	Can we distinguish sufficient and insufficient connectivity through another graph metric? For example some families of expander graphs have bounded book thickness \cite{dujmovic2016layout}, but this metric is not bounded for sparse graphs \cite{bart2006bounded}.
	
	\item The connectivity graph representation relies on selecting a particular basis for our generators, but there exist more algebraic representation of a graph \cite{spielman2007spectral}. 
	Can the results we present here be generalized to be basis independent, for example using spectral partitioning? \cite{spielman1996spectral} 
	
	\item The recursive separation method we use is rather naive. Is it possible to formulate a better one? \cite{federickson1987fast, henzinger1997faster,bansal2014minmax}
	
\end{enumerate}

We add that we use the same techniques as reference \cite{bravyi2010tradeoffs} to also prove bounds on the code dimension of classical codes based on the graph separator in Appendix \ref{app:classicalbound}.
We would like to highlight that in contrast to local codes, there are many basic open questions concerning quantum LDPC codes.
For a broad discussion on the subject, we point the interested reader to a review by Breuckmann and Eberhardt \cite{breuckmann2021ldpc}.

\section{Acknowledgements}
This paper is dedicated to the memory of David Poulin, a role model as a researcher and a mentor.
David inspired and encouraged us to explore fundamental questions in quantum error correction while simultaneously studying consequences for real-world implementations.
His presence will be missed.

We would like to thank Guillaume Duclos-Cianci for facilitating this collaboration; Stefanos Kourtis and Patrick Hayden for helpful discussions; Chris Chubb, Nicolas Delfosse, Anthony Leverrier, Noah Shutty and Christophe Vuillot for catching mistakes in an earlier draft and comments that helped improve the paper; and Sam Cree for a clear presentation of the Bravyi-Koenig bound which inspired section \ref{subsec:transversal}.
AK is supported by the Bloch postdoctoral fellowship at Stanford University and grants NSF CCF-1844628 and NSF CCF-1763299.

\bibliographystyle{abbrvnat}
\bibliography{references}

\begin{thebibliography}{88}
\providecommand{\natexlab}[1]{#1}
\providecommand{\url}[1]{\texttt{#1}}
\expandafter\ifx\csname urlstyle\endcsname\relax
  \providecommand{\doi}[1]{doi: #1}\else
  \providecommand{\doi}{doi: \begingroup \urlstyle{rm}\Url}\fi

\bibitem[Aaronson and Gottesman(2004)]{aaronson2004improved}
S.~Aaronson and D.~Gottesman.
\newblock Improved simulation of stabilizer circuits.
\newblock \emph{Physical Review A}, 70\penalty0 (5):\penalty0 052328, 2004.
\newblock \doi{10.1103/PhysRevA.70.052328}.

\bibitem[Abraham et~al.(2005)Abraham, Bartal, Chan, Dhamdhere, Gupta,
  Kleinberg, Neiman, and Slivkins]{abraham2005metric}
I.~Abraham, Y.~Bartal, T.-H. Chan, K.~Dhamdhere, A.~Gupta, J.~Kleinberg,
  O.~Neiman, and A.~Slivkins.
\newblock Metric embeddings with relaxed guarantees.
\newblock In \emph{46th Annual {IEEE} Symposium on Foundations of Computer
  Science ({FOCS}{\textquotesingle}05)}. {IEEE}, 2005.
\newblock \doi{10.1109/sfcs.2005.51}.

\bibitem[Abraham et~al.(2008)Abraham, Bartal, and Neiman]{abraham2008nearly}
I.~Abraham, Y.~Bartal, and O.~Neiman.
\newblock Nearly tight low stretch spanning trees.
\newblock In \emph{2008 49th Annual {IEEE} Symposium on Foundations of Computer
  Science}. {IEEE}, Oct. 2008.
\newblock \doi{10.1109/focs.2008.62}.

\bibitem[Abraham et~al.(2009)Abraham, Bartal, and Neiman]{abraham2009low}
I.~Abraham, Y.~Bartal, and O.~Neiman.
\newblock On low dimensional local embeddings.
\newblock In \emph{Proceedings of the Twentieth Annual {ACM}-{SIAM} Symposium
  on Discrete Algorithms}. Society for Industrial and Applied Mathematics, Jan.
  2009.
\newblock \doi{10.1137/1.9781611973068.95}.

\bibitem[Abraham et~al.(2018)Abraham, Filtser, Gupta, and
  Neiman]{abraham2018metric}
I.~Abraham, A.~Filtser, A.~Gupta, and O.~Neiman.
\newblock Metric embedding via shortest path decompositions.
\newblock In \emph{Proceedings of the 50th Annual {ACM} {SIGACT} Symposium on
  Theory of Computing}. {ACM}, June 2018.
\newblock \doi{10.1145/3188745.3188808}.

\bibitem[Aharonov and Ben-Or(1997)]{aharonov1997fault}
D.~Aharonov and M.~Ben-Or.
\newblock Fault-tolerant quantum computation with constant error.
\newblock In \emph{Proceedings of the twenty-ninth annual ACM symposium on
  Theory of computing}, pages 176--188. ACM, 1997.
\newblock \doi{10.1137/S0097539799359385}.

\bibitem[Aleksandrov and Djidjev(1996)]{aleksandrov1996linear}
L.~Aleksandrov and H.~Djidjev.
\newblock Linear algorithms for partitioning embedded graphs of bounded genus.
\newblock \emph{{SIAM} Journal on Discrete Mathematics}, 9\penalty0
  (1):\penalty0 129--150, Feb. 1996.
\newblock \doi{10.1137/s0895480194272183}.

\bibitem[Aliferis et~al.(2005)Aliferis, Gottesman, and
  Preskill]{aliferis2005quantum}
P.~Aliferis, D.~Gottesman, and J.~Preskill.
\newblock Quantum accuracy threshold for concatenated distance-3 codes.
\newblock 2005.
\newblock \doi{10.48550/ARXIV.QUANT-PH/0504218}.

\bibitem[Alth\"{o}fer et~al.(1993)Alth\"{o}fer, Das, Dobkin, Joseph, and
  Soares]{althfer1993sparse}
I.~Alth\"{o}fer, G.~Das, D.~Dobkin, D.~Joseph, and J.~Soares.
\newblock On sparse spanners of weighted graphs.
\newblock \emph{Discrete {\&} Computational Geometry}, 9\penalty0 (1):\penalty0
  81--100, Jan. 1993.
\newblock \doi{10.1007/bf02189308}.

\bibitem[Ashikhmin and Litsyn(1999)]{ashikhmin1999upper}
A.~Ashikhmin and S.~Litsyn.
\newblock Upper bounds on the size of quantum codes.
\newblock \emph{IEEE Transactions on Information Theory}, 45\penalty0
  (4):\penalty0 1206--1215, 1999.
\newblock \doi{10.1109/18.761270}.

\bibitem[Bansal et~al.(2014)Bansal, Feige, Krauthgamer, Makarychev, Nagarajan,
  Seffi, and Schwartz]{bansal2014minmax}
N.~Bansal, U.~Feige, R.~Krauthgamer, K.~Makarychev, V.~Nagarajan, J.~Seffi, and
  R.~Schwartz.
\newblock Min-max graph partitioning and small set expansion.
\newblock \emph{{SIAM} Journal on Computing}, 43\penalty0 (2):\penalty0
  872--904, Jan. 2014.
\newblock \doi{10.1137/120873996}.

\bibitem[Bar{\'{a}}t et~al.(2006)Bar{\'{a}}t, Matou{\v{s}}ek, and
  Wood]{bart2006bounded}
J.~Bar{\'{a}}t, J.~Matou{\v{s}}ek, and D.~R. Wood.
\newblock Bounded-degree graphs have arbitrarily large geometric thickness.
\newblock \emph{The Electronic Journal of Combinatorics}, 13\penalty0 (1), Jan.
  2006.
\newblock \doi{10.37236/1029}.

\bibitem[Batson et~al.(2013)Batson, Spielman, Srivastava, and
  Teng]{batson2013spectral}
J.~Batson, D.~A. Spielman, N.~Srivastava, and S.-H. Teng.
\newblock Spectral sparsification of graphs.
\newblock \emph{Communications of the {ACM}}, 56\penalty0 (8):\penalty0 87--94,
  Aug. 2013.
\newblock \doi{10.1145/2492007.2492029}.

\bibitem[Benjamini et~al.(2012)Benjamini, Schramm, and
  Tim{\'{a}}r]{benjamini2012separation}
I.~Benjamini, O.~Schramm, and {\'{A}}.~Tim{\'{a}}r.
\newblock On the separation profile of infinite graphs.
\newblock \emph{Groups, Geometry, and Dynamics}, 6\penalty0 (4):\penalty0
  639--658, 2012.
\newblock \doi{10.4171/ggd/168}.

\bibitem[Bombin(2010)]{bombin2010topological}
H.~Bombin.
\newblock Topological order with a twist: Ising anyons from an abelian model.
\newblock \emph{Physical Review Letters}, 105\penalty0 (3):\penalty0 030403,
  2010.
\newblock \doi{10.1103/PhysRevLett.105.030403}.

\bibitem[Bombin and Martin-Delgado(2006)]{bombin2006topological}
H.~Bombin and M.~A. Martin-Delgado.
\newblock Topological quantum distillation.
\newblock \emph{Physical Review Letters}, 97\penalty0 (18):\penalty0 180501,
  2006.
\newblock \doi{10.1103/PhysRevLett.97.180501}.

\bibitem[B{\"o}ttcher et~al.(2010)B{\"o}ttcher, Pruessmann, Taraz, and
  W{\"u}rfl]{bottcher2010bandwidth}
J.~B{\"o}ttcher, K.~P. Pruessmann, A.~Taraz, and A.~W{\"u}rfl.
\newblock Bandwidth, expansion, treewidth, separators and universality for
  bounded-degree graphs.
\newblock \emph{European Journal of Combinatorics}, 31\penalty0 (5):\penalty0
  1217--1227, 2010.
\newblock \doi{10.1016/j.ejc.2009.10.010}.

\bibitem[Bravyi and Hastings(2014)]{bravyi2014homological}
S.~Bravyi and M.~B. Hastings.
\newblock Homological product codes.
\newblock In \emph{Proceedings of the forty-sixth annual ACM symposium on
  Theory of computing}, pages 273--282, 2014.
\newblock \doi{10.1145/2591796.2591870}.

\bibitem[Bravyi and Kitaev(1998)]{bravyi1998quantum}
S.~Bravyi and A.~Y. Kitaev.
\newblock Quantum codes on a lattice with boundary.
\newblock \emph{arXiv preprint quant-ph/9811052}, 1998.
\newblock \doi{10.48550/arXiv.quant-ph/9811052}.

\bibitem[Bravyi and K{\"o}nig(2013)]{bravyi2013classification}
S.~Bravyi and R.~K{\"o}nig.
\newblock Classification of topologically protected gates for local stabilizer
  codes.
\newblock \emph{Physical {R}eview {L}etters}, 110\penalty0 (17):\penalty0
  170503, 2013.
\newblock \doi{10.1103/PhysRevLett.110.170503}.

\bibitem[Bravyi and Terhal(2009)]{bravyi2009no}
S.~Bravyi and B.~Terhal.
\newblock A no-go theorem for a two-dimensional self-correcting quantum memory
  based on stabilizer codes.
\newblock \emph{New Journal of Physics}, 11\penalty0 (4):\penalty0 043029,
  2009.
\newblock \doi{10.1088/1367-2630/11/4/043029}.

\bibitem[Bravyi et~al.(2010)Bravyi, Poulin, and Terhal]{bravyi2010tradeoffs}
S.~Bravyi, D.~Poulin, and B.~Terhal.
\newblock Tradeoffs for reliable quantum information storage in 2{D} systems.
\newblock \emph{Physical Review Letters}, 104\penalty0 (5):\penalty0 050503,
  2010.
\newblock \doi{10.1103/PhysRevLett.104.050503}.

\bibitem[Breuckmann and Eberhardt(2021{\natexlab{a}})]{breuckmann2020balanced}
N.~P. Breuckmann and J.~N. Eberhardt.
\newblock Balanced product quantum codes.
\newblock \emph{IEEE Transactions on Information Theory}, pages 1--1,
  2021{\natexlab{a}}.
\newblock \doi{10.1109/TIT.2021.3097347}.

\bibitem[Breuckmann and Eberhardt(2021{\natexlab{b}})]{breuckmann2021ldpc}
N.~P. Breuckmann and J.~N. Eberhardt.
\newblock Quantum low-density parity-check codes.
\newblock \emph{{PRX} Quantum}, 2\penalty0 (4), oct 2021{\natexlab{b}}.
\newblock \doi{10.1103/prxquantum.2.040101}.

\bibitem[Breuckmann and Londe(2020)]{breuckmann2020singleshot}
N.~P. Breuckmann and V.~Londe.
\newblock Single-shot decoding of linear rate ldpc quantum codes with high
  performance.
\newblock 2020.
\newblock \doi{10.48550/ARXIV.2001.03568}.

\bibitem[Breuckmann and Terhal(2016)]{breuckmann2016constructions}
N.~P. Breuckmann and B.~M. Terhal.
\newblock Constructions and noise threshold of hyperbolic surface codes.
\newblock \emph{IEEE transactions on Information Theory}, 62\penalty0
  (6):\penalty0 3731--3744, 2016.
\newblock \doi{10.1109/TIT.2016.2555700}.

\bibitem[Burton and Browne(2020)]{burton2020limitations}
S.~Burton and D.~Browne.
\newblock Limitations on transversal gates for hypergraph product codes.
\newblock 2020.
\newblock \doi{10.48550/ARXIV.2012.05842}.

\bibitem[Chekuri and Chuzhoy(2014)]{chekuri2014degree}
C.~Chekuri and J.~Chuzhoy.
\newblock Degree-3 treewidth sparsifiers.
\newblock In \emph{Proceedings of the Twenty-Sixth Annual {ACM}-{SIAM}
  Symposium on Discrete Algorithms}. Society for Industrial and Applied
  Mathematics, Dec. 2014.
\newblock \doi{10.1137/1.9781611973730.19}.

\bibitem[Delfosse(2013)]{delfosse2013tradeoffs}
N.~Delfosse.
\newblock Tradeoffs for reliable quantum information storage in surface codes
  and color codes.
\newblock In \emph{Information Theory Proceedings (ISIT), 2013 IEEE
  International Symposium on}, pages 917--921. IEEE, 2013.
\newblock \doi{10.1109/ISIT.2013.6620360}.

\bibitem[Delfosse and Zémor(2012)]{delfosse2013upper}
N.~Delfosse and G.~Zémor.
\newblock Upper bounds on the rate of low density stabilizer codes for the
  quantum erasure channel.
\newblock 2012.
\newblock \doi{10.48550/ARXIV.1205.7036}.

\bibitem[Dell et~al.(2018)Dell, Komusiewicz, Talmon, and Weller]{dell2018pace}
H.~Dell, C.~Komusiewicz, N.~Talmon, and M.~Weller.
\newblock {The PACE 2017 Parameterized Algorithms and Computational Experiments
  Challenge: The Second Iteration}.
\newblock In D.~Lokshtanov and N.~Nishimura, editors, \emph{12th International
  Symposium on Parameterized and Exact Computation (IPEC 2017)}, volume~89 of
  \emph{Leibniz International Proceedings in Informatics (LIPIcs)}, pages
  30:1--30:12, Dagstuhl, Germany, 2018. Schloss Dagstuhl--Leibniz-Zentrum fuer
  Informatik.
\newblock ISBN 978-3-95977-051-4.
\newblock \doi{10.4230/LIPIcs.IPEC.2017.30}.
\newblock URL \url{http://drops.dagstuhl.de/opus/volltexte/2018/8558}.

\bibitem[Djidjev and Venkatesan(1995)]{djidjev1995planarization}
H.~N. Djidjev and S.~M. Venkatesan.
\newblock Planarization of graphs embedded on surfaces.
\newblock In \emph{Graph-Theoretic Concepts in Computer Science}, pages 62--72.
  Springer Berlin Heidelberg, 1995.
\newblock \doi{10.1007/3-540-60618-1_66}.

\bibitem[Dujmovi{\'c} et~al.(2015)Dujmovi{\'c}, Eppstein, and
  Wood]{dujmovic2015genus}
V.~Dujmovi{\'c}, D.~Eppstein, and D.~R. Wood.
\newblock Genus, treewidth, and local crossing number.
\newblock In \emph{International Symposium on Graph Drawing}, pages 87--98.
  Springer, 2015.
\newblock \doi{10.1007/978-3-319-27261-0_8}.

\bibitem[Dujmovic et~al.(2016)Dujmovic, Sidiropoulos, and
  Wood]{dujmovic2016layout}
V.~Dujmovic, A.~Sidiropoulos, and D.~R. Wood.
\newblock Layouts of expander graphs.
\newblock \emph{Chicago Journal of Theoretical Computer Science}, 22\penalty0
  (1):\penalty0 1--21, 2016.
\newblock \doi{10.4086/cjtcs.2016.001}.

\bibitem[Dvo{\v{r}}{\'{a}}k and Norin(2019)]{dvorak2019treewidth}
Z.~Dvo{\v{r}}{\'{a}}k and S.~Norin.
\newblock Treewidth of graphs with balanced separations.
\newblock \emph{Journal of Combinatorial Theory, Series B}, 137:\penalty0
  137--144, July 2019.
\newblock \doi{10.1016/j.jctb.2018.12.007}.

\bibitem[{Eldar} et~al.(2020){Eldar}, {Ozols}, and {Thompson}]{eldar2020need}
L.~{Eldar}, M.~{Ozols}, and K.~{Thompson}.
\newblock The need for structure in quantum {LDPC} codes.
\newblock \emph{IEEE Transactions on Information Theory}, 66\penalty0
  (3):\penalty0 1460--1473, 2020.
\newblock \doi{10.1109/TIT.2019.2952366}.

\bibitem[Evra et~al.(2020)Evra, Kaufman, and Z{\'e}mor]{evra2020decodable}
S.~Evra, T.~Kaufman, and G.~Z{\'e}mor.
\newblock Decodable quantum {LDPC} codes beyond the square root distance
  barrier using high dimensional expanders.
\newblock In \emph{2020 IEEE 61st Annual Symposium on Foundations of Computer
  Science (FOCS)}, pages 218--227. IEEE, 2020.
\newblock \doi{10.1109/FOCS46700.2020.00029}.

\bibitem[Fawzi et~al.(2017)Fawzi, Grospellier, and
  Leverrier]{fawzi2018efficient}
O.~Fawzi, A.~Grospellier, and A.~Leverrier.
\newblock Efficient decoding of random errors for quantum expander codes.
\newblock 2017.
\newblock \doi{10.48550/ARXIV.1711.08351}.

\bibitem[Fawzi et~al.(2018)Fawzi, Grospellier, and
  Leverrier]{fawzi2018constant}
O.~Fawzi, A.~Grospellier, and A.~Leverrier.
\newblock Constant overhead quantum fault-tolerance with quantum expander
  codes.
\newblock In \emph{2018 IEEE 59th Annual Symposium on Foundations of Computer
  Science (FOCS)}, pages 743--754. IEEE, 2018.
\newblock \doi{10.1109/FOCS.2018.00076}.

\bibitem[Federickson(1987)]{federickson1987fast}
G.~N. Federickson.
\newblock Fast algorithms for shortest paths in planar graphs, with
  applications.
\newblock \emph{{SIAM} Journal on Computing}, 16\penalty0 (6):\penalty0
  1004--1022, Dec. 1987.
\newblock \doi{10.1137/0216064}.

\bibitem[Freedman et~al.(2002)Freedman, Meyer, and Luo]{freedman2002z2}
M.~H. Freedman, D.~A. Meyer, and F.~Luo.
\newblock Z2-systolic freedom and quantum codes.
\newblock \emph{Mathematics of quantum computation, Chapman \& Hall/CRC}, pages
  287--320, 2002.

\bibitem[Gilbert et~al.(1984)Gilbert, Hutchinson, and
  Tarjan]{gilbert1984separator}
J.~R. Gilbert, J.~P. Hutchinson, and R.~E. Tarjan.
\newblock A separator theorem for graphs of bounded genus.
\newblock \emph{Journal of Algorithms}, 5\penalty0 (3):\penalty0 391--407,
  Sept. 1984.
\newblock \doi{10.1016/0196-6774(84)90019-1}.

\bibitem[Gladkova and Shum(2020)]{gladkova2020separation}
V.~Gladkova and V.~Shum.
\newblock Separation profiles of graphs of fractals.
\newblock \emph{Journal of Topology and Analysis}, pages 1--13, Jan. 2020.
\newblock \doi{10.1142/s1793525320500417}.

\bibitem[Gottesman(1996)]{gottesman1996class}
D.~Gottesman.
\newblock Class of quantum error-correcting codes saturating the quantum
  {H}amming bound.
\newblock \emph{Physical Review A}, 54\penalty0 (3):\penalty0 1862, 1996.
\newblock \doi{10.1103/PhysRevA.54.1862}.

\bibitem[Gottesman(1997)]{gottesman1997stabilizer}
D.~Gottesman.
\newblock Stabilizer codes and quantum error correction.
\newblock 1997.
\newblock \doi{10.48550/ARXIV.QUANT-PH/9705052}.

\bibitem[Gottesman(2014)]{gottesman2014fault}
D.~Gottesman.
\newblock Fault-tolerant quantum computation with constant overhead.
\newblock \emph{Quantum Information \& Computation}, 14\penalty0
  (15-16):\penalty0 1338--1372, 2014.
\newblock \doi{10.5555/2685179.2685184}.

\bibitem[Gottesman and Chuang(1999)]{gottesman1999demonstrating}
D.~Gottesman and I.~L. Chuang.
\newblock Demonstrating the viability of universal quantum computation using
  teleportation and single-qubit operations.
\newblock \emph{Nature}, 402\penalty0 (6760):\penalty0 390--393, Nov. 1999.
\newblock \doi{10.1038/46503}.

\bibitem[Grohe and Marx(2009)]{grohe2009treewidth}
M.~Grohe and D.~Marx.
\newblock On tree width, bramble size, and expansion.
\newblock \emph{Journal of Combinatorial Theory, Series B}, 99\penalty0
  (1):\penalty0 218--228, 2009.
\newblock ISSN 0095-8956.
\newblock \doi{10.1016/j.jctb.2008.06.004}.

\bibitem[Guth and Lubotzky(2014)]{guth2014quantum}
L.~Guth and A.~Lubotzky.
\newblock Quantum error correcting codes and 4-dimensional arithmetic
  hyperbolic manifolds.
\newblock \emph{Journal of Mathematical Physics}, 55\penalty0 (8):\penalty0
  082202, 2014.
\newblock \doi{10.1063/1.4891487}.

\bibitem[Hastings(2013)]{hastings2013decoding}
M.~B. Hastings.
\newblock Decoding in hyperbolic spaces: Ldpc codes with linear rate and
  efficient error correction.
\newblock 2013.
\newblock \doi{10.48550/ARXIV.1312.2546}.

\bibitem[Hastings et~al.(2021)Hastings, Haah, and O'Donnell]{hastings2020fiber}
M.~B. Hastings, J.~Haah, and R.~O'Donnell.
\newblock Fiber bundle codes: Breaking the $n^{1/2}$poly$\log(n)$ barrier for
  quantum {LDPC} codes.
\newblock page 1276–1288, 2021.
\newblock \doi{10.1145/3406325.3451005}.

\bibitem[Henzinger et~al.(1997)Henzinger, Klein, Rao, and
  Subramanian]{henzinger1997faster}
M.~R. Henzinger, P.~Klein, S.~Rao, and S.~Subramanian.
\newblock Faster shortest-path algorithms for planar graphs.
\newblock \emph{Journal of Computer and System Sciences}, 55\penalty0
  (1):\penalty0 3--23, Aug. 1997.
\newblock \doi{10.1006/jcss.1997.1493}.

\bibitem[Hume(2017)]{hume2017continuum}
D.~Hume.
\newblock A continuum of expanders.
\newblock \emph{Fundamenta Mathematicae}, 238\penalty0 (2):\penalty0 143--152,
  2017.
\newblock \doi{10.4064/fm101-11-2016}.

\bibitem[Hume et~al.(2020)Hume, Mackay, and Tessera]{hume2020poincare}
D.~Hume, J.~Mackay, and R.~Tessera.
\newblock Poincar{\'{e}} profiles of groups and spaces.
\newblock \emph{Revista Matem{\'{a}}tica Iberoamericana}, 36\penalty0
  (6):\penalty0 1835--1886, Mar. 2020.
\newblock \doi{10.4171/rmi/1184}.

\bibitem[Inc.()]{mathematica}
W.~R. Inc.
\newblock Mathematica, {V}ersion 12.3.1.
\newblock Champaign, IL, 2021.

\bibitem[Kaufman and Tessler(2021)]{kaufman2020quantum}
T.~Kaufman and R.~J. Tessler.
\newblock New cosystolic expanders from tensors imply explicit quantum {LDPC}
  codes with {$\Omega(\sqrt{n}\log^{k}n)$} distance.
\newblock page 1317–1329, 2021.
\newblock \doi{10.1145/3406325.3451029}.

\bibitem[Kaufman et~al.(2014)Kaufman, Kazhdan, and
  Lubotzky]{kaufman2014ramanujan}
T.~Kaufman, D.~Kazhdan, and A.~Lubotzky.
\newblock Ramanujan complexes and bounded degree topological expanders.
\newblock In \emph{2014 IEEE 55th Annual Symposium on Foundations of Computer
  Science}, pages 484--493. IEEE, 2014.
\newblock \doi{10.1109/FOCS.2014.58}.

\bibitem[Kawarabayashi and Reed(2010)]{kawarabayashi2010separator}
K.-i. Kawarabayashi and B.~Reed.
\newblock A separator theorem in minor-closed classes.
\newblock In \emph{2010 IEEE 51st Annual Symposium on Foundations of Computer
  Science}, pages 153--162, 2010.
\newblock \doi{10.1109/FOCS.2010.22}.

\bibitem[Kelner(2006)]{kelner2006spectral}
J.~A. Kelner.
\newblock Spectral partitioning, eigenvalue bounds, and circle packings for
  graphs of bounded genus.
\newblock \emph{{SIAM} Journal on Computing}, 35\penalty0 (4):\penalty0
  882--902, Jan. 2006.
\newblock \doi{10.1137/s0097539705447244}.

\bibitem[Kisfaludi-Bak(2020)]{kisfaludi2020hyperbolic}
S.~Kisfaludi-Bak.
\newblock Hyperbolic intersection graphs and (quasi)-polynomial time.
\newblock In \emph{Proceedings of the Fourteenth Annual ACM-SIAM Symposium on
  Discrete Algorithms}, pages 1621--1638. SIAM, 2020.
\newblock \doi{10.1137/1.9781611975994.100}.

\bibitem[Kitaev(2003)]{kitaev2003fault}
A.~Kitaev.
\newblock Fault-tolerant quantum computation by anyons.
\newblock \emph{Annals of Physics}, 303\penalty0 (1):\penalty0 2--30, jan 2003.
\newblock \doi{10.1016/s0003-4916(02)00018-0}.

\bibitem[Kitaev(1997)]{kitaev1997quantum}
A.~Y. Kitaev.
\newblock Quantum computations: algorithms and error correction.
\newblock \emph{Russian Mathematical Surveys}, 52\penalty0 (6):\penalty0
  1191--1249, dec 1997.
\newblock \doi{10.1070/rm1997v052n06abeh002155}.

\bibitem[Knill and Laflamme(1997)]{knill1997theory}
E.~Knill and R.~Laflamme.
\newblock Theory of quantum error-correcting codes.
\newblock \emph{Phys. Rev. A}, 55:\penalty0 900--911, Feb 1997.
\newblock \doi{10.1103/PhysRevA.55.900}.
\newblock URL \url{https://link.aps.org/doi/10.1103/PhysRevA.55.900}.

\bibitem[Knill et~al.(1998)Knill, Laflamme, and Zurek]{knill1998resilient}
E.~Knill, R.~Laflamme, and W.~H. Zurek.
\newblock Resilient quantum computation: error models and thresholds.
\newblock In \emph{Proceedings of the Royal Society of London A: Mathematical,
  Physical and Engineering Sciences}, volume 454, pages 365--384. The Royal
  Society, 1998.
\newblock \doi{10.1126/science.279.5349.342}.

\bibitem[Kovalev and Pryadko(2012)]{kovalev2012improved}
A.~A. Kovalev and L.~P. Pryadko.
\newblock Improved quantum hypergraph-product {LDPC} codes.
\newblock In \emph{Information Theory Proceedings (ISIT), 2012 IEEE
  International Symposium on}, pages 348--352. IEEE, 2012.
\newblock \doi{10.1109/ISIT.2012.6284206}.

\bibitem[Kovalev and Pryadko(2013)]{kovalev2013fault}
A.~A. Kovalev and L.~P. Pryadko.
\newblock Fault tolerance of quantum low-density parity check codes with
  sublinear distance scaling.
\newblock \emph{Physical Review A}, 87\penalty0 (2):\penalty0 020304, 2013.
\newblock \doi{10.1103/PhysRevA.87.020304}.

\bibitem[Le~Coz(2020)]{coz2020separation}
C.~Le~Coz.
\newblock \emph{{Separation and Poincar{\'e} profiles}}.
\newblock Theses, {Universit{\'e} Paris-Saclay}, Sept. 2020.

\bibitem[Leverrier et~al.(2020)Leverrier, Apers, and
  Vuillot]{leverrier2020quantum}
A.~Leverrier, S.~Apers, and C.~Vuillot.
\newblock Quantum {XYZ} product codes.
\newblock \emph{arXiv preprint arXiv:2011.09746}, 2020.
\newblock \doi{10.48550/ARXIV.2011.09746}.

\bibitem[Lipton and Tarjan(1979)]{lipton1979separator}
R.~J. Lipton and R.~E. Tarjan.
\newblock A separator theorem for planar graphs.
\newblock \emph{SIAM Journal on Applied Mathematics}, 36\penalty0 (2):\penalty0
  177--189, 1979.
\newblock \doi{10.1137/0136016}.

\bibitem[Londe and Leverrier(2019)]{londe2017golden}
V.~Londe and A.~Leverrier.
\newblock Golden codes: quantum {LDPC} codes from regular tessellations of
  hyperbolic 4-manifolds.
\newblock \emph{Quantum Inf. Comput.}, 19\penalty0 (5\&6):\penalty0 361--391,
  may 2019.
\newblock \doi{10.26421/QIC19.5-6-1}.

\bibitem[Matsubayashi(2015)]{matsubayashi2015separator}
A.~Matsubayashi.
\newblock Separator-based graph embedding into multidimensional grids with
  small edge-congestion.
\newblock \emph{Discrete Applied Mathematics}, 185:\penalty0 119--137, Apr.
  2015.
\newblock \doi{10.1016/j.dam.2014.11.024}.

\bibitem[Miller et~al.(1997)Miller, Teng, Thurston, and
  Vavasis]{miller1997separators}
G.~L. Miller, S.-H. Teng, W.~Thurston, and S.~A. Vavasis.
\newblock Separators for sphere-packings and nearest neighbor graphs.
\newblock \emph{Journal of the ACM (JACM)}, 44\penalty0 (1):\penalty0 1--29,
  1997.
\newblock \doi{10.1145/256292.256294}.

\bibitem[Nielsen and Chuang(2002)]{nielsen2002quantum}
M.~A. Nielsen and I.~Chuang.
\newblock Quantum computation and quantum information.
\newblock 2002.
\newblock \doi{10.1038/46503}.

\bibitem[Orecchia et~al.(2012)Orecchia, Sachdeva, and
  Vishnoi]{orecchia2012lorenzo}
L.~Orecchia, S.~Sachdeva, and N.~K. Vishnoi.
\newblock Approximating the exponential, the lanczos method and an
  $\widetilde{O}(m)$-time spectral algorithm for balanced separator.
\newblock In \emph{Proceedings of the 44th symposium on Theory of Computing -
  {STOC} {\textquotesingle}12}. {ACM} Press, 2012.
\newblock \doi{10.1145/2213977.2214080}.

\bibitem[Panteleev and Kalachev(2022)]{panteleev2020quantum}
P.~Panteleev and G.~Kalachev.
\newblock Quantum {LDPC} codes with almost linear minimum distance.
\newblock \emph{{IEEE} Transactions on Information Theory}, 68\penalty0
  (1):\penalty0 213--229, jan 2022.
\newblock \doi{10.1109/tit.2021.3119384}.

\bibitem[Pastawski and Yoshida(2015)]{pastawski2015fault}
F.~Pastawski and B.~Yoshida.
\newblock Fault-tolerant logical gates in quantum error-correcting codes.
\newblock \emph{Physical Review A}, 91\penalty0 (1), Jan 2015.
\newblock ISSN 1094-1622.
\newblock \doi{10.1103/physreva.91.012305}.
\newblock URL \url{http://dx.doi.org/10.1103/PhysRevA.91.012305}.

\bibitem[Sarvepalli and Klappenecker(2010)]{sarvepalli2010degenerate}
P.~Sarvepalli and A.~Klappenecker.
\newblock Degenerate quantum codes and the quantum {H}amming bound.
\newblock \emph{Physical Review A}, 81\penalty0 (3):\penalty0 032318, 2010.
\newblock \doi{10.1103/PhysRevA.81.032318}.

\bibitem[Shor(1996)]{shor1996fault}
P.~W. Shor.
\newblock Fault-tolerant quantum computation.
\newblock In \emph{Proceedings of 37th Conference on Foundations of Computer
  Science}, pages 56--65. IEEE, 1996.
\newblock \doi{10.1109/SFCS.1996.548464}.

\bibitem[Sidiropoulos et~al.(2017)Sidiropoulos, Wang, and
  Wang]{sidiropoulos2017metric}
A.~Sidiropoulos, D.~Wang, and Y.~Wang.
\newblock Metric embeddings with outliers.
\newblock In \emph{Proceedings of the Twenty-Eighth Annual {ACM}-{SIAM}
  Symposium on Discrete Algorithms}. Society for Industrial and Applied
  Mathematics, Jan. 2017.
\newblock \doi{10.1137/1.9781611974782.43}.

\bibitem[Sipser and Spielman(1996)]{sipser1994expander}
M.~Sipser and D.~Spielman.
\newblock Expander codes.
\newblock \emph{IEEE Transactions on Information Theory}, 42\penalty0
  (6):\penalty0 1710--1722, 1996.
\newblock \doi{10.1109/18.556667}.

\bibitem[Spielman(2007)]{spielman2007spectral}
D.~A. Spielman.
\newblock Spectral graph theory and its applications.
\newblock In \emph{48th Annual {IEEE} Symposium on Foundations of Computer
  Science ({FOCS}{\textquotesingle}07)}. {IEEE}, Oct. 2007.
\newblock \doi{10.1109/focs.2007.56}.

\bibitem[Spielman and Teng(1996)]{spielman1996spectral}
D.~A. Spielman and S.-H. Teng.
\newblock Spectral partitioning works: planar graphs and finite element meshes.
\newblock In \emph{Proceedings of 37th Conference on Foundations of Computer
  Science}. {IEEE}, 1996.
\newblock \doi{10.1109/sfcs.1996.548468}.

\bibitem[Teng(1991)]{teng1991phd}
S.~Teng.
\newblock \emph{Points, Spheres, and Separators, A Unified Geometric Approach
  to Graph Partitioning.}
\newblock PhD thesis, Carnegie Mellon University, Pittsburgh, August 1991.
\newblock CMU CS Tech Report CMU-CS-91-184.

\bibitem[Teng et~al.(1991)Teng, Miller, and Vavasis]{miller1991unified}
S.~Teng, G.~Miller, and S.~Vavasis.
\newblock A unified geometric approach to graph separators.
\newblock In \emph{1991 Proceedings 32nd Annual Symposium of Foundations of
  Computer Science}, pages 538--547, Los Alamitos, CA, USA, oct 1991. IEEE
  Computer Society.
\newblock \doi{10.1109/SFCS.1991.185417}.

\bibitem[Teng(1998)]{teng1998combinatorial}
S.-H. Teng.
\newblock Combinatorial aspects of geometric graphs.
\newblock \emph{Computational Geometry}, 9\penalty0 (4):\penalty0 277--287,
  1998.
\newblock ISSN 0925-7721.
\newblock \doi{10.1016/S0925-7721(96)00008-9}.

\bibitem[Tillich and Z{\'e}mor(2014)]{tillich2014quantum}
J.-P. Tillich and G.~Z{\'e}mor.
\newblock Quantum {LDPC} codes with positive rate and minimum distance
  proportional to the square root of the blocklength.
\newblock \emph{IEEE Transactions on Information Theory}, 60\penalty0
  (2):\penalty0 1193--1202, 2014.
\newblock \doi{10.1109/TIT.2013.2292061}.

\bibitem[Z{\'e}mor(2001)]{zemor2001expander}
G.~Z{\'e}mor.
\newblock On expander codes.
\newblock \emph{IEEE Transactions on Information Theory}, 47\penalty0
  (2):\penalty0 835--837, 2001.
\newblock \doi{10.1109/18.910593}.

\bibitem[Zeng and Pryadko(2019)]{zeng2019higher}
W.~Zeng and L.~P. Pryadko.
\newblock Higher-dimensional quantum hypergraph-product codes with finite
  rates.
\newblock \emph{Physical {R}eview {L}etters}, 122\penalty0 (23):\penalty0
  230501, 2019.
\newblock \doi{10.1103/PhysRevLett.122.230501}.

\end{thebibliography}

\appendix

\appendix

\section{Relation between $s^\alpha_G(n)$ and $s^{2/3}_G(n)$}

\begin{lemma}
	\label{lem:salpha}
	For a graph $G$ on $n$ vertices, we have $s^{\alpha}_G(n) =  \Theta(s^{2/3}_G(n))$ for any $\alpha \in [\frac{2}{3},1)$.
\end{lemma}
\begin{proof}
	Since any $\frac{2}{3}$-separator is a $\alpha$-separator for $\alpha > 2/3$, then $s^{\alpha}_G(n)\leq s^{2/3}_G(n)$.
	
	We then only have to show that $s^{2/3}_G(n) =  O(s^{\alpha}_G(n))$. 
	We note that if, after removing a $\alpha$-separator, $A,B$ do not satisfy $|A|,|B| \leq \frac{2}{3} n$, then they both can be separated again $\lfloor t+1 \rfloor$ times such that $\alpha^t = \frac{1}{3}$.
	At this point the vertices have been partitioned into $V = \sqcup_{i=1}^{i=l} C_i \sqcup_{j=1}^{j=m} D_j$, where the $C_i$ induce mutually disjoint subgraphs with $|C_i| \leq n/3$, and the $D_j$ are a set of $\alpha$-separators.
	Note that there are at most $m \leq \sum_{u = 1}^{u = \lfloor t+1 \rfloor} 2^u =  O(2^{t})$ such $D_j$.
	We will want to show that $\cup_j D_j$ is a $\frac{2}{3}$-separator.
	
	We can now group the $C_i$ in order to form a partition $A,B$ such that $|A|,|B| \leq 2n/3$.
	Consider the set $A = \cup_{i=1}^{i=p} C_i$ such that $|A| \leq 2n/3$, but $|A| + C_{p+1} \geq 2n/3$ for some $p$.
	Since $|C_{p+1}| \leq n/3$, then $|A| \geq n/3$.
	Let $B = \cup_{i = l+1}^{i=l} C_i$, then $|B| \leq n - |A| \leq 2n/3$. We conclude that $s_G^{2/3}(n) \leq \sum_{j=1}^{j=m}|D_j| \leq m \cdot s^{\alpha}(n) =  O(2^t s^{\alpha}_G(n))$, or $s_G^{2/3}  =  O(2^{\log_\alpha(1/3)}s^\alpha_G(n))$.
	
	Therefore $s^\alpha_G(n) =  \Theta(s^{2/3}_G(n))$.
\end{proof}

\section{Closed-form expression for recurrence relation}
\label{app:closed-form-recurrence}

\begin{lemma}
	\label{lem:solrecursion}
	Consider the function $\cS_d$ defined by the recurrence relation
	\[
	\cS_d(r) = \beta r^c + \cS_d(\alpha_r r) + \cS_d((1-\alpha_r)r)~,
	\]
	where $0 < c \leq 1 + \log_n(1-\alpha)<1$,  $\frac{1}{2} \leq \alpha_r \leq \alpha < 1$, and $\beta$ is a constant.
	Furthermore, $\cS_d$ obeys $\cS_d(r) = 0$ for $r \leq d - 1 $.
	Then there is a closed-form expression for $\cS_d$,
	\[
	\cS_d(r) = O\left(\frac{r}{d^{1-c}}\log(n)\right)~.
	\]
	If $c$ is upper bounded by a constant, then
	\[
	\cS_d(r) = O\left(\frac{r}{d^{1-c}}\right)~.
	\]
\end{lemma}
\begin{proof}
	Let $d' \equiv d-1$. We will begin by showing by induction that $\cS_d(r) \leq f(c) \frac{r}{d'^{1-c}}-g(c)r^c$ for $r \geq (1-\alpha)d'$ for some functions $f,g$ of $c$.
	
	In order to prove the base case, we will consider $(1-\alpha)d' \leq r \leq d'$, and show that $f(c) \frac{r}{d'^{1-c}}-g(c)r^c \geq \cS_d(r)$ if we take $f(c) = \frac{g(c)}{(1-\alpha)^{1-c}}$. Indeed, since by assumption, $\frac{1}{d'} \geq \frac{1-\alpha}{r}$ then one can readily verify that
	
	\begin{align*}
		f(c) \frac{r}{d'^{1-c}}-g(c)r^c  &= \frac{g(c)r}{(1-\alpha)^{1-c}d'^{1-c}} -g(c)r^c\\
		&\geq \frac{g(c)r(1-\alpha)^{1-c}}{(1-\alpha)^{1-c}r^{1-c}} -g(c)r^c\\
		& \geq 0 = \cS_d(r)~.
	\end{align*}
	As the base case is verified, we can then verify the expression holds for $r > d'$.
	\begin{align*}
		\cS_d(r) &= \beta r^c + \cS_d(\alpha_r r) + \cS_d((1-\alpha_r)r) \\
		&\leq \beta r^c + f(c)\frac{\alpha_rr}{d'^{1-c}} - g(c)(\alpha_r r)^c + \nonumber \\
		&f(c)\frac{(1-\alpha_r)r}{d'^{1-c}} - g(c)((1-\alpha_r)r)^c \\
		& = f(c)\frac{r}{d'^{1-c}}-\\
		&g(c)r^c\left(-\frac{\beta}{g(c)}+\alpha_r^c+(1-\alpha_r)^c\right)~.
	\end{align*}
	The induction step is then satisfied if and only if $-\frac{\beta}{g(c)}+\alpha_r^c+(1-\alpha_r)^c \leq 1$, which is equivalent to $g(c) \geq\frac{\beta}{\alpha_r^c+(1-\alpha_r)^c-1}$ for any possible value of $\alpha_r$. 
	From here one we thus take $g(c) \equiv \frac{\beta}{\alpha^c+(1-\alpha)^c-1} \geq \frac{\beta}{\alpha_r^c+(1-\alpha_r)^c-1}$.
	Since $c<1$, this is always well defined.
	
	We have thus showed that	
	\begin{align}
		\cS_d(r) & \leq  f(c) \frac{r}{d'^{1-c}}-g(c)r^c \\
		& \leq  f(c) \frac{r}{d'^{1-c}} \\
		& =     \frac{(1-\alpha)^{c-1}}{\alpha^c+(1-\alpha)^c-1}\frac{\beta r}{d'^{1-c}} ~.
	\end{align}
	If $c$ is a constant then $\cS_d(r) =  O(\frac{r}{d'^{1-c}})$.
	Since $d' = \Omega(d)$, then $\cS_d(r) =  O(\frac{r}{d^{1-c}})$.
	
	In the case where $c$ might depend on $n$, we remind the reader that $c \leq 1 + \log_n(1-\alpha)$, which gives 
	\begin{align}
		\cS_d(r) \leq \frac{(1-\alpha)^{\log_n(1-\alpha)}}{\alpha^{1 + \log_n(1-\alpha)}+(1-\alpha)^{1 + \log_n(1-\alpha)}-1}\frac{\beta r}{d'^{1-c}}~.
	\end{align}
	First we can note that $\lim\limits_{n \rightarrow \infty} (1-\alpha)^{\log_n(1-\alpha)} = 1$.
	Secondly, we verify using Mathematica \cite{mathematica} that
	\begin{align*}
		&\lim\limits_{n \rightarrow \infty} \log(n) \left( \alpha^{1 + \log_n(1-\alpha)}+(1-\alpha)^{1 + \log_n(1-\alpha)}-1 \right)\\
		=&\log(1 - \alpha) (\alpha \log(\alpha) + (1 - \alpha) \log(1 - \alpha)) > 0~.
	\end{align*}
	Equivalently, for any $\epsilon > 0$, there exist $n_0$ such that for all $n\geq n_0$ we have
	
	\begin{align*}
		&\frac{(1-\alpha)^{\log_n(1-\alpha)}}{\alpha^{1 + \log_n(1-\alpha)}+(1-\alpha)^{1 + \log_n(1-\alpha)}-1} \\
		\leq &\frac{\log(n)(1+ \epsilon)}{\log(1 - \alpha) (\alpha \log(\alpha) + (1 - \alpha) \log(1 - \alpha))} 
	\end{align*}
	
	When $\alpha$ is a constant, we therefore conclude $\cS_d(r) =  O\left(\frac{r}{d^{1-c}}\log(n)\right)$.
	
\end{proof}

\section{Classical codes}
\label{app:classicalbound}
In \cite{bravyi2010tradeoffs}, the authors also derive a bound on the parameters of classical codes using the following lemma.

\begin{lemma}[Bravyi, Poulin, Terhal]
	\label{lem:bptclassical}
	Consider a classical code defined on bits $Q = A \sqcup B$.
	Consider $A = \sqcup A_i$ with $A_i$ correctable, and no constraint acts on two different $A_i$.
	Then $A$ is correctable, and
	\begin{equation}
		k \leq |B|.
	\end{equation}
\end{lemma}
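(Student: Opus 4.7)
The plan is to mirror the structure of the quantum proof of Lemma \ref{lem:bptabc}, but the classical setting is simpler because we can directly use the characterization that an erasure $E$ is correctable for a classical linear code with parity-check matrix $H$ if and only if the columns of $H$ indexed by $E$ are linearly independent, equivalently, no nonzero codeword is supported entirely on $E$.

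First, I would prove that $A = \sqcup_i A_i$ is itself correctable under the decoupling hypothesis. Suppose for contradiction that $c$ is a nonzero codeword with $\supp(c) \subseteq A$. For each index $i$, let $c^{(i)}$ denote the restriction of $c$ to $A_i$, extended by zero outside $A_i$. I claim each $c^{(i)}$ is a codeword. Indeed, take any row $h$ of $H$. By the hypothesis that no constraint acts on two different $A_i$, the support of $h$ intersects at most one of the $A_j$'s. If it intersects no $A_j$, then trivially $h \cdot c^{(i)} = 0$ since $c^{(i)}$ is supported in $A_i$. If it intersects exactly $A_j$ with $j \neq i$, again $h \cdot c^{(i)} = 0$. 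If it intersects $A_i$, then $h \cdot c = h \cdot c^{(i)} + \sum_{j \neq i} h \cdot c^{(j)} = h \cdot c^{(i)}$ since the other terms vanish, and this equals $0$ because $c$ is a codeword. Hence $c^{(i)}$ is a nonzero codeword supported in $A_i$ (for some $i$ where $c^{(i)} \neq 0$), contradicting the correctability of $A_i$.

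Having established that $A$ is correctable, the dimension bound follows almost immediately. Since no nonzero codeword is supported on $A$, the columns of $H$ indexed by $A$ are linearly independent, so $\rank(H_A) = |A|$ and therefore $\rank(H) \geq |A|$. The code dimension satisfies $k = n - \rank(H) \leq n - |A| = |B|$.

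The only real content is the first step; the potential pitfall there is correctly handling the case where a constraint has support both in some $A_i$ and in $B$. The decoupling hypothesis is only about the $A_i$'s and allows such mixed constraints, but since $c$ (and hence every $c^{(i)}$) vanishes on $B$, the contribution from the $B$-part of any constraint drops out automatically, so the argument goes through without modification.
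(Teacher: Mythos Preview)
The paper does not actually supply a proof of this lemma; it is quoted from \cite{bravyi2010tradeoffs} and used as a black box. Your argument is correct for linear codes and is essentially the natural classical specialization of the stabilizer proof of Lemma~\ref{lem:bptabc}: the decoupling step is exactly the classical Union Lemma, and the rank inequality $\rank(H) \geq \rank(H_A) = |A|$ replaces the pair of Delfosse--Z\'emor inequalities. One could shorten the second step slightly: once $A$ is correctable, the restriction map $\cC \to \bbF_2^{B}$ is injective (a nonzero codeword in its kernel would be supported on $A$), so $k = \dim \cC \leq |B|$ directly, without invoking $H$.
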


Our results extend naturally to the classical setting.

\begin{corollary}
	For a family of classical codes $\{\cC_n\}$, let $\cS_d$ be defined as in Definition \ref{def:cS}, then $k \leq \cS_d(n)$.
\end{corollary}

\begin{proof}
	This follows from applying Lemma \ref{lem:bptclassical} to \ref{lem:recpartition} taking $B \equiv \comp{A}$.
\end{proof}

\section{Bound on codes defined by commuting projectors}
\label{app:allcodesbound}

The Bravyi-Poulin-Terhal bound \cite{bravyi2010tradeoffs} applies to a much larger class of codes than just stabilizer codes.
Mirroring their result, we consider a class of codes defined by a set $\{\Pi_a\}_a$ of commuting projectors where each projector $\Pi_a$ acts on some constant number of qubits, and every qubit can affect at most a constant number of projectors.
We refer to such a code as a low-density commuting projector code.
A stabilizer code is the special case where each projector can be expressed as $\Pi_a = \frac{1}{2}\left(1 + \ssS_a\right)$ for some stabilizer generators $\{\ssS_a\}_a$.
However, in general, a commuting-projector code need not have this specific structure.
The codespace $\cC$ is the space $\cC = \{\ket{\psi}: \Pi_a \ket{\psi} = \ket{\psi} \forall a\}$.
In this section, we prove that our main results extend to this general class of codes.

At the outset, this may seem difficult as the workhorse behind our results was the Cleaning Lemma.
However, there exist analogues of the Union and Expansion Lemmas.
For proofs, we refer the reader to Lemma 2 and Corollary 1 respectively of the paper by Bravyi, Poulin and Terhal \cite{bravyi2010tradeoffs}.

Before stating the lemmas, we note that the idea of a boundary, either exterior $\bdry_{+}$ or interior $\bdry_{-}$ carries over quite naturally.
Let $V$ be the set of qubits defining a code and $U \subseteq V$ be some subset of qubits.
The external boundary $\bdry_{+}U$ of $U$ is the set of qubits $v \in \comp{U}$ such that there exists a projector acting on $v$ and some $u \in U$.
The internal boundary $\partial_{-}U$ is the exterior boundary of $\comp{U}$.
The boundary $\bdry U$ is the union $\bdry_{+} U \union \bdry_{-} U$.
We say that two regions $U_1$ and $U_2$ are decoupled if there exist no projectors $\Pi_a$ that are supported jointly on the two regions.
The definition of the connectivity graph extends naturally: two qubits are connected by an edge if they are both in the support of a projector $\Pi$.
Therefore, two regions $U_1$ and $U_2$ are decoupled if there exist no edges between them in the connectivity graph.

\begin{lemma}[Generalized Union Lemma]
	\label{lem:unionnonstab}
	Let $U_1$, $U_2$ be any correctable regions such that $U_1$ and $U_2$ are decoupled.
	Suppose that $\bdry_+ U_1$ is also correctable, then $U_1 \cup U_2$ is correctable.
\end{lemma}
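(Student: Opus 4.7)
The plan is to follow the entropic strategy used by Bravyi, Poulin, and Terhal in \cite{bravyi2010tradeoffs} for the analogous statement in commuting-projector codes. The tools are: an entropic characterization of correctability via the purified maximally mixed codestate, a quantum Markov property implied by the decoupling hypothesis, and routine chain-rule manipulations of mutual information. The stabilizer Cleaning Lemma is no longer available, so entropy inequalities take its place.

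First I would introduce the purification $\ket{\psi}_{VR}$ of the maximally mixed state on $\cC$, with $R$ a $k$-qubit reference system, and record the standard entropic characterization: a region $X \subseteq V$ is correctable iff $I(X : R)_\psi = 0$. The three hypotheses of the lemma then read $I(U_1 : R) = I(U_2 : R) = I(\bdry_+ U_1 : R) = 0$. Next I would install a quantum Markov property. Since $U_1$ is decoupled from $U_2$ and, by the very definition of $\bdry_+ U_1$, no projector has support crossing between $U_1$ and the exterior of $U_1 \cup \bdry_+ U_1$, the commuting projectors partition into those supported inside $U_1 \cup \bdry_+ U_1$ and those supported inside $V \setminus U_1$. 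I would use this block factorization of the parent Hamiltonian (equivalently, of the code projector $\Pi_\cC = \prod_a \Pi_a$) to argue that the maximally mixed codestate is a quantum Markov chain along $U_1 - \bdry_+ U_1 - (V \setminus (U_1 \cup \bdry_+ U_1))$, and check that the chain extends through the purification by attaching $R$ on the far side, yielding $I(U_1 : U_2 R \mid \bdry_+ U_1)_\psi = 0$.

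With the Markov identity in hand, the finish is a short chain-rule manipulation of $I(U_1 U_2 : R)_\psi$. Splitting off $U_2$ and then $\bdry_+ U_1$ via the mutual-information chain rule produces a sum of terms of the form $I(U_2 : R)$, $I(\bdry_+ U_1 : R \mid \cdot)$, and $I(U_1 : R \mid U_2 \bdry_+ U_1)$; the first vanishes by correctability of $U_2$, the second is controlled by correctability of $\bdry_+ U_1$, and the third is killed by the Markov identity. Non-negativity of quantum mutual information then forces $I(U_1 U_2 : R)_\psi = 0$, i.e.\ $U_1 \cup U_2$ is correctable.

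The step I expect to be the main obstacle is the middle one: deriving the quantum Markov property at the level of the purification $\ket{\psi}_{VR}$ rather than just the mixed state $\rho$ on $V$. The block factorization of the code projector gives the Markov structure on $V$ immediately, but one must verify that the reference can be absorbed into the ``far'' side of the chain without disturbing the conditional independence --- this is where the proof cannot simply mimic the stabilizer arguments used elsewhere in the paper. Once that Markov identity is established, the entropic bookkeeping at either end is standard and parallels \cite{bravyi2010tradeoffs}.
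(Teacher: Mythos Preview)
The paper gives no independent proof here; it simply cites Lemma~2 of \cite{bravyi2010tradeoffs}, and your proposal is precisely the entropic route taken there. The Markov step you flag as the obstacle is exactly the content of BPT's disentangling lemma: the factorization $\Pi_\cC = \Pi_{\text{in}}\Pi_{\text{out}}$ together with correctability is what they use to produce a unitary on $U_1\cup\bdry_+U_1$ decoupling $U_1$ from the remainder of $VR$, which is the operational form of your conditional-independence statement.

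One caution on the final bookkeeping: the middle term $I(\bdry_+ U_1 : R \mid U_2)$ in your sketched decomposition is \emph{not} controlled by $I(\bdry_+ U_1:R)=0$ alone --- by the chain rule and $I(U_2:R)=0$ it equals $I(\bdry_+ U_1 \cup U_2:R)$, and nothing in the hypotheses forces $\bdry_+ U_1\cup U_2$ to be correctable. The cleaner finish is to use the Markov identity with the full complement $W:=V\setminus(U_1\cup\bdry_+U_1)$ in place of $U_2$: since $|\psi\rangle_{VR}$ is pure on $U_1\cup\bdry_+U_1\cup W\cup R$, the identity $I(U_1:WR\mid\bdry_+U_1)_\psi=0$ is equivalent to $I(U_1:WR)_\psi=0$, i.e.\ $\rho_{U_1WR}=\rho_{U_1}\otimes\rho_{WR}$. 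Tracing out $W\setminus U_2$ gives $\rho_{U_1U_2R}=\rho_{U_1}\otimes\rho_{U_2R}$, whence $I(U_1U_2:R)=I(U_2:R)=0$ directly, bypassing the problematic term.
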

Observe the qualitative difference in this case from that of stabilizer codes: we also require that $\bdry_{+}U_1$ be correctable for the union to be correctable.
This difference will manifest in the bounds on code properties by making the bounds weaker in the case of commuting projector codes.

\begin{lemma}[Generalized Expansion Lemma]
	\label{lem:expansionnonstab}
	Let $U$ be a correctable set of qubits.
	Consider any region $B \subseteq V$ and $C \subseteq \comp{U}$, such that $\bdry U \subseteq BC$ and $BC$ is correctable.
	Then $U \cup C$ is correctable.
\end{lemma}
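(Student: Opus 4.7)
The plan is to reduce the Generalized Expansion Lemma to a combination of the Generalized Union Lemma (Lemma \ref{lem:unionnonstab}) and the information-theoretic characterization of correctability for commuting projector codes. My first step is to analyze the boundary condition carefully. Since $C \subseteq \comp{U}$, we have $\bdry_{-} U \subseteq B$ (because $\bdry_{-}U \subseteq U$ cannot meet $C$) and $\bdry_{+} U \subseteq B_{\mathrm{out}} \cup C$, where $B_{\mathrm{out}} := B \cap \comp{U}$. This splitting is the key structural observation.

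Next I would introduce the exterior region $E := V \setminus (U \cup B \cup C)$ and note that no projector $\Pi_a$ acts jointly on $U$ and $E$: any such projector would have to contain a qubit in $\bdry_{+}U$, and by hypothesis $\bdry_{+}U \subseteq B \cup C$ lies outside $E$. Thus $U$ and $E$ are decoupled. This decoupling is what will allow us to paste together the two given correctable regions without paying for $B_{\mathrm{out}}$.

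To finish, I would use the characterization that a region $R$ is correctable iff the reference system purifying the maximally mixed code state is decoupled from $R$, i.e.\ $I(R : \mathrm{Ref}) = 0$. The hypotheses translate to $I(U : \mathrm{Ref}) = 0$ and $I(BC : \mathrm{Ref}) = 0$, and the goal is $I(UC : \mathrm{Ref}) = 0$. The decoupling of $U$ from $E$ implies that the maximally mixed code state admits a Markov-chain structure across the cut, and so strong subadditivity together with the two zero mutual information hypotheses should collapse to give the desired conclusion. Equivalently, one can argue via a Petz-style recovery map: first apply the recovery map for $BC$ from $\comp{BC}$ (which exists by correctability of $BC$), then apply the recovery for $U$ from $\comp{U}$; the decoupling of $U$ from $E$ ensures the second map does not spoil the first, and the composite is a recovery for $UC$ acting on $\comp{UC}$.

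The main obstacle is the passage from the stabilizer intuition (where one just invokes the Cleaning Lemma and tracks Pauli supports) to a genuine proof for commuting projector codes, where one lacks such a clean operator-theoretic cleaning statement. Making the Markov/Petz argument rigorous is the delicate step: one must show that the decoupling of $U$ and $E$ at the level of projectors implies the correct conditional independence of the global code state, so that the two correctability conditions can be composed without interference. Once this conditional independence is in hand, the conclusion follows by a short entropy calculation, mirroring the argument behind Lemma 2 and Corollary 1 of \cite{bravyi2010tradeoffs}.
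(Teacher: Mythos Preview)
The paper does not give its own proof of this lemma; it simply defers to Corollary~1 of Bravyi--Poulin--Terhal~\cite{bravyi2010tradeoffs}. So the relevant comparison is between your sketch and the BPT argument.

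Your boundary analysis and the observation that $U$ and $E := V\setminus(U\cup B\cup C)$ are decoupled are correct, and this is exactly the geometric input BPT uses. Your entropic formulation---reduce to showing $I(UC:\mathrm{Ref})=0$ from $I(U:\mathrm{Ref})=0$ and $I(BC:\mathrm{Ref})=0$---is also the right framework. However, your alternative ``Petz-style'' argument has a concrete flaw: a recovery map for $UC$ must take as input the state on $\comp{UC}=E\cup B_{\mathrm{out}}$, but the recovery map for $BC$ requires the state on $\comp{BC}$, which contains qubits of $U$ that have already been traced out. Neither of the two given recovery maps can be applied first to $\comp{UC}$, so the composition you describe does not define a map on the correct domain.

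The gap you yourself flag---passing from projector-level decoupling of $U$ and $E$ to a conditional-independence statement about the code state---is real and is precisely what BPT's machinery supplies. Their argument is not a generic Markov/Petz statement but rests on the \emph{disentangling lemma} (Lemma~1 of~\cite{bravyi2010tradeoffs}): correctability of a region $M$ yields a unitary on $M$ sending every code state to a fixed state on $M$ tensored with a state on $\comp{M}$. Applying this to $U$ and to $BC$ produces entropy identities for the maximally mixed code state which, combined with strong subadditivity and the $U$--$E$ decoupling, give $I(UC:\mathrm{Ref})=0$. Your sketch is therefore pointed in the right direction but is missing this specific mechanism; without it the ``Markov-chain structure'' you invoke remains an assertion rather than a consequence of the hypotheses.
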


We are now ready to prove a bound on the distance.

\begin{theorem}
	Let $\cC$ be a low-density commuting projector code on $n$ qubits.
	Let $G = G(\cC)$ be the corresponding connectivity graph with bounded degree $\delta$.
	If $G$ has treewidth $\tw(G)$, then $d \leq 8 \delta^2 \tw(G)$. 
\end{theorem}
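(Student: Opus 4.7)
The plan is to mimic the proof of Theorem~\ref{lem:twdist}, swapping in Lemmas~\ref{lem:unionnonstab} and~\ref{lem:expansionnonstab} wherever the stabilizer Union and Expansion lemmas were invoked. I would fix a tree decomposition $\cT$ of $G$ of width $\tw(G)$, and assume for contradiction that $d > 8\delta^2 \tw(G)$. Under this assumption every subset of at most $8\delta^2 \tw(G)-1$ qubits is correctable, and in particular for any bag $Q(i)$ of $\cT$ the three sets $Q(i)$, $\bdry_+ Q(i)$, and $\bdry_+\bdry_+ Q(i)$, of sizes at most $(\tw(G)+1)$, $\delta(\tw(G)+1)$, and $\delta^2(\tw(G)+1)$ respectively, are all correctable. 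These correctability facts, rather than just correctability of single bags as in the stabilizer proof, are what we will need.

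For the inductive step I would select an internal node $p$ all of whose children $\{j_1,\dots,j_t\}$ are leaves, and set $\cA = \bigcup_i Q(j_i) \setminus \bigcup_i \bdry_+ Q(j_i)$ exactly as in the proof of Theorem~\ref{lem:twdist}. The pieces of $\cA$ are pairwise decoupled, and I would combine them one at a time using the Generalized Union Lemma, verifying at each intermediate step that the outer boundary of the partial union is correctable; this is enforced by the fact that the running boundary is contained in $\bigcup_i(\bdry_+ Q(j_i) \cup \bdry_+\bdry_+ Q(j_i))$, each summand of which is correctable by the size estimate above. The tree-decomposition analysis of Theorem~\ref{lem:twdist} then shows $\bdry_+ \cA \subseteq \cP_{\ext} := Q(p) \cup \bdry_+ Q(p)$, and $|\cP_{\ext}| \leq (1+\delta)(\tw(G)+1)$ is small enough to make $\cP_{\ext}$ itself correctable. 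Invoking the Generalized Expansion Lemma with $U = \cA$, $C = \cP_{\ext}\setminus\cA$, and $B$ a correctable neighborhood of the remaining part of $\bdry\cA$ (the inner boundary $\bdry_-\cA$ already lives inside $\cA$ and has outer neighbors inside $\cP_{\ext}$), I would extend $\cA$ to a correctable region containing $\bigcup_i Q(j_i)\cup Q(p)$. Collapsing $p$ together with its leaf children into a single amalgamated node yields a valid tree decomposition of strictly smaller depth, whose remaining bags still have width at most $\tw(G)$. Iterating up the tree, one finally finds the entire code to be correctable, contradicting the existence of a nontrivial logical operator.

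The main obstacle I anticipate is the pairwise, boundary-aware nature of Lemma~\ref{lem:unionnonstab}: unlike its stabilizer counterpart, it combines only two correctable regions at a time and additionally demands that the outer boundary of one of them be correctable. Careful bookkeeping of boundaries and boundaries-of-boundaries as they accumulate across the pairwise merges is precisely what inflates the coefficient from $\delta$ in Theorem~\ref{lem:twdist} to $\delta^2$ here and absorbs a constant like $8$ to cover the various additive contributions. A secondary subtlety is discharging the hypotheses of Lemma~\ref{lem:expansionnonstab}, namely exhibiting an explicit correctable $BC$ that contains the full boundary $\bdry\cA$ and a $C \subseteq \comp{U}$ whose union with $U$ covers the desired enlargement; these can be extracted from $\cP_{\ext}$ and its immediate neighborhood, but verifying the containments cleanly is where most of the care in the formal proof will lie.
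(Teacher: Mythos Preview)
Your overall strategy matches the paper's: assume $d > 8\delta^2\tw(G)$, form the same $\cA = \bigcup_i Q(j_i) \setminus \bigcup_i \bdry_+ Q(j_i)$ and $\cP_{\ext} = Q(p) \cup \bdry_+ Q(p)$, show $\cA$ is correctable via Lemma~\ref{lem:unionnonstab}, enlarge to $\cA \cup \cP_{\ext}$ via Lemma~\ref{lem:expansionnonstab}, and iterate up the tree. But the execution of the two key steps contains a real gap.

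In the Union step you propose to track the outer boundary of the \emph{partial} union and argue it is correctable because it lies in $\bigcup_i(\bdry_+ Q(j_i) \cup \bdry_+\bdry_+ Q(j_i))$, ``each summand of which is correctable.'' That inference is invalid: a union of correctable sets need not be correctable, and the containing set you wrote down has size of order $t\cdot\delta^2\tw(G)$ with $t$ the number of leaves, which is not controlled by $d$. The paper sidesteps this completely. Lemma~\ref{lem:unionnonstab} only requires $\bdry_+ U_1$ to be correctable for \emph{one} of the two pieces, so always take $U_1$ to be the fresh leaf-piece $\cA_i$ and $U_2$ the union assembled so far. Because the $\cA_j$ are pairwise decoupled, $\bdry_+ \cA_i \subseteq \bdry_+ \cA \subseteq \cP_{\ext}$, a \emph{single} set of size at most $(1+\delta)(\tw(G)+1) \leq 4\delta\,\tw(G) < d$. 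Thus $\bdry_+ \cA_i$ is correctable for every $i$, and the pairwise merges go through with no running-boundary bookkeeping at all.

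For the Expansion step your $B$ is left vague. The paper makes an explicit choice: take $BC = \bdry_-\cA \cup \cP_{\ext}$, which contains $\bdry\cA$ since $\bdry_+\cA \subseteq \cP_{\ext}$. Correctability of $BC$ is again by a single size estimate: $\bdry_-\cA \subseteq \bdry_+\bdry_+\cA \subseteq \cP_{\ext}\cup\bdry_+\cP_{\ext}$, so $|BC| \leq (1+\delta)\,|\cP_{\ext}| \leq 8\delta^2\tw(G) < d$. This second factor of $(1+\delta)$ is where the $\delta^2$ in the statement actually originates, not from any accumulation of leaf boundaries during the Union step.
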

\begin{proof}
	
	For the sake of contradiction, assume $d > 8 \delta ^2 \tw(G)$.
	
	Consider a set of leaves $\{j_1,...,j_t\}$ sharing the same parent node $p$, and define $\cA_i = Q(j_i) \setminus \cup_k \bdry_+ Q(j_k)$, and $\cA = \union_i \cA_i$. 
	Each $\cA_i$ is then what remains of $Q(j_i)$ after we remove the qubits connected to another leaf.
	
	As noted before in the proof of Theorem \ref{lem:twdist}, for any $\cA_i$, we have $\bdry_+ \cA_i \subseteq \union_i \bdry_+ Q(j_i) \subseteq \cP_{\ext}$, and we remind the reader that $\cP_{\ext} = Q(p) \cup \bdry_+ Q(p)$.
	From the definition of treewidth, $|Q(p)| \leq \tw(G) + 1$, and $\bdry_+ Q(p) \leq  \delta |Q(p)|$.
	Then $|\cP_{\ext}| \leq \tw(G) + 1 + \delta (\tw(G) + 1)$.
	This implies $|\cP_{\ext}| \leq 4 \delta \tw(G)$.
	Since $d > 4 \delta \tw(G) \geq |\cP_{\ext}| \geq \bdry_+ \cA_i$, and the $\cA_i$ are decoupled, then $\cA$ is correctable by Lemma \ref{lem:unionnonstab}.
	
	We will now want to prove that $\cA \union \cP_{\ext}$ is correctable using the Generalized Expansion Lemma \ref{lem:expansionnonstab}.
	
	First, note that since $\bdry_+ \cA \subseteq \cP_{\ext}$, then $\bdry \cA = \bdry_- \cA \union \bdry_+ \cA \subseteq  \bdry_- \cA \union \cP_{\ext}$. By the Generalized Expansion Lemma, it only remains to prove that $\bdry_- \cA \union \cP_{\ext}$ is correctable.
	It was already noted that $\bdry_+ \cA \subseteq \cup_i \bdry_+ Q(j_i)$, and trivially $\bdry_- \cA \subseteq \bdry_+ \bdry_+ \cA$.
	We can then easily bound the size of this region:  $|\bdry_- \cA \union \cP_{\ext}| \leq |\bdry_+ \bdry_+ \cA \union \cP_{\ext}| \leq |\bdry_+ \cP_{\ext} \union \cP_{\ext}| \leq (\delta + 1)|\cP_{\ext}| \leq (\delta + 1) \cdot 4 \delta \tw(G) \leq 8 \delta^2 \tw(G) < d$.
	As $\bdry \cA \subseteq \bdry_- \cA \union \cP_{\ext}$, and $\bdry_- \cA \union \cP_{\ext}$ is correctable, then $\cA \union \cP_{\ext}$ is correctable by Lemma \ref{lem:expansionnonstab}.
	Just as for Theorem \ref{lem:twdist}, $\cup_i Q(j_i) \cup Q(p) \subset \cA \cup \cP_{\ext}$.
	
	We can then repeat the argument as in Theorem \ref{lem:twdist} and prove the entire code is correctable: a contradiction if the code is to encode at least one logical qubit
\end{proof}

We now turn to the case of the bound $k$.

\begin{theorem}
	Let $\cC$ be a commuting-projector code on $n$ qubits.
	Let $G = G(\cC)$ be the corresponding connectivity graph of bounded degree $\delta$.
	Let $S_d$ be defined as in Definition \ref{def:cS}, then we have the bound $k \leq \cS_{d/\delta} \circ \cS_{d/\delta}(n)$.	 
\end{theorem}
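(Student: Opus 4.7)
The plan is to mirror the proof of Lemma \ref{lem:sepbpt}, but using the Generalized Union Lemma \ref{lem:unionnonstab} in place of the stabilizer Union Lemma. The cost of the extra boundary-correctability hypothesis in \ref{lem:unionnonstab} is absorbed by shrinking the recursion threshold from $d$ to $d/\delta$, where $\delta$ is the maximum degree; this guarantees that every piece produced by the recursive separation has an external boundary of size strictly less than $d$, and hence is itself correctable.

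First, I would apply Lemma \ref{lem:recpartition} to $G$ with threshold $d/\delta$ (instead of $d$) to obtain a partition $V = A \sqcup \comp{A}$, where $A = \bigcup_i V_{\bullet,i}$ is a union of pairwise decoupled subsets with $|V_{\bullet,i}| < d/\delta$, and $|\comp{A}| \leq \cS_{d/\delta}(n)$. Each piece has $|V_{\bullet,i}| < d$, so is correctable; moreover, because the connectivity graph has maximum degree $\delta$, $|\bdry_+ V_{\bullet,i}| \leq \delta \cdot |V_{\bullet,i}| < d$, so the boundary of each piece is also correctable.

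Next I would show $A$ is correctable by iterating the Generalized Union Lemma in a one-sided manner: order the pieces $V_{\bullet,1}, \ldots, V_{\bullet,t}$ and at step $j$ apply Lemma \ref{lem:unionnonstab} with $U_1 = V_{\bullet,j}$ (the \emph{single} new piece, whose boundary is correctable as above) and $U_2 = V_{\bullet,1}\cup\cdots\cup V_{\bullet,j-1}$ (the running union, already correctable by induction). The decoupling hypothesis holds because the pieces are pairwise decoupled by construction. This is the main subtlety: a symmetric pairwise merger would require $\bdry_+(V_{\bullet,1}\cup\cdots\cup V_{\bullet,j-1})$ to be correctable, which in general exceeds $d$; always putting the small piece in the $U_1$ slot avoids this. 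The same procedure applied to $\comp{A}$ (which is itself $s$-separable as a subgraph of $G$) produces a decoupled correctable region $B$ inside $\comp{A}$ with $|\comp{A}\setminus B| \leq \cS_{d/\delta}(|\comp{A}|) \leq \cS_{d/\delta}\circ\cS_{d/\delta}(n)$.

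Finally, the tripartition $V = A \sqcup B \sqcup C$ with $C = \comp{A}\setminus B$ has $A$ and $B$ both correctable, and we invoke the Bravyi--Poulin--Terhal bound $k\leq |C|$ in the generality originally proved in \cite{bravyi2010tradeoffs} (where it is established for arbitrary commuting-projector codes via an entropy argument; Lemma \ref{lem:bptabc} is the stabilizer specialization). This yields $k \leq \cS_{d/\delta}\circ\cS_{d/\delta}(n)$.

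The main obstacle is the asymmetric hypothesis in Lemma \ref{lem:unionnonstab}; the key idea of the proof is that this asymmetry is harmless provided each atomic piece we add has a correctable boundary, which is exactly what recursing down to scale $d/\delta$ buys us.
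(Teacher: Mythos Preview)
Your proposal is correct and follows essentially the same approach as the paper: recurse down to threshold $d/\delta$ so that each piece $V_{\bullet}$ satisfies $|\bdry_+ V_{\bullet}|\le \delta|V_{\bullet}|<d$, then invoke the commuting-projector version of the $k\le |C|$ bound from \cite{bravyi2010tradeoffs}. Your one-sided iterative application of Lemma~\ref{lem:unionnonstab} (always placing the new small piece in the $U_1$ slot) is more explicit than the paper's one-line ``which shows that $A$ is correctable,'' but this is exactly the argument needed to make that line rigorous.
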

\begin{proof}
	We remind the reader that Lemma \ref{lem:bptabc} also holds for non stabilizer codes \cite{bravyi2010tradeoffs}.
	However due to the restriction from Lemma \ref{lem:unionnonstab} that $\bdry_+ M_1$ has to be correctable, we will have to adapt our use of the recursive separation: instead of creating regions of size $d$, we will stop at $d/\delta$.
	From \ref{lem:recpartition}, we can find $A$ such that $A$ is the union of disjoint subsets $\{V_\bullet\}$ of size less than $d/\delta$, and $|\comp{A}| \leq \cS_{d/\delta}(n)$.
	For every $V_\bullet$, $|\bdry_+ V_\bullet| \leq \delta |V_\bullet| < d$, which shows that $A$ is correctable.
	
	By applying the same argument as in Lemma $\ref{lem:sepbpt}$, we find $k \leq \cS_{d/\delta} \circ \cS_{d/\delta}(n)$.
\end{proof}
\section[]{Conditions for separators on $\bbH^{D}$}
\label{subsec:clarification-kisfaludi}
In this section, we wish to clarify when we can apply Kisfaludi-Bak's results.
Their statement is not in terms of $(\rho,w)$-local graphs, but instead in terms of a certain class called noisy uniform ball graphs (NUBG).
This class is defined as follows.
\begin{definition}[Noisy uniform ball graphs (NUBG)]
	\label{def:nubg}
	Let $(M,d)$ be a metric space.
	Let $\sigma > 0$ and $\nu \geq 1$ be fixed constants.
	A graph $G = (V,E) \in \nubg_{\bbH^D}(\sigma,\nu)$ if there is a function $\eta: V \to M$ such that for all pairs $v,w \in V$, we have
	\begin{enumerate}
		\item $d(\eta(v), \eta(w)) < 2\sigma \implies (v,w) \in E$.
		\item $d(\eta(v), \eta(w)) \geq 2\nu\sigma \implies (v,w) \not\in E$.
	\end{enumerate}
	Pairs of vertices $v,w$ where $d(\eta(v),\eta(w)) \in [2\sigma, 2\sigma\nu]$ can either be connected or disconnected.
\end{definition}

Definition \ref{def:nubg} requires all vertices close enough to another vertex to be connected.
Note that a $(\rho,w)$-local graph $G$ can be extended to a $(w/2, 2)$-NUBG graph by adding edges between any two vertices that are a distance $w/2$ away, written $G_{w/2} = (V_{w/2}, E_{w/2})$.
The vertex set $V_{w/2} = V$ and the edges in the modified graph $G_{w/2}$ obey the condition
\begin{align*}
    (u,v) \in E_{w/2} \Leftrightarrow d(\eta(u), \eta(v)) \leq w~.
\end{align*}
Since the density $\rho$ is constant, this modification adds at most a constant number of edges.
Note then that a separator $S$ for the NUBG graph $G_{w/2}$ is also a separator for $G$.

For completeness, we repeat the definition of Theorem $2$ from \cite{kisfaludi2020hyperbolic} using the language of NUBG graphs as in the original paper.
\begin{lemma}
    \label{thm:kisbaknubg}
       Let $D \geq 2$, $\sigma > 0$ and $\nu \geq 1$ be constants, and let $G$ be $\nubg_{\bbH^D}(\sigma,\nu)$ in $\bbH^D$ .
       Then,
       \begin{enumerate}[label=(\roman*)]
          \item if $D=2$, then $ s_G(r) = O(\log(r))$, and
          \item if $D\geq 3$, then $ s_G(r) = O(r^{(D-2)/(D-1)})$.
       \end{enumerate}
\end{lemma}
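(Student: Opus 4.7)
This lemma is essentially a restatement in NUBG terminology of Theorem~2 from Kisfaludi-Bak~\cite{kisfaludi2020hyperbolic}, so the cleanest ``proof'' is a direct citation. Nevertheless, it is worth sketching the geometric approach underlying their argument, since the intuition motivates why these particular separator sizes are the right ones.

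The plan is to exploit the exponential volume growth of hyperbolic space $\bbH^D$. First I would observe that the NUBG conditions force bounded vertex density: any two vertices with $d(\eta(v),\eta(w)) < 2\sigma$ are forced to be connected, so a standard packing argument bounds the number of vertices contained in any hyperbolic ball of radius $2\sigma\nu$ by an absolute constant depending only on $\sigma$, $\nu$, and $D$. Combined with the fact that a ball of radius $R$ in $\bbH^D$ has volume $\Theta(e^{(D-1)R})$, this forces all $n$ vertices of $G$ to lie within a ball of radius $R = O(\log(n)/(D-1))$ around some well-chosen center.

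Next I would construct the separator as the set of vertices lying near a well-chosen lower-dimensional submanifold of the ambient ball. For $D=2$, the strategy is to find a geodesic that bisects the point set in the $\frac{D}{D+1}$-balance sense; its length inside the enclosing ball is $O(R)=O(\log n)$, and the bounded-density property then bounds the number of vertices within a constant-width tube around it by the same quantity. For $D\geq 3$, the plan is instead to pick a centerpoint and consider concentric annular shells of constant width around it; an averaging argument over a growing family of shells produces one containing $O(n^{(D-2)/(D-1)})$ vertices, a bound arising from the ratio between the sphere ``area'' $\sim e^{(D-2)r}$ and the enclosed volume $\sim e^{(D-1)r}$ in $\bbH^D$.

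The main obstacle, and the technical heart of the Kisfaludi-Bak argument, is enforcing the $\frac{D}{D+1}$-balance condition: the chosen submanifold must simultaneously yield a small separator \emph{and} split the remaining vertices into two pieces neither of which is too large. This requires a centerpoint-style theorem adapted to hyperbolic geometry rather than Euclidean. A secondary subtlety is that the proof must be robust to the ``noise'' built into the NUBG definition, since vertices whose pairwise distance lies in $[2\sigma, 2\sigma\nu]$ may or may not be connected by an edge; the separator construction must therefore absorb a constant-width neighborhood of the submanifold, which is why the parameter $\nu$ appears only inside the constant hidden by the $O(\cdot)$ notation rather than in the exponent.
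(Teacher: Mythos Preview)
Your proposal is correct in the essential point: the paper does not prove this lemma at all, but simply cites Theorem~2 of Kisfaludi-Bak~\cite{kisfaludi2020hyperbolic} verbatim. Your opening sentence already captures exactly what the paper does.

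Your additional geometric sketch goes beyond anything in the paper, so there is nothing to compare it against. One technical wrinkle worth flagging in that sketch: in $\bbH^D$ a metric sphere of radius $r$ has $(D{-}1)$-dimensional area proportional to $\sinh^{D-1}(r) \sim e^{(D-1)r}$, not $e^{(D-2)r}$, so the ``annular shell / averaging'' picture as written does not produce the exponent $(D-2)/(D-1)$. The correct heuristic for the $D \geq 3$ case uses a totally geodesic \emph{hyperplane} (a copy of $\bbH^{D-1}$) through a centerpoint: its intersection with the enclosing ball of radius $R \sim \tfrac{1}{D-1}\log n$ is itself a ball of radius at most $R$ in $\bbH^{D-1}$, hence has $(D{-}1)$-volume $\sim e^{(D-2)R} \sim n^{(D-2)/(D-1)}$, and the bounded-density property then bounds the number of vertices in a constant-width slab around it by the same quantity. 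This is closer to what Kisfaludi-Bak actually does. But since the paper itself offers only the citation, this correction concerns your optional exposition rather than the proof proper.
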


\end{document}